\documentclass[a4paper,USenglish,cleveref, autoref, thm-restate]{lipics-v2021}

\hideLIPIcs  

\title{Geometric $(1+\eps)$-Spanners with Few Crossings} 

\author{Kelvin Luu}{California State University Northridge, Department of Mathematics, Los Angeles, CA, USA}{kelvin.luu.871@my.csun.edu}{https://orcid.org/0009-0001-8182-1278}{}

\author{Csaba D. T\'oth}{California State University Northridge, Department of Mathematics, Los Angeles, CA, USA 
\and Tufts University, Department of Computer Science, Medford, MA, USA}{Csaba.toth@csun.edu}{https://orcid.org/0000-0002-8769-3190}{}

\authorrunning{K. Luu and C. D. T\'oth} 

\Copyright{Kelvin Luu and Csaba D. T\'oth} 

\usepackage[colorinlistoftodos,prependcaption,textsize=small]{todonotes}
\newcommand{\eps}{\varepsilon}
\newcommand{\conv}{\mathrm{conv}}
\newcommand{\diam}{\mathrm{diam}}
\newcommand{\area}{\mathrm{area}}
\newcommand{\width}{\mathrm{width}}
\renewcommand{\angle}{\sphericalangle}

\renewcommand{\angle}{\sphericalangle}

\ccsdesc[500]{Theory of computation~Sparsification and spanners}
\ccsdesc[500]{Theory of computation~Routing and network design problems}
\ccsdesc[500]{Mathematics of computing~Graphs and surfaces}

\keywords{crossings, spanners, empty region property} 

\funding{Research supported in part by the NSF award DMS-2154347.}

\nolinenumbers 

\EventEditors{John Q. Open and Joan R. Access}
\EventNoEds{2}
\EventLongTitle{42nd Conference on Very Important Topics (CVIT 2016)}
\EventShortTitle{CVIT 2016}
\EventAcronym{CVIT}
\EventYear{2016}
\EventDate{December 24--27, 2016}
\EventLocation{Little Whinging, United Kingdom}
\EventLogo{}
\SeriesVolume{42}
\ArticleNo{23}

\begin{document}

\maketitle

\begin{abstract}
For $n$ points in the plane and an $\eps>0$, we construct a $(1+\eps)$-spanner with $O(n/\eps)$ edges in which every edge has $\tilde{O}(1/\eps^3)$ crossings, hence the total number of crossings is $\tilde{O}(n/\eps^4)$, furthermore the ratio between the lengths of any two crossing edges is $O(1/\eps^2)$. 
Our spanner construction substantially improves on the previous upper bound for the number of crossings in a $(1+\eps)$-spanner, and it is the first spanner construction that ensures $O(1)$ crossings per edge for any constant $\eps>0$. In contrast, we construct: $n$ points in the plane for which every $(1+\eps)$-spanner has $\Omega(n/\eps^3)$ crossings, $n$ points for which every $(1+\eps)$-spanner has an edge with $\Omega(1/\eps^{5/2})$ crossings, and 4 points for which every $(1+\eps)$-spanner contains two crossing edges where one is $\Omega(1/\eps)$ times longer than the other. 
\end{abstract}

\section{Introduction}
\label{sec:intro}

Given a set $S$ of $n$ points in $\mathbb{R}^d$ and a parameter $t\geq 1$, a \emph{$t$-spanner} is a geometric graph $G=(S,E)$ that contains, for all $a,b\in S$, an $ab$-path of length at most $t\, |ab|$ (where $|ab|$ denotes the Euclidean length of the line segment $ab$). 
The maximum ratio between the shortest-path distance between $a$ and $b$ in $G$ and the Euclidean distance between $a$ and $b$, over all point pairs $a,b\in S$, is called the \emph{stretch} of $G$. 
%
In the plane, the Delaunay triangulation is a noncrossing $1.998$-spanner for every finite set in the plane~\cite{Xia13}; and there is a noncrossing $1.88$-spanner for points in convex position~\cite{BiniazAMSBC16}. However, edge crossings are unavoidable for stretch $t<1.43$ and some point sets in the plane (already in convex position)~\cite{DumitrescuG16}.

Clarkson~\cite{Clarkson87} and independently Keil~\cite{Keil88} showed that every set of $n$ points in the plane admits a $t$-spanner with $O(n)$ edges for any constant $t>1$; this result generalizes to Euclidean $d$-space for constant $d\in \mathbb{N}$~\cite{althofer1993sparse,RS91}. Recent research focused on optimizing the trade-offs between the stretch parameter $t$ and other optimization criteria, such as the number of edges (\emph{sparsity}), total weight (\emph{lightness}), and hop diameter, as $t\to 1$. In this regime, it is convenient to denote the stretch by $t=1+\eps$ for an arbitrarily small $\eps>0$.
For every $\eps>0$, every set of $n$ points in the plane admits a $(1+\eps)$-spanner with $O(n/\eps)$ edges, and this bound is the best possible~\cite{althofer1993sparse,Clarkson87,Keil88}. Many different constructions are known for $(1+\eps)$-spanners: greedy spanners~\cite{althofer1993sparse}, approximate greedy spanners~\cite{DasN97,GudmundssonLN02}, gap-greedy spanners~\cite{AryaS97}, $\Theta$-graphs and Yao-graphs~\cite{Clarkson87,Keil88}, ordered $\Theta$-graphs~\cite{BoseGM04}, as well as spanners based on well-separated pair decomposition (WSPD)~\cite{AryaDMSS95,CallahanK95}, or locally-sensitive ordering (LSO) \cite{ChanHJ20,GaoH24} and more~\cite{ElkinS15,Toth24}. For a comprehensive overview, refer to the book by Narashimhan and Smid~\cite{NS07}; see also \cite{Smid25}. 

Experiments indicate that greedy spanners often have fewer crossings than other spanner constructions~\cite{FarshiG09}. Motivated by these results, Eppstein and Khodabandeh~\cite{EppsteinK21} studied the crossing pattern of greedy spanners in the plane. They proved that in the greedy $(1+\eps)$-spanner, (i) every edge $e$ crosses $O(1/\eps^8)$ edges that are longer than $e$;
(ii) the total number of crossings is $O(n/\eps^9)$; (iii) however, for every sufficiently small $\eps>0$, there exist $n$-element point sets for which an edge of the greedy $(1+\eps)$-spanner crosses $\Omega(n)$ other edges. 
Their result (ii) immediately implies that the greedy $(1+\eps)$-spanner admits a balanced separator of size $O(\sqrt{n/\eps^9})$; previously, $O(\sqrt{n})$-size balanced separators were known for spanners based on semi-separated pair decompositions (SSPD)~\cite{AbamH12}. 
In higher dimensions, Le and Than~\cite{LeT24} recently showed that for every constant $t>1$, a greedy $t$-spanner for $n$ points in $\mathbb{R}^d$ admits separators of size $O(n^{1-1/d})$. 

We are interested in the minimum number of crossings, and the minimum of the maximum number of crossings per edge in geometric $(1+\eps)$-spanners for $n$ points in the plane. 
This question was also raised recently by Bhore et al.~\cite{BhoreKMTWW26}.

\smallskip\noindent\textbf{Our results.} Our main result is a new construction for geometric spanners: Given $n$ points in the plane and $\eps>0$, we construct a $(1+\eps)$-spanner with $O(n/\eps)$ edges such that every edge has $O(\eps^{-3}\log \frac{1}{\eps})$ crossings; in particular the total number of crossings is $O(\eps^{-4}\log \frac{1}{\eps}\cdot n)$. As a corollary, such a spanner admits a separator of size $O(\eps^{-2} \sqrt{\log(1/\eps)\cdot n})$. 
We show (\Cref{lem:ratio}) that the ratio between the lengths of any two crossing edges in our $(1+\eps)$-spanner is bounded by $O(1/\eps^2)$.
The main technical tool is a novel \emph{empty region} property: In our construction, every edge $ab$ is associated with a convex region $R_{ab}$ such that both $a$ and $b$ lie on the boundary $\partial R_{ab}$ and $\mathrm{int}(R_{ab})\cap S=\emptyset$. The region $R_{ab}$ is the convex hull of two congruent regular $\Theta(k)$-gons of diameter $\Theta(\eps\, |ab|)$. Intuitively, the empty region forms a ``buffer'' around the edge $ab$ that prevents shorter edges from crossing $ab$. Even though many geometric graphs have an empty region property (e.g., Delaunay graphs, Gabriel Graphs, $\Theta$-graphs, etc.), none of them is known to have few crossings \emph{and} spanning ratio close to 1.

We complement our spanner construction with lower bounds. We show that for every $\eps>0$, there exists a set $S_1$ of $n\geq\Omega(1/\eps)$ points such that every $(1+\eps)$-spanner for $S_1$ has $\Omega(\eps^{-3}n)$ crossings (\Cref{pro:global}); a set $S_2$ of $n\geq \Omega(\eps^{-3/2})$ points such that every $(1+\eps)$-spanner for $S_2$ contains an edge with $\Omega(\eps^{-5/2})$ crossings (\Cref{pro:local}); and a set $S_3$ of 4 points such that every $(1+\eps)$-spanner for $S_3$ contains two crossing edges, one of which is $\Omega(\eps^{-1})$ times longer than the other (\Cref{pro:ratio}). 

\smallskip\noindent\textbf{Empty region property.}
Geometric proximity graphs are intimately related to empty regions: the first known $O(1)$-spanners in the plane were generalized Delaunay graphs~\cite{Chew86,Chew89,RenssenSSW25}, where edges are defined as chords of a homothet of a convex body (e.g., square, triangle, or disk) whose interior is empty (i.e., disjoint from $S$). The spanning ratio of classical Delaunay graphs (defined with empty Euclidean disks) is known to be between $\pi/2$ and 1.998~\cite{Xia13,XiaZ11}. These spanners are plane graphs, but they cannot achieve $1+\eps$ stretch for all $\eps>0$. In $\Theta$-graphs (resp., Yao-graphs), each edge is associated with an empty region, which is an isosceles triangle (resp., sector) with apex angle $\Theta(\eps)$. Unfortunately, an edge associated with the empty region can cross arbitrarily many shorter edges in the neighborhoods of the apex. In our construction, it is essential that the interior angles of the empty region $R_{ab}$ are close to $\pi$ (i.e., nearly flat angles).    

Empty region graphs defined by ``thin'' regions were extensively studied: For example, in the $\beta$-skeleton~\cite{KIRKPATRICK1985217}, an edge $ab$ corresponds to an empty region bounded by two congruent circular arcs with common endpoints $a$ and $b$. A more general family of \emph{empty region} graphs were defined in~\cite{CardinalCL09}: each edge $ab$ corresponds to an empty region, which is homothetic to a given shape with two distinguished points, called anchors, such that the homothety maps the anchors to $a$ and $b$. However, the spanning ratio of $\beta$-skeletons and empty region graphs is known to be unbounded~\cite{BoseDEK06,CardinalCL09,WangLWS03}.

Higher order proximity graphs relax the empty region condition: edges are defined by regions that contain up to $k$ points; albeit these graphs are not $(1+\eps)$-spanners for small $\eps>0$. {\'{A}}brego et al.~\cite{AbregoMFFHSS11} showed that for $n$ points in the plane, $k$-Delaunay graphs and $k$-Gabriel graphs have $O(k^2n)$ crossings; $k$-nearest neighbor graphs and $k$-relative neighborhood graphs $O(k^3n)$ crossings; and these bounds are the best possible.

\section{Lower Bounds}
\label{sec:LB}

We start with lower bounds for the total number of crossings, for the number of crossings per edge, and for the ratio between lengths of crossing edges in a Euclidean $(1+\eps)$-spanner. Let $S$ be a set of points in the plane. For a point pair $a,b\in S$, let $\mathcal{E}_{ab}$ denote the ellipse with foci $a$ and $b$, and major axis of length $(1+\eps)\, |ab|$. The ellipse $\mathcal{E}_{ab}$ is the set of points $c\in \mathbb{R}^2$ such that $|ac|+|cb|\leq (1+\eps)|ab|$. In particular, if $\mathcal{E}_{ab}\cap S=\{a,b\}$, then $ab$ must be an edge in every $(1+\eps)$-spanner for $S$. We use this property to prove the following three propositions.  

\begin{proposition}\label{pro:global}
     For every $\eps>0$, there exists a set $S$ of $n\geq \Omega(1/\eps)$ points such that every $(1+\eps)$-spanner for $S$ has $\Omega(n/\eps^3)$ crossings.
\end{proposition}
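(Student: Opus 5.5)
The plan is to build a gadget of $m=\Theta(1/\eps)$ points in which all $\Theta(m^2)=\Theta(1/\eps^2)$ point pairs of a certain bipartite type are forced edges. A constant fraction of pairs of these edges cross, so the gadget alone forces $\Theta(m^4)=\Theta(1/\eps^4)$ crossings, which is $\Theta(1/\eps^3)$ per point. Taking $\Theta(n\eps)$ copies of the gadget, placed far apart, then yields $\Omega(n/\eps^3)$ crossings. The only tool needed is the observation stated before the proposition: if $\mathcal{E}_{ab}\cap S=\{a,b\}$, then $ab$ is an edge of every $(1+\eps)$-spanner for $S$.

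\textbf{Gadget.} Fix small absolute constants $c_1,c_2>0$, and let $m=\lfloor c_1/\eps\rfloor$ and $D=c_2/\eps$, with $c_1\ll c_2$. Put $a_i=(0,i)$ and $b_j=(D,j)$ for $i,j\in\{1,\dots,m\}$, so the points lie on two vertical columns at distance $D$. I then check that every ellipse $\mathcal{E}_{a_ib_j}$ contains no other point. The points are located only on the lines $x=0$ and $x=D$, so it suffices to bound the chords these lines cut from $\mathcal{E}_{a_ib_j}$ through the foci.
\begin{itemize}
\item For an ellipse with foci at distance $L=|a_ib_j|\approx D$ and major axis $(1+\eps)L$, the semi-minor axis satisfies $b^2\approx \eps L^2/2$. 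The half-chord through a focus perpendicular to the major axis (the semi-latus rectum) is therefore about $b^2/a\approx \eps L\approx c_2$.
\item The major axis has slope $|j-i|/D\le c_1/c_2$, which is small. Hence the chord cut by the vertical line through $a_i$ is only a constant factor longer than the latus chord; its half-length is about $\eps L/\cos\theta=O(c_2)<1$.
\end{itemize}
Thus no $a_k$ with $k\neq i$ lies in the ellipse, and symmetrically no $b_l$ with $l\ne j$. One must also make sure that the part of the ellipse behind the focus (within distance $\eps L/2$) does not reach around to other points; this holds by the same convexity and chord bound. Consequently all $m^2$ edges $a_ib_j$ are forced.

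\textbf{Counting crossings.} The segments $a_ib_j$ and $a_kb_l$ with $i\ne k$ and $j\ne l$ cross if and only if $(i-k)(j-l)<0$, because both are segments between the same two parallel lines. The number of such unordered pairs is $\binom{m}{2}^2=\Omega(m^4)=\Omega(\eps^{-4})$, with $2m=\Theta(1/\eps)$ points per gadget. For general $n\ge 2m$, take $\lfloor n/(2m)\rfloor$ translated copies spaced far apart, for instance at horizontal offsets $10D$. Each ellipse stays within distance $O(D)$ of its own gadget, so the forced-edge argument is unaffected. Any leftover points go far away on a line. The total is $\Omega\bigl((n\eps)\cdot\eps^{-4}\bigr)=\Omega(n/\eps^3)$ crossings.

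\textbf{Main obstacle.} The delicate step is the emptiness verification, since the constants must be balanced. We need $\eps D$ times a slope factor to be below the unit spacing, while $m/D$ stays small enough that the tilted chord estimate is valid. I expect that writing the ellipse explicitly in coordinates aligned with $a_ib_j$ gives a clean bound. For a point $p=a_k$, one shows that $|pa_i|+|pb_j|-|a_ib_j|\ge |k-i|^2/(2D)\cdot(\ldots)+\ldots>\eps|a_ib_j|$. Equivalently, it is easiest to verify directly $|a_ka_i|+|a_kb_j|>(1+\eps)|a_ib_j|$ via $|a_kb_j|\ge |a_ib_j|-|a_ka_i|\cos\phi$, where $\phi$ is the angle at $a_i$ between $a_ia_k$ and $a_ib_j$, with $\cos\phi\le c_1/c_2$. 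This gives a surplus of at least $|k-i|(1-c_1/c_2)\ge 1/2>\eps D\cdot(1+o(1))=c_2(1+o(1))$ once $c_2<1/4$. This inequality is the one I would nail down first.
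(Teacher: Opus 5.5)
Your proof is correct and is essentially the paper's construction: two parallel rows of $\Theta(1/\eps)$ points whose spacing is a $\Theta(\eps)$ fraction of their separation, all cross pairs forced by empty ellipses $\mathcal{E}_{ab}$, and $\Theta(\eps n)$ far-apart copies. The differences are in the details.

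\textbf{Counting crossings.} You count the $(m(m-1)/2)^2$ crossing pairs directly via the order condition $(i-k)(j-l)<0$. The paper instead applies the crossing lemma to the forced $K_{m,m}$. Your count is elementary and gives an explicit constant. The crossing lemma needs only the number of forced edges, so it does not care which pairs are forced.

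\textbf{Which pairs are forced.} This matters because your thin aspect ratio is exactly what makes \emph{every} pair forced. With $c_1\ll c_2$ you get $\cos\phi\le c_1/c_2$ in the projection bound $|a_kb_j|\ge|a_ib_j|-|k-i|\cos\phi$, and that bound then beats $\eps|a_ib_j|$. In the paper's unit-square layout with spacing $4\eps$ this fails near the diagonal. Take $a,b$ at opposite ends of the two rows and $a'$ the neighbor of $a$ toward $b$. The detour through $a'$ exceeds $|ab|$ by only about $(4-2\sqrt2)\eps<\sqrt2\,\eps\approx\eps|ab|$. So there one must restrict to pairs with horizontal offset at most $\frac12$, which still leaves $\Theta(1/\eps^2)$ forced edges.

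\textbf{Minor remarks on your write-up.}
\begin{itemize}
\item The latus-rectum heuristics are superfluous once you have the projection inequality.
\item The factor you call $1+o(1)$ is really the constant $\sqrt{1+(c_1/c_2)^2}$. This is harmless for $c_1\le c_2/2$ and $c_2<\frac14$.
\item Like the paper, you tacitly need $\eps$ small enough that $m\ge 2$.
\end{itemize}
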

\begin{proof}
We first give a construction for $n=\Theta(1/\eps)$; see \Cref{fig:ellipses}, left. Let $A$ (resp., $B$) be a set of $\lceil 1/(4\eps)\rceil$ points on the top (resp., bottom) side of the unit square $[0,1]^2$, with distance $4\eps$ between consecutive points; and let $S=A\cup B$.
Then for every pair $(a,b)\in A\times B$, we have $\mathcal{E}_{ab}\cap S = \{a,b\}$.
Consequently, every $(1+\eps)$-spanner for $S$ contains the complete bipartite graph with partite sets  $A,B$, which has $\Theta(1/\eps^2)$ edges. By the crossing lemma, the number of crossings of any $(1+\eps)$-spanner $G$ is $\Omega(|E(G)|^3/|V(G)|^2)=\Omega(1/\eps^4) = \Omega(n/\eps^3)$.

For an arbitrary $n\geq \Omega(1/\eps)$, create $\Theta(\eps n)$ copies of the above construction using unit squares at distance at least 1 apart. For each copy and for each point pair $(a,b)\in A\times B$, the property $\mathcal{E}_{ab}\cap S = \{a,b\}$ holds. It follows that each copy generates $\Omega(1/\eps^4)$ crossings, and the total number of crossings is $\Omega(\eps n/ \eps^4)=\Omega(n/\eps^3)$.
\end{proof}

\begin{figure}[tbh]
        \centering
        \includegraphics[width=.95\textwidth]{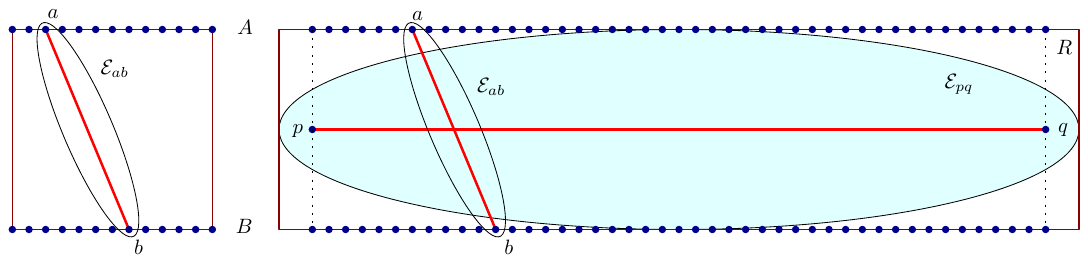}
        \caption{Lower bound constructions}
        \label{fig:ellipses}
\end{figure}

\begin{proposition}\label{pro:local}
     For every $\eps\in (0,1]$, there exists a set $S$ of $n\geq \Omega(1/\eps^{3/2})$ points such that every $(1+\eps)$-spanner for $S$ contains an edge with $\Omega(1/\eps^{5/2})$ crossings.
\end{proposition}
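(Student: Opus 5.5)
Fix $\eps$ and build a set $S=A\cup B\cup C$ in which the forced-edge mechanism of \Cref{pro:global} (an edge $ab$ is present whenever $\mathcal{E}_{ab}\cap S=\{a,b\}$) produces one long forced edge crossed by $\Omega(\eps^{-5/2})$ forced short edges. The plan has three steps.

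\textbf{Step 1: the long edge.} Place two points $p=(-1,0)$ and $q=(1,0)$. The ellipse $\mathcal{E}_{pq}$ has semi-minor axis $\sqrt{(1+\eps)^2-1}=\Theta(\sqrt{\eps})$. If $\mathcal{E}_{pq}\cap S=\{p,q\}$, then $pq$ is an edge of every $(1+\eps)$-spanner. So all other points must lie outside $\mathcal{E}_{pq}$. Near the middle of the ellipse they can still be only slightly more than $\Theta(\sqrt\eps)$ away from the segment $pq$.

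\textbf{Step 2: a complete bipartite gadget straddling $pq$.} Let $A$ be $m$ points on the horizontal line $y=h$ and $B$ be $m$ points on $y=-h$, with $x$-coordinates in $[-1/2,1/2]$, spacing $\delta$, and $h=c\sqrt{\eps}$. Choose $c$ so that these points lie outside $\mathcal{E}_{pq}$ on $[-1/2,1/2]$, which is possible since the ellipse's half-height there is at most $\sqrt{2\eps}$. For $a\in A$, $b\in B$, we have $|ab|\ge 2h=\Theta(\sqrt\eps)$. The ellipse $\mathcal{E}_{ab}$ then has width $\Theta(\sqrt{\eps}\,|ab|)$ around the segment $ab$.

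We want $\mathcal{E}_{ab}$ to avoid the other points of $A\cup B$ and also $p,q$. Avoiding $p,q$ is immediate, since they are far away. For $A\cup B$, take $a$ and another point $a'\in A$ at horizontal distance $\delta$ from $a$. The ellipse near its focus $a$ has "thickness" $\Theta(\eps|ab|)$ perpendicular to $ab$ plus a cap of length $\Theta(\eps|ab|)$ beyond $a$. So it suffices to take $\delta\ge C\eps|ab|/\sin\theta$, where $\theta$ is the angle between $ab$ and the horizontal line.

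Restrict to pairs with $|x_a-x_b|\le h$, so that $\theta=\Theta(1)$ and $|ab|=\Theta(\sqrt\eps)$. Then the spacing can be $\delta=\Theta(\eps^{3/2})$. The window $|x|\le h$ then contains $m=h/\delta=\Theta(1/\eps)$ points on each line, giving $\Theta(\eps^{-2})$ forced edges, all crossing $pq$. That yields only $\Omega(\eps^{-2})$ crossings.

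\textbf{Step 3: the main obstacle, reaching $\eps^{-5/2}$.} The count must be improved by a factor $\eps^{-1/2}$. The point count $n=\Omega(\eps^{-3/2})$ hints at the fix: use the full length of $pq$. Put $\Theta(\eps^{-3/2})$ points on each of the lines $y=\pm h$ along an interval of length $\Theta(1)$, with spacing $\eps^{3/2}$. Then partition $[-1/2,1/2]$ into $\Theta(\eps^{-1/2})$ windows of width $h$, each carrying its own $\Theta(\eps^{-2})$ forced nearly-vertical edges crossing $pq$. The total is $\Theta(\eps^{-1/2})\cdot\Theta(\eps^{-2})=\Theta(\eps^{-5/2})$ crossings on the single edge $pq$.

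The issue is that the lines $y=\pm h$ must stay outside $\mathcal{E}_{pq}$ along the whole interval $[-1/2,1/2]$. That holds for a straight line at height $h\ge \sqrt{2\eps}$, since the ellipse's half-height is maximal at the center. So take $A$ and $B$ on the lines $y=\pm 2\sqrt{\eps}$.

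The remaining checks, which I expect to be the fiddly part, are:
\begin{itemize}
\item each $\mathcal{E}_{ab}$, for an edge within a window, avoids all other points of $A\cup B$, using the near-focus width estimate $O(\eps|ab|)$ against the spacing $\Theta(\eps^{3/2})$ with suitable constants;
\item $p$ and $q$ lie outside every $\mathcal{E}_{ab}$;
\item no point of $A\cup B$ lies in $\mathcal{E}_{pq}$.
\end{itemize}
Once these hold, every $(1+\eps)$-spanner contains $pq$ and all the window edges, giving the proposition with $n=\Theta(\eps^{-3/2})$. For larger $n$, add far-away points.
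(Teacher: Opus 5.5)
This is essentially the paper's proof: a forced long edge $pq$, two rows of $\Theta(\eps^{-3/2})$ points with spacing $\Theta(\eps^{3/2})$ on horizontal lines at distance $\Theta(\sqrt{\eps})$ from $pq$ and just clear of $\mathcal{E}_{pq}$, and forced edges between opposite-row points with horizontal offset $O(\sqrt{\eps})$; the paper counts these via a per-vertex degree of $\Omega(1/\eps)$ rather than your disjoint windows, which gives the same $\Omega(\eps^{-5/2})$ bound. Your sketch goes through once constants are fixed: the central half-height of $\mathcal{E}_{pq}$ is $\sqrt{2\eps+\eps^2}$ rather than $\sqrt{2\eps}$, which your choice $h=2\sqrt{\eps}$ still clears, and the spacing must be a large enough multiple of $\eps^{3/2}$ (e.g.\ $9\eps^{3/2}$) for the near-focus check to hold.
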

\begin{proof}
Let $p=(0,0)$ and $q=(1,0)$ be two points on the $x$-axis, and let $R$ be the axis-aligned bounding box of the ellipse $\mathcal{E}_{pq}$; see \Cref{fig:ellipses}, right. The height and width of $R$ are $\sqrt{(1+\eps)^2-1}=\sqrt{2\eps+\eps^2} \leq \sqrt{3\eps}$ and $1+\eps$, respectively. 
Let $A$ (resp., $B$) be a set of $\lceil 1/(4\eps^{3/2})\rceil$ points on the top (resp., bottom) side of $R$, with $x$-coordinates in the interval $(0,1)$, with distance $4\eps^{3/2}$ between consecutive points; and let $S=A\cup B$.

Let $H$ be a $(1+\eps)$-spanner for $S$.
Note that $\mathcal{E}_{pq}\cap S=\{p,q\}$, so $pq$ is an edge of $H$. Furthermore, for every pair $(a,b)\in A\times B$, if $|x(a)-x(b)|\leq \sqrt{3\eps}$, then $\mathcal{E}_{ab}\cap S = \{a,b\}$. Consequently, every point $a\in A$ has degree $\Omega(1/\eps)$ in $H$, and so $H$ contains $\Omega(|A|/\eps)=\Omega(1/\eps^{5/2})$ edges between $A$ and $B$. All edges between $A$ and $B$ cross the edge $pq$, so $pq$ crosses $\Omega(1/\eps^{5/2})$ edges of $H$, as claimed.
\end{proof}

\begin{proposition}\label{pro:ratio}
     For every $\eps\in (0,\frac{1}{16}]$, there is a set $S$ of 4 points such that every $(1+\eps)$-spanner for $S$ has two crossing edges $e_1, e_2$ with  
     $\max\{|e_1|,|e_2|\} /\min\{|e_1|,|e_2|\} \geq \Omega(1/\eps)$.
\end{proposition}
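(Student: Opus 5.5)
We use the same ellipse argument as in \Cref{pro:global,pro:local}: if $\mathcal{E}_{ab}\cap S=\{a,b\}$, then $ab$ is an edge of every $(1+\eps)$-spanner for $S$. The plan is to place 4 points so that two crossing segments, one long and one short, are both forced in this way.

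Let $p=(0,0)$ and $q=(1,0)$, so the long segment $pq$ has length $1$. Place the short segment vertically across it at the midpoint: $a=(\tfrac12,h)$ and $b=(\tfrac12,-h)$, with $h=c\,\eps$ for a suitable constant $c$ (say $c=1$ or $c=2$). Then $ab$ crosses $pq$ and $|pq|/|ab| = 1/(2h)=\Omega(1/\eps)$. It remains to verify two things.

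\emph{Step 1: $ab$ is forced.} We check that $p,q\notin \mathcal{E}_{ab}$. We have $|pa|+|pb| = 2\sqrt{1/4+h^2}\ge 1$, while $(1+\eps)|ab| = 2(1+\eps)h \le 4h$. For $h\le 1/4$ this gives $2(1+\eps)h<1$, so $p$ lies outside $\mathcal{E}_{ab}$, and $q$ does by symmetry. This is easy.

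\emph{Step 2: $pq$ is forced.} This is the main obstacle. We need $a,b\notin\mathcal{E}_{pq}$, i.e.,
\[
|pa|+|aq| = 2\sqrt{1/4+h^2} > 1+\eps .
\]
Squaring, this is $1+4h^2 > (1+\eps)^2$, i.e., $4h^2>2\eps+\eps^2$, which requires $h\gtrsim\sqrt{\eps/2}$. That is far larger than $\eps$, and it would only give the ratio $\Omega(1/\sqrt{\eps})$. So the symmetric placement fails: the semi-minor axis of $\mathcal{E}_{pq}$ is $\Theta(\sqrt{\eps})$.

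\emph{Fix: put the short segment near an endpoint of $pq$.} Near $p$ the ellipse is thin. A point at horizontal distance $x$ from $p$ and height $y$ lies outside $\mathcal{E}_{pq}$ as soon as $y\gtrsim \sqrt{\eps x}$ (up to constants), since the excess $|pa|+|aq|-1$ is roughly $\sqrt{x^2+y^2}-x+y^2/2$. Choose $a=(x,y)$ and $b=(x,-y)$ with $x=\Theta(\eps)$ and $y=\Theta(\eps)$, for example $a=(\eps,4\eps)$ and $b=(\eps,-4\eps)$. Then
\[
|pa|-x = \eps\bigl(\sqrt{17}-1\bigr) > 3\eps ,
\]
so $|pa|+|aq| \ge |pa|+(1-x) > 1+2\eps$, and $a\notin\mathcal{E}_{pq}$; $b$ is symmetric. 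Step 1 still holds, because $|pa|+|pb|\approx 8.2\eps$ exceeds $(1+\eps)\cdot 8\eps$ only marginally, so this needs checking; also $|qa|+|qb|\approx 2 \gg 8\eps(1+\eps)$. If the margin at $p$ fails, replace the height $4\eps$ by $8\eps$ or move $x$ to $2\eps$. The condition $\eps\le 1/16$ is used to keep all points in the right regime, e.g.\ $\sqrt{x^2+y^2}$ much smaller than $1$.

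Conclusion: both $pq$ and $ab$ are edges of every $(1+\eps)$-spanner. They cross because $0<x<1$ and $a,b$ lie on opposite sides of the $x$-axis. Their length ratio is $1/(2y)=\Omega(1/\eps)$.
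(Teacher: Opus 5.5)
Your construction is the paper's in spirit: a long forced edge $pq$ and a short vertical forced edge $ab$ straddling it near the endpoint $p$. The paper uses $a,b=(4\eps^{3/2},\pm 2\eps)$. Your Step~2 bound $|pa|+|aq|\ge 1+(|pa|-x)$ is exactly the right estimate.

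The gap is Step~1 for your stated constants. For $a,b=(x,\pm y)$ one has $p\notin\mathcal{E}_{ab}$ iff $\sqrt{x^2+y^2}>(1+\eps)y$, i.e., $x^2>(2\eps+\eps^2)y^2$. With $x=\eps$, $y=4\eps$ this reads $16\eps^2+32\eps<1$, i.e., $\eps<(\sqrt{17}-4)/4\approx 0.031$. So Step~1 fails on the rest of the required range $(0,\frac1{16}]$, and there the construction genuinely breaks. At $\eps=\frac1{16}$, $|pa|+|pb|=\sqrt{17}/8\approx 0.515<\frac{17}{32}=(1+\eps)|ab|$, so $\{pa,pb,pq,qa,qb\}$ is a noncrossing $(1+\eps)$-spanner.

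You noticed the thin margin but did not resolve it, and one of your two remedies goes the wrong way. Raising the height to $8\eps$ decreases $x/y$ and shrinks the valid range to about $\eps<0.0078$. Your other remedy does work: for $a,b=(2\eps,\pm 4\eps)$, Step~1 becomes $8\eps+4\eps^2<1$, which is true for $\eps\le\frac1{16}$. Step~2 then gives $|pa|-x=(2\sqrt5-2)\eps>\eps$, and the length ratio is $1/(8\eps)$.

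The paper sidesteps this tension by taking $x=4\eps^{3/2}$ much smaller than $y=2\eps$. Then $x/y=2\sqrt{\eps}>\sqrt{2\eps+\eps^2}$ for all $\eps<2$, so Step~1 is automatic. Your Step~2 estimate gives $|pa|-x>2\eps-4\eps^{3/2}\ge\eps$ precisely when $\eps\le\frac1{16}$, which is where that hypothesis comes from.
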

\begin{proof}
Let $S=\{p,q,a,b\}$, where $p=(0,0)$, $q=(1,0)$, $a=(4\eps^{3/2},2\eps)$, and $b=(4\eps^{3/2},-2\eps)$. 
Easy calculation shows that $a,b\notin\mathcal{E}_{pq}$ and $p,q\notin\mathcal{E}_{ab}$. Therefore, any $(1+\eps)$-spanner for $S$ contains both $pq$ and $ab$. These edges cross, and $|pq|/|ab|=1/(4\eps)=\Omega(1/\eps)$, as required.    
\end{proof}

\section{Spanner Construction}
\label{sec:spanner}

Let $S$ be a set of $n$ points in the plane. In this section, we construct a geometric graph $H_2$ on $S$ in three phases. In \Cref{ssec:theta}, we start with a classical $\Theta$-graph for $S$ with $k\geq 6$ cones and $O(kn)$ edges. For each edge $sp$ of the $\Theta$-graph, we construct a new edge $s'p'$ associated with an empty region $R_{s'p'}$ to create $H_0$. In \Cref{ssec:pruning}, we overlay three modified $\Theta$-graphs (with rotated cones) to obtain $H_1$, then prune some edges from $H_1$ to obtain a sparser graph $H_2$. We show that $H_0$, $H_1$, and $H_2$ are $(1+\eps)$-spanners for $S$ for suitable $k=\Theta(1/\eps)$  (\Cref{sec:stretch}); and that every edge of $H_2$ has $O(\eps^{-3}\log \frac{1}{\eps})$ crossings (\Cref{sec:crossing}).

\subsection{Modified $\Theta$-Graphs}
\label{ssec:theta}

Let $S$ be a finite set in the plane. Let $k$ be an integer such that $k=4m+2$ for some $m\in \mathbb{N}$. We base our construction on the classical $\Theta$-graphs~\cite{NS07}.  Subdivide the plane into $k$ cones of aperture $\theta = 2\pi/k$ with apex at the origin. 
For each cone $C$, let $\ell_C$ denote the bisector of $C$ emanating from the origin.

The $\Theta$-graph $\Theta(S, k)=(S, E)$ is constructed as follows. Initialize $E = \emptyset$. For every $s \in S$, translate the cones and associated rays so that the cones have apex $s$. For each cone $C$ such that $C\cap (S \setminus \{s\} )\neq \emptyset$, let $p\in C$ be a point whose orthogonal projection onto the bisector $\ell_C$ is closest to $s$; add the edge $sp$ to $E$.

\begin{figure}[tbh]
        \centering
        \includegraphics[width=.98\textwidth]{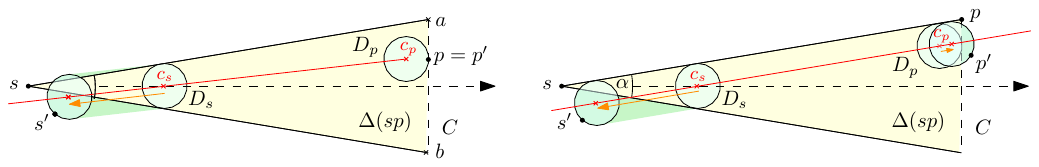}
        \caption{Construction of edge $s'p'$. Left: $p\in D_p$ initially. Right: $p\notin D_p$ initially.}
        \label{fig:construction}
\end{figure}

\subparagraph{Construction.}
Let $S$ be a finite set of points in the plane, and let $G = \Theta(S, k)$ denote the $\Theta$-graph on $S$ constructed with $k=4m+2$ cones of aperture $\alpha=2\pi/k$, for some $m\in\mathbb{N}$. Based on $G$, we construct a new graph $H_0=(S,E(H_0))$ as follows. 
Let $D$ be a regular $3k$-gon such that two opposite sides are parallel to an arbitrary ray between two consecutive cones of the $\Theta$-graph $\Theta(S,k)$. 

For every edge $sp \in E(G)$, we construct a new edge $s'p'$ as follows; see \Cref{fig:construction}. 
Let $C$ be the cone with apex $s$ that contains $p$. Let $\Delta(sp)$ be the isosceles triangle formed by the two rays on the boundary of $C$ and the line through $p$ orthogonal to the bisector of $C$. 
Let $D_s$ and $D_p$ be homothets of $D$ with diameter $\diam(D_s)=\diam(D_p)=\frac{\alpha}{5}\, |sp|$ such that $D_s,D_p \subseteq \Delta(sp)$, and so that $D_s$ is the closest to $s$ and $D_p$ is the closest to $p$ under these constraints. Let $c_s$ and $c_p$, resp., denote the center of $D_s$ and $D_p$, and note that $c_s\neq c_p$. 

Translate $D_s$ and $D_p$, resp., by continuously moving their centers along the rays $\overrightarrow{c_pc_s}$ and $\overrightarrow{c_sc_p}$, resp.,  until their boundary hits a point in $S$. Let $s'\in S$ be a first point hit by $D_s$, and let $p'$ a first point hit by $D_p$.
We add the edge $s'p'$ to $E(H_0)$. 
This completes the construction of the \emph{modified $\Theta$-graph} $H_0$.

\subparagraph{Proof of Correctness.}
We show that the steps of the constitution above are feasible. 
The existence of the distinct homothets $D_s$ and $D_p$ is guaranteed by \Cref{lem:construction:diskradius}.
The termination of the translation step (i.e., the existence of points $s'$ and $p'$) is established by \Cref{lem:termination}.

\begin{restatable}[]{lemma}{constructiondiskradius}
\label{lem:construction:diskradius}
    If $0<\alpha<\frac{\pi}{2}$, the radius of the inscribed circle of $\Delta(sp)$ is at most $\frac{\alpha}{4}\, |sa|$.
\end{restatable}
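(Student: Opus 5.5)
The plan is to compute the inradius of $\Delta(sp)$ in closed form as a function of $\alpha$ and one side length, and then reduce the claim to a one-variable trigonometric inequality on $(0,\frac{\pi}{2})$. I read $a$ as one of the two base vertices of $\Delta(sp)$, so $|sa|$ is the common length $L$ of the two legs. I am not certain this is the intended reading, since $a$ is not defined in the text above, and the last paragraph shows that the stated constant does not hold under it.

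\textbf{Step 1 (geometry of $\Delta(sp)$).} Let $h$ be the distance from $s$ to the base, i.e., the length of the orthogonal projection of $sp$ onto the bisector $\ell_C$. By symmetry about $\ell_C$:
\begin{itemize}
\item the legs have length $L=|sa|=h/\cos(\alpha/2)$;
\item the base has length $2h\tan(\alpha/2)=2L\sin(\alpha/2)$;
\item the area is $h^2\tan(\alpha/2)=L^2\sin(\alpha/2)\cos(\alpha/2)$.
\end{itemize}

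\textbf{Step 2 (inradius).} Use $r=\area/(\text{semiperimeter})$. The semiperimeter is $L+L\sin(\alpha/2)$, so
\[
r=\frac{L\sin(\alpha/2)\cos(\alpha/2)}{1+\sin(\alpha/2)}=\frac{\sin\alpha}{2\bigl(1+\sin(\alpha/2)\bigr)}\,|sa|.
\]
As a check, the incenter lies on $\ell_C$ at distance $d$ from $s$ with $r=d\sin(\alpha/2)$ and $d+r=h$. This gives the same formula.

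\textbf{Step 3 (the inequality).} The claim becomes $g(\alpha)\le \alpha/4$, where $g(\alpha)=\frac{\sin\alpha}{2(1+\sin(\alpha/2))}$. I would first test it at small $\alpha$ with Taylor expansions. There $g(\alpha)=\frac{\alpha}{2}-\frac{\alpha^2}{4}+O(\alpha^3)$, which exceeds $\alpha/4$. So under this reading the constant $\frac14$ fails near $0$, and this constant is the step I expect to be the main obstacle.

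What I would do about the failing constant:
\begin{itemize}
\item The computation from Steps 1 and 2 is exactly what is needed, so I would keep it and pin down the correct reference length or constant.
\item The safe upper bound that comes out is $r\le \frac{\sin\alpha}{2}|sa|\le\frac{\alpha}{2}|sa|$, using $\sin x\le x$ and $1+\sin(\alpha/2)\ge 1$.
\item To get the factor $\frac{\alpha}{4}$, the reference length must be about twice $L$. For example, with the perimeter-type quantity $2L$ the stated bound $r\le\frac{\alpha}{4}\cdot 2L$ follows immediately from the same estimate.
\end{itemize}
I would resolve this against the subsequent use in the paper. That use needs $D_s$ and $D_p$, of diameter $\frac{\alpha}{5}|sp|$, to fit inside $\Delta(sp)$, so the form of the bound actually required there (and which of $|sp|$ or $|sa|$ it uses) decides which version to prove.
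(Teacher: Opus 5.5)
Your Steps 1--2 are correct, and your reading of $a$ as a base vertex matches the paper. Your formula $\frac{\sin\alpha}{2(1+\sin(\alpha/2))}|sa|$ is the same quantity the paper computes as $\varrho=(P-|ab|)\tan\frac{\alpha}{2}=|sa|\bigl(1-\sin\frac{\alpha}{2}\bigr)\tan\frac{\alpha}{2}$. Your diagnosis is also correct. Since $\varrho=\bigl(\frac{\alpha}{2}-\frac{\alpha^2}{4}+O(\alpha^3)\bigr)|sa|$, the printed bound ``at most $\frac{\alpha}{4}|sa|$'' is false for small $\alpha$. In fact the paper's own proof ends with $\varrho>\frac{\alpha}{4}|sa|$, so ``at most'' in the statement is a slip for ``at least''.

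The gap is in your repair. Both fixes you offer ($\varrho\le\frac{\alpha}{2}|sa|$, or $\varrho\le\frac{\alpha}{4}\cdot 2|sa|$) are upper bounds. An upper bound on the inradius is useless for the application you point to:
\begin{itemize}
\item The construction places homothets of $D$ of diameter $\frac{\alpha}{5}|sp|$ inside $\Delta(sp)$. Their circumradius is $\frac{\alpha}{10}|sp|\le\frac{\alpha}{10}|sa|$, since $|sp|\le|sa|$.
\item They start concentric with the incircle and are then slid to two distinct positions (\Cref{cor:neq}).
\item This requires the incircle to be \emph{large}, i.e., a lower bound $\varrho>\frac{\alpha}{4}|sa|$. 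The same threshold appears as the radius hypothesis in \Cref{lem:construction:tubegeneric}.
\end{itemize}

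Your formula gives the lower bound in one line, exactly as the paper does. With $x=\frac{\alpha}{2}$, the estimates $\sin x<x$ and $\tan x>x$ yield $\varrho>\bigl(1-\frac{\alpha}{2}\bigr)\frac{\alpha}{2}|sa|\ge\frac{\alpha}{4}|sa|$ whenever $\alpha\le 1$. Your own expansion $\frac{\alpha}{2}-\frac{\alpha^2}{4}$ already shows this for small $\alpha$; you read it as a failure of the lemma rather than as its intended content.

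One caveat: even the lower bound fails on part of the stated range $0<\alpha<\frac{\pi}{2}$. At $\alpha=1.5$ one gets $\varrho\approx 0.297|sa|<0.375|sa|$. Likewise, the paper's step ``$\alpha^2<\alpha$ for $0<\alpha<\frac{\pi}{2}$'' is valid only for $\alpha<1$. This is harmless, because the analysis assumes $\alpha<\frac12$ (indeed $\alpha=\eps/16$).
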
\begin{proof}
    Let $q$ denote the intersection of the edge $ab$ with the angle bisector of $sa$ and $sb$. The segment $sq$ is perpendicular to the segment $ab$, and moreover $|ab| = 2|qa|$.

    The radius of the inscribed circle is given by 
    \[ \varrho = (P - |ab|)\tan\frac{\alpha}{2} = (P- 2|qa|)\tan\frac{\alpha}{2},\]
    where $P = (2|sa| + |ab|)/2 = |sa| + |qa|$ is the semiperimeter of the triangle.
    Then we have
    \begin{align*}
        R &= (P-2|qa|)\tan\frac{\alpha}{2}\\
          &= (|sa| - |qa|)\tan\frac{\alpha}{2}\\
          &= |sa|\left(1-\sin\frac{\alpha}{2}\right)\tan\frac{\alpha}{2} \\
          &> |sa| \left(1- \alpha/2\right)\frac{\alpha}{2} = |sa| \left(\frac{\alpha}{2} - \frac{\alpha^2}{4}\right),
    \end{align*}
    using the Taylor estimates $\sin(x) < x$ and $\tan(x) > x$ valid for $0 < x < 1$.
    We have moreover that $\alpha^2 < \alpha$ for $0<\alpha < \frac{\pi}{2}$, so
    \[ 
    \varrho > |sa|(\frac12-\frac14)\alpha = \frac{\alpha}{4}\, |sa| .
    \qedhere \]
\end{proof}
\begin{restatable}[]{corollary}{corneq}
\label{cor:neq} 
At the initial position of the $3k$-gons $D_p$ and $D_s$, we have $D_p\neq D_s$.
\end{restatable}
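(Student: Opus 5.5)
We plan to reduce \Cref{cor:neq} to a comparison between the diameter of the $3k$-gons and the size of the largest disk inside $\Delta(sp)$. Two homothets of $D$ with the same diameter that both lie in $\Delta(sp)$ coincide exactly when their centers coincide. So it is enough to show that the region of admissible centers is not a single point. Among those centers, the one closest to $s$ and the one closest to $p$ then differ.

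\textbf{Step 1: size of $\Delta(sp)$.} We set up notation as in the proof of \Cref{lem:construction:diskradius}: $\Delta(sp)$ has apex $s$, legs $sa$ and $sb$ along the boundary rays of the cone $C$, and base $ab$ through $p$, orthogonal to the bisector. The height of the triangle is the length of the orthogonal projection of $sp$ onto the bisector. Since $p\in C$ and $\alpha<\pi/2$, this height satisfies $|sq|\ge |sp|\cos(\alpha/2)$. We also have $|sa|=|sq|/\cos(\alpha/2)\ge |sp|$. By \Cref{lem:construction:diskradius} (the proof actually shows a strict lower bound), the inscribed circle of $\Delta(sp)$ has radius $\varrho>\frac{\alpha}{4}|sa|\ge\frac{\alpha}{4}|sp|$, with $k\ge 6$ ensuring $\alpha<\pi/2$.

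\textbf{Step 2: comparison.} Each of $D_s$ and $D_p$ has diameter $\frac{\alpha}{5}|sp|$, so it fits in a disk of radius $r=\frac{\alpha}{10}|sp|$. This gives $r<\frac{2}{5}\varrho<\varrho$. The set of points $x$ such that the disk of radius $r$ centered at $x$ lies in $\Delta(sp)$ is the inner parallel body of the triangle at distance $r$. That body is a triangle homothetic to $\Delta(sp)$, with ratio $1-r/\varrho>0$, centered at the incenter. In particular it has nonempty interior.

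\textbf{Step 3: conclusion.} Any center $x$ in this inner triangle is admissible for a translate of $D$ (scaled to diameter $\frac{\alpha}{5}|sp|$) contained in $\Delta(sp)$, since the $3k$-gon lies inside its circumscribed disk. The feasible center region for the $3k$-gon therefore contains a nondegenerate triangle. Its points nearest to $s$ and nearest to $p$ are distinct. The point nearest to $s$ lies near the apex. The point nearest to $p$ lies near the base, and its distance from the base is at most about the circumradius of the $3k$-gon. The feasible region's extent along the bisector is at least $|sq|(1-r/\varrho)>0$, which separates these two points. Hence $c_s\neq c_p$, and so $D_s\neq D_p$.

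The main obstacle is making precise ``closest to $s$/$p$'' and showing that the two extremal centers really differ. I would handle this by showing that the feasible center set is a convex polygon with positive extent along the bisector direction. The center closest to $s$ then has a strictly smaller bisector coordinate than any center closest to $p$. For this, the strict inequality from Steps 1 and 2 is what is needed.
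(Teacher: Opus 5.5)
Your plan is correct and relies on the same ingredient as the paper's proof. That ingredient is the slack from \Cref{lem:construction:diskradius}, read as the strict lower bound $\varrho>\frac{\alpha}{4}|sa|$ that its proof actually gives (you are right that the printed ``at most'' is backwards), so that $r=\frac{\alpha}{10}|sp|<\varrho$. The two proofs differ in how they conclude distinctness. The paper places the copy concentric with the incircle (center $c$) and translates it toward $s$ and toward $p$. Since $\overrightarrow{cs}\neq\overrightarrow{cp}$ and both translations have positive length, the resulting copies differ. This needs no localization. However, it tacitly identifies the ``closest'' copies with these stopping positions. That is exact for $D_s$, where by symmetry the copy wedges into the apex, but holds for $D_p$ only in the weaker sense that the copy ends up touching $ab$. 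Your route works directly with the definition of closeness and so avoids that identification. The price is that you must locate the two extremal centers.

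That localization is the one step you only gesture at. Positive extent of the feasible region along the bisector does not by itself separate the two centers: for a general convex region, the points nearest to two outside points can coincide at a vertex. Your sentence ``then has a strictly smaller bisector coordinate'' is asserted, not shown. The step closes quickly. By \Cref{lem:parallel}, the $3k$-gon has sides parallel to all three sides of $\Delta(sp)$. So the feasible-center set $F$ is exactly the inner parallel body of $\Delta(sp)$ at distance $\rho'$, the inradius of the $3k$-gon. This is a homothetic triangle of positive height, since $\rho'\le r<\varrho$. Its apex $s_F$ lies on the bisector and minimizes the bisector coordinate over $F$, so it is the unique point of $F$ nearest to $s$. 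Yet $s_F$ is not nearest to $p$: moving from $s_F$ along the bisector stays in $F$ and strictly decreases the distance to $p$, whose bisector coordinate $|sq|$ is maximal in $\Delta(sp)$. Alternatively, you can make your ``near the apex / near the base'' picture quantitative using points of your inner triangle. These give $|sc_s|\le r/\sin\frac{\alpha}{2}$ and $|pc_p|\le r/\sin\frac{\pi-\alpha}{4}$, and for $\alpha<\pi/2$ these two bounds sum to less than $|sp|\cos\frac{\alpha}{2}\le|sq|$.
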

\begin{proof}
Let $c$ be the center of the inscribed circle of $\Delta(sp)$. We can find the initial position of $D_s$ (resp., $D_p$) by placing a homothet of $D$ of diameter $\frac{\alpha}{5}\, |sp|$ concentrically with the inscribed circle, and translating it along the ray $\overrightarrow{cp}$ (resp., $\overrightarrow{cs}$) until it hits the boundary of $\Delta(sp)$. 
Since $\overrightarrow{cp}\neq \overrightarrow{cs}$, the translations produce distinct translates $D_p\neq D_s$, as claimed. 
\end{proof}   

\begin{figure}[tbh]
        \centering
        \includegraphics[width=.98\textwidth]{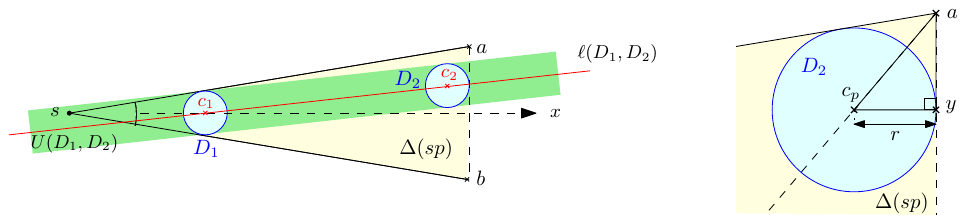}
        \caption{Left: Disks $D_1$ and $D_2$ (blue), the line $\ell(D_1, D_2)$ (red), and the tube $U(D_1, D_2)$ (green). Right: Close-up of the triangle $ac_py$.}
        \label{fig:tube}
\end{figure}


Next, we show that the continuous translation of $D_p$ and $D_s$ terminates. We start with an easy observation that justifies the choice of $k\in \mathbb{N}$ and the position of the $3k$-gon $D$.
\begin{lemma}\label{lem:parallel}
    Each side of $\Delta(sp)$ is parallel to two opposite sides of $D$ (hence $D_p$ and $D_s)$. 
\end{lemma}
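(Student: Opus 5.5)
We argue purely by comparing directions modulo $\pi$. Fix a reference ray $r_0$ from the origin lying between two consecutive cones; by construction $D$ has two opposite sides parallel to $r_0$. All cone boundary rays have directions $r_0 + j\alpha$ for $j\in\mathbb{Z}$, where $\alpha = 2\pi/k$. A regular $3k$-gon has $3k$ sides whose directions, taken modulo $\pi$, form an arithmetic progression with step $2\pi/(3k)=\alpha/3$. Each such direction is shared by exactly two opposite sides, since $3k$ is even. So once one pair of opposite sides is parallel to $r_0$, the side directions of $D$ are exactly $r_0 + i\alpha/3 \pmod \pi$ for $i\in\mathbb{Z}$. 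Every direction $r_0+j\alpha$ appears here (take $i=3j$). The same holds for the homothets $D_s$ and $D_p$, because homotheties preserve directions.

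Now consider the sides of $\Delta(sp)$. Two of them lie on the boundary rays of the cone $C$ translated to apex $s$. Their directions are $r_0+j\alpha$ and $r_0+(j+1)\alpha$, so by the previous step each is parallel to a pair of opposite sides of $D$.

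The third side is orthogonal to the bisector $\ell_C$, which has direction $r_0+(j+\tfrac12)\alpha$. Its direction is therefore $r_0+(j+\tfrac12)\alpha+\tfrac{\pi}{2}$. We must show this is congruent to $r_0+i\alpha/3$ modulo $\pi$ for some integer $i$. The hypothesis $k=4m+2$ enters here: $\tfrac{\pi}{2}=\tfrac{k}{4}\alpha=(m+\tfrac12)\alpha$. Hence $(j+\tfrac12)\alpha+\tfrac{\pi}{2}=(j+m+1)\alpha$, an integer multiple of $\alpha$, and so again a side direction of $D$ (take $i=3(j+m+1)$).

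The only step needing care is this last one: the orthogonal side has a half-integer offset $\alpha/2$ that has to be absorbed. I would double-check that the offsets cancel rather than add. If they added, the direction would be $j\alpha+\alpha+\tfrac{k}{4}\alpha$ up to sign; since $\pi=\tfrac{k}{2}\alpha$ is itself an integer multiple of $\alpha$, all signs and reductions modulo $\pi$ keep us in $\alpha\mathbb{Z}$. Incidentally, the cone boundaries alone would only need side directions in $\alpha\mathbb{Z}$, so it is the condition $k\equiv 2 \pmod 4$, not the factor $3$ in the $3k$-gon, that the orthogonal side relies on.
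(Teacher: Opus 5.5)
Your argument is correct and follows the paper's route. The paper also shows that all three sides of $\Delta(sp)$ are parallel to cone boundary rays, citing \cite[Lemma~1]{BoseCMRV16} for the role of $k=4m+2$, and then uses that each boundary ray is parallel to a pair of opposite sides of $D$; you simply prove the cited fact directly via $(j+\frac12)\alpha+\frac{\pi}{2}=(j+m+1)\alpha$. Your closing aside is garbled, since $j\alpha+\alpha+\frac{k}{4}\alpha=(j+m+\frac32)\alpha$ is not in $\alpha\mathbb{Z}$, but it is harmless and unneeded: $\pm\frac{\pi}{2}$ agree modulo $\pi$, and indeed $(j+\frac12)\alpha-\frac{\pi}{2}=(j-m)\alpha$.
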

\begin{proof}
Since $k=4m+2$ for $m\in \mathbb{N}$, all three sides of $\Delta(sp)$ are parallel to a boundary ray of some cone of the $\Theta$-graph~\cite[Lemma~1]{BoseCMRV16}. By the definition of the regular $3k$-gon $D$, every ray on the boundary between consecutive cones is parallel to two opposite sides of $D$. 
\end{proof}


\begin{definition}\label{def:construction:tube}
    Let $D_1$ and $D_2$ be two translates of a centrally symmetric convex body in the plane, with distinct centers $c_1$ and $c_2$, respectively.
    We write $\ell(D_1, D_2)$ for the line passing through the centers $c_1$ and $c_2$, and write $U(D_1, D_2)$ for the tube bounded by the two common outer tangents of $D_1$ and $D_2$; see \Cref{fig:tube}.
\end{definition}

The following lemma pertains to the inscribed disks of $D_p$ and $D_s$.
\begin{lemma}\label{lem:construction:tubegeneric}
    Let $\Delta(sp)$ be an isosceles triangle with vertices $a,b,s$, and let $D_1, D_2 \subseteq \Delta(sp)$ be two disks of the same radius $r < \frac{\alpha}{4}\,|sa|$ with distinct centers $c_1$ and $c_2$, respectively.
    If $D_1$ is tangent to segments $sa$ and $sb$, and $D_2$ is tangent to the segment $ab$, then 
    \begin{enumerate}[(a)]
        \item the tube $U(D_1, D_2)$ contains the point $s$; and
        \item the intersection of $U(D_1, D_2)$ with $ab$ is a line segment of length at least $2r$.
    \end{enumerate}
\end{lemma}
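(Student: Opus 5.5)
Set coordinates with $s$ at the origin and the bisector $\ell$ of the apex angle $\alpha$ along the positive $x$-axis. Let $h=|sq|=|sa|\cos\frac{\alpha}{2}$, so that $ab$ is the vertical segment $x=h$, $|y|\le h\tan\frac{\alpha}{2}$. Since $D_1$ is tangent to both $sa$ and $sb$, its center lies on $\ell$: $c_1=(d_1,0)$ with $d_1=r/\sin\frac{\alpha}{2}$. Since $D_2$ is tangent to $ab$ and lies in $\Delta(sp)$, we have $c_2=(h-r,y_2)$ with $|y_2|\le (h-r)\tan\frac{\alpha}{2}-r/\cos\frac{\alpha}{2}$, which is the condition that $D_2$ stays inside the wedge. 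The tube $U(D_1,D_2)$ is the set of points at distance at most $r$ from the line $\ell(D_1,D_2)$. Hence (a) amounts to showing $\mathrm{dist}(s,\ell(c_1,c_2))\le r$, and (b) amounts to showing that the strip of width $2r$ around $\ell(c_1,c_2)$ meets $ab$ in a segment of length at least $2r$.

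\textbf{Part (b).} I would handle (b) first because it is soft. A strip of width $2r$ meets any line not parallel to it in a segment of length $2r/|\cos\phi|\ge 2r$, where $\phi$ is the angle between the line $ab$ and the normal of the strip. So the only issue is that the whole chord must lie within the segment $ab$, not just on its supporting line. Since $D_2\subseteq\Delta(sp)$ is tangent to $ab$, its tangency point $t$ lies on $ab$ and on the central line's perpendicular side. More usefully, the strip restricted to the half-plane $x\le h$ near $ab$ is covered by translates of $D_2$ along $\ell(c_1,c_2)$ moving toward $ab$. I would argue directly instead: the chord of the strip on the line $x=h$ is centered at $\ell(c_1,c_2)\cap\{x=h\}$, whose $y$-coordinate is $y_2\cdot\frac{h-d_1}{h-r-d_1}$, and it has half-length $r/\cos\psi\le$ something slightly larger than $r$, where $\psi$ is the slope angle. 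Then I check that this interval fits inside $|y|\le h\tan\frac{\alpha}{2}$ using the bound on $|y_2|$. Alternatively, I would simply note that the portion of the strip's chord on line $ab$ of length exactly $2r$ centered at the intersection point already suffices. The fit inside $ab$ follows because the line $\ell(c_1,c_2)$ stays within the wedge and the wedge widens linearly. This containment check is routine but fiddly.

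\textbf{Part (a).} This is the main step. The line through $c_1=(d_1,0)$ and $c_2$ makes angle $\psi$ with the $x$-axis, where $\tan\psi=|y_2|/(h-r-d_1)$. The distance from $s$ to this line is $d_1\sin\psi$, so I need $d_1\sin\psi\le r$, i.e. $\sin\psi\le\sin\frac{\alpha}{2}$, i.e. $\psi\le\frac{\alpha}{2}$. Intuitively this holds because $c_2$ lies in the wedge of half-angle $\frac{\alpha}{2}-\delta$ shrunk by $r$, while $c_1$ is the apex of the shrunken wedge $\{c:\mathrm{dist}(c,sa),\mathrm{dist}(c,sb)\ge r\}$. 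That shrunken wedge is the translate of the original wedge with apex $c_1$ and the same half-angle $\frac{\alpha}{2}$. Since $c_2$ lies in it, the direction $\overrightarrow{c_1c_2}$ makes angle at most $\frac{\alpha}{2}$ with $\ell$, so $\psi\le\frac{\alpha}{2}$ and $\mathrm{dist}(s,\ell(c_1,c_2))=d_1\sin\psi\le d_1\sin\frac{\alpha}{2}=r$, which gives (a).

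The hypotheses $r<\frac{\alpha}{4}|sa|$ and $c_1\ne c_2$ serve only to guarantee that $c_2$ lies strictly beyond $c_1$ along the wedge, i.e. $h-r>d_1$, by \Cref{lem:construction:diskradius}. This ensures the line is well defined and not vertical. The remaining obstacle is the containment of the chord in (b), which I would settle using the same shrunken-wedge picture: the strip's chord on $x=h$ is contained in the intersection of $x=h$ with the $r$-neighbourhood of the shrunken wedge, which is exactly the original wedge's segment $ab$.
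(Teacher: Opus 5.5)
Your proof is correct and follows essentially the paper's route. In (a), your observation that $c_2$ lies in the translate of the wedge with apex $c_1$, so that $\psi\le\alpha/2$ and $\mathrm{dist}(s,\ell(c_1,c_2))=d_1\sin\psi\le r$, is a cleaner version of the paper's monotonicity argument, which reduces to the extreme case where $D_2$ touches $sa$ and $sa$ becomes a boundary line of $U(D_1,D_2)$; in (b) you use the same chord-length bound $2r/\cos\phi\ge 2r$, and you additionally address the containment of the chord in the segment $ab$, which the paper's proof skips. Your shrunken-wedge sketch for that containment closes once you note that each chord point on $x=h$ projects orthogonally onto $\ell(c_1,c_2)$ at $x$-coordinate at least $h-r\sin\psi>d_1$, hence onto the ray $\overrightarrow{c_1c_2}$ inside the shrunken wedge, so the chord lies in the original wedge, whose intersection with $x=h$ is exactly $ab$.
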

\begin{proof}
    Note that the boundary of $U(D_1, D_2)$ consists of two lines parallel to $\ell(D_1, D_2)$, translated a distance $r$ along the directions orthogonal to $\ell(D_1, D_2)$.
    Because $D_1$ is tangent to both $sa$ and $sb$, then $c_1$ lies on the angle bisector of $\angle asb$. We may assume w.l.o.g.\ that the angle bisector is the positive $x$-axis, and that $c_2$ lies on or above the $x$-axis.

    We observe that if $c_2$ lies on the $x$-axis, then $\ell(D_1, D_2)$ passes through $s$.
    Moreover, as $c_2$ moves up, the distance from $s$ to its orthogonal projection on $\ell(D_1, D_2)$ increases monotonically.
    Thus, it suffices to check that this distance is at most $r$ when $c_2$ is at its maximum height, that is, when $D_2$ is tangent to $sa$. Indeed, as $D_1$ was also tangent to $sa$ and both had radius $r$, by the earlier observation, $sa$ lies on the boundary of $U$, proving (a).

    For (b), note that the endpoints of the intersection of the tube with $ab$ lie on opposite ends of the boundary of $U(D_1, D_2)$.
    Consequently, if $\ell(D_1, D_2)$ makes an angle $\beta$ with $ab$, the length of the intersection is given by $2r\sec\beta$, which is bounded below by $2r$.
\end{proof}

\begin{lemma}\label{lem:termination}
 Let $D_1$ and $D_2$, resp., be the inscribed disks of $D_s$ and $D_p$.
 Then the tube $U(D_1, D_2)$ contains the points $s$ and $p$, and we have $U(D_1, D_2)\subseteq U(D_s,D_p)$.
\end{lemma}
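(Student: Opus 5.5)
We argue in three steps: first reduce to \Cref{lem:construction:tubegeneric} applied twice, once from each end, and then compare tubes.

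\textbf{Setup.} Recall the construction: $D_s$ and $D_p$ are homothets of $D$ of diameter $\frac{\alpha}{5}|sp|$ inside $\Delta(sp)$, positioned closest to $s$ and to $p$, respectively. By \Cref{lem:parallel}, every side of $\Delta(sp)$ is parallel to a pair of opposite sides of $D$. Hence the extremal position of $D_s$ has two sides lying on the segments $sa$ and $sb$; otherwise it could be moved closer to $s$. Likewise, $D_p$ has a side on $ab$.

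The inscribed disks $D_1\subseteq D_s$ and $D_2\subseteq D_p$ touch the sides of the $3k$-gons at their midpoints. So $D_1$ is tangent to $sa$ and $sb$, and $D_2$ is tangent to $ab$. Their common radius $r$ is at most half the diameter, $r\le\frac{\alpha}{10}|sp|<\frac{\alpha}{4}|sa|$, since $|sp|\le|sa|$. Their centers are $c_s\ne c_p$ by \Cref{cor:neq}.

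\textbf{Step 1: $s\in U(D_1,D_2)$.} The hypotheses of \Cref{lem:construction:tubegeneric} hold, so part (a) gives $s\in U(D_1,D_2)$ directly.

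\textbf{Step 2: $p\in U(D_1,D_2)$.} This is the main obstacle, because the lemma is not symmetric in $s$ and $p$. The point $p$ lies on $ab$, and part (b) only says the tube meets $ab$ in a segment of length at least $2r$. It does not say that this segment contains $p$.

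I would argue directly. Place $s$ at the origin with the bisector as the positive $x$-axis, so $c_s$ lies on the axis. Let $h$ be the height of $\Delta(sp)$. Then $D_2$ has center $c_p=(h-r,y_2)$, and $p=(h,y_p)$. Because $D_p$ is chosen closest to $p$, one has $|y_2-y_p|\le\max(0,\ \text{slack})$: either $c_p$ is the point of $ab$ directly above $p$, shifted inward by $r$, or $D_p$ is pushed against a side $sa$ or $sb$ with $p$ lying beyond it toward that vertex.

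Since $c_s$ is on the axis, the line $\ell(D_1,D_2)$ meets the line $x=h$ at height $y_2\cdot\frac{h-x_1}{h-r-x_1}$, a slight amplification of $y_2$. In the first case this height is within about $r\cos\beta$ of $y_p$. In the second case, $p$ is near a corner $a$ or $b$, and the same monotonicity used in the proof of \Cref{lem:construction:tubegeneric}(a) applies: when $D_2$ is tangent to $sa$, the tube's boundary contains $sa$, so the corner $a$ lies in the closed tube.

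I would do this comparison carefully, using the smallness $r\le\frac{\alpha}{10}|sp|$ to absorb the amplification factor. That factor is $1+O(r/h)$, and the offsets involved are $O(r)$, so the slack is of order $r^2/h$, which is negligible compared with $r$.

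\textbf{Step 3: $U(D_1,D_2)\subseteq U(D_s,D_p)$.} Since $D_s=D_p+(c_s-c_p)$ are translates, $U(D_s,D_p)$ is the Minkowski sum of the segment $c_sc_p$ with $D_s$. Similarly, $U(D_1,D_2)$ is $c_sc_p$ plus the disk of radius $r$ centered at the origin. That disk, translated to $c_s$, is exactly $D_1\subseteq D_s$. Taking Minkowski sums with the same line preserves inclusion, so extending both sums along $\ell$ to full tubes gives $U(D_1,D_2)\subseteq U(D_s,D_p)$.
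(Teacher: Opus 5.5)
Your proposal is correct and follows the same outline as the paper. \Cref{lem:construction:tubegeneric}(a) gives $s\in U(D_1,D_2)$, and for $p$ you use the paper's dichotomy: either $D_p$ touches $ab$ at $p$, or $D_p$ is pushed into the corner at $a$ and also has a side on $sa$. Three local points differ.

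In the first case, the amplification estimate is unnecessary. By your own description $|pc_p|=r$, so $p\in\partial D_2\subseteq U(D_1,D_2)$; this is exactly what the paper observes. Your sketched comparison is true but superfluous.

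In the corner case you stop at ``$a$ lies in the closed tube''. You still need one line: the tangency point $y$ of $D_2$ with $ab$ lies in $D_2\subseteq U(D_1,D_2)$, $p$ lies on the segment $ay$, and the strip is convex. With that added, your argument replaces the paper's route, which combines \Cref{lem:construction:tubegeneric}(b) with the estimate $|ay|=r/\tan\frac{\pi-\alpha}{4}<2r$. That route also tacitly uses that the tube's chord on $ab$ starts at $a$. Yours is a computation-free convexity argument, which is cleaner.

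In Step~3, the Minkowski sum of the \emph{segment} $c_sc_p$ with a centered copy of $D_s$ is $\conv(D_s\cup D_p)$, not the tube. The correct identity is $U(D_s,D_p)=\ell(D_s,D_p)+(D_s-c_s)\supseteq \ell(D_s,D_p)+(D_1-c_s)=U(D_1,D_2)$. Your last sentence arrives at this, and it settles an inclusion the paper leaves implicit.
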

\begin{proof}
   We may assume that the angle bisector of the cone $C$ is the positive $x$-axis. Since $p$ lies on the segment $ab$, the $3k$-gon $D_p$ is tangent to $ab$. 
   Moreover, since two opposite sides of $D_p$ are parallel to $ab$ by \Cref{lem:parallel}, the inscribed disk $D_2$ must be tangent to $ab$ as well. Part (a) of Lemma~\ref{lem:construction:tubegeneric} now implies $s\in U(D_1, D_2)$.

    Assume without loss of generality that $p$ lies on or above the $x$-axis. We distinguish between two cases.
    If $p$ is sufficiently far away from $a$, then $p\in \partial D_2\subset D_p$.
    In this case, $D_p$ already hits the point $p=p'$ without any translation, so we are done.
    If $p$ is close to $a$, then $D_p$ must also be tangent to $sa$.
    We apply part (b) of Lemma~\ref{lem:construction:tubegeneric}.

    Consider the second case and assume without loss of generality that $D_2$ is tangent to $sa$; see~\Cref{fig:tube}. In this case, $c_p$ must lie on the angle bisector of $\angle sab$.
    Let $y$ be the intersection point of $D_2$ and $ab$.
    The distance from $c_p$ to $y$ is $r$, and the segment $c_p y$ is orthogonal to $ab$.
    Note that $p$ must lie on the segment $ay$ since otherwise we would be in the setting of the first case.
    Thus, it is enough to compare $|ay|$ to the length of the intersection of $U(D_1, D_2)$ with $ab$.

    Considering the right triangle formed by $a$, $c_p$, and $y$, we obtain
    $|ay| = r/\tan((\pi - \alpha)/4) < 4r/(\pi-\alpha)$
    as $\tan(x) > x$ for $0 < x < 1$.
    Now, for $0 < \alpha < 1$, an easy calculation shows
    $ |ay| < 4r/(\pi-\alpha) < 2r $,
    so $ay$ lies within the intersection of $U(D_1, D_2)$ and $ab$.
\end{proof}
When $D_p$ continuously moves towards $p$ (resp., $D_s$ moves towards $s$) in the tube $U(D_s,D_p)$, the translation indeed terminates by the time it reaches $p$ (resp., $s)$.

\subparagraph{Empty Region Property.}
We have shown that for every edge $sp$ of the $\Theta$-graph $G$ (equivalently, for every nonempty cone $C$ with apex $s$), $H_0$ contains a corresponding edge $s'p'$.
The construction also guarantees the following empty region property for every $s'p'\in E(H_0)$. Let $R_{p's'}=\conv(D_p\cup D_s)$, where $D_s$ and $D_p$ are the $3k$-gons in their final position. 

\begin{lemma}\label{lem:emptyregion}
    For every $s'p'\in E(H_0)$, we have $\mathrm{int}(R_{p's'})\cap S=\emptyset$.
\end{lemma}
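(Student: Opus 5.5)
We prove \Cref{lem:emptyregion} by decomposing $R_{p's'}=\conv(D_p\cup D_s)$, with $D_s,D_p$ in their final positions, into three pieces and showing that none of them has a point of $S$ in its interior. Since $D_s$ and $D_p$ are translates of each other, their convex hull is the union of $D_s$, $D_p$, and the set swept by $D_s$ as it is translated along the segment joining the final centers $c_s^*$ and $c_p^*$. The two final centers lie on the line $\ell(D_s,D_p)$, because both polygons moved along this common line in opposite directions. The swept region splits into the part swept by $D_s$ during its own motion, the part swept by $D_p$ during its motion, and the part between the two initial positions. So the plan is to show that each of the following has no point of $S$ in its interior: (i) the region swept by the initial positions, namely $\conv(D_s^0\cup D_p^0)$; (ii) the region swept by $D_s$ from $D_s^0$ to its final position; and (iii) the same for $D_p$.

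For (ii) and (iii) we use the stopping rule. The translation stops at the first moment the boundary hits a point of $S$. Suppose a point $x\in S$ lay in the interior of some intermediate translate $D_s^t$. By continuity, and because $D_s^0$ contains no points of $S$ in its interior (shown in the next step), there would be an earlier time $t'<t$ with $x\in\partial D_s^{t'}$, contradicting the choice of the stopping time. Hence every intermediate translate, and so their union, has empty interior with respect to $S$. A point in the interior of the union but on the boundary of every translate is impossible here, because the union of a continuous family of translates of a convex body along a line is the hull of the two extreme positions. Its interior points either lie in the interior of some translate, or lie on a common tangent segment, which we handle by a short perturbation argument.

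For (i), the claim is that $\conv(D_s^0\cup D_p^0)\subseteq\Delta(sp)$ by convexity, and that $\mathrm{int}\,\Delta(sp)\cap S=\emptyset$ except for points in the cone beyond the line through $p$. This uses the defining property of the $\Theta$-graph edge: $p$ minimizes the projection onto the bisector $\ell_C$ among points of $S$ in $C$. Therefore no point of $S\setminus\{s\}$ lies in $C$ strictly closer along $\ell_C$, so $\mathrm{int}\,\Delta(sp)\cap S=\emptyset$. The vertex $s$ itself is not in the interior of $\Delta(sp)$, and $p$ lies on its boundary side.

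Finally, we combine the pieces. $R_{p's'}$ is the union of the swept region of $D_s$, the hull of the initial positions, and the swept region of $D_p$, all arranged along the common direction $\overrightarrow{c_sc_p}$. By \Cref{lem:termination} the motions stay inside the tube $U(D_s,D_p)$, so the union is exactly the convex hull of the two final positions. An interior point of $R_{p's'}$ therefore lies in the interior of one piece, or on the shared boundary between two consecutive pieces. In the latter case it is in the interior of an intermediate translate, which the stopping argument already excludes.

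The main obstacle is step (i). It requires that the full initial $3k$-gons, and not just their centers, lie inside $\Delta(sp)$, which holds by construction. It also requires a careful treatment of points of $S$ lying in the open cone but exactly on the segment $ab$ other than $p$. These have the same projection as $p$, so they are on $\partial\Delta(sp)$ and not in its interior, which is consistent with the claim.
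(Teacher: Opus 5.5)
Your proof is correct and takes essentially the same route as the paper. The initial hull $\conv(D_s^0\cup D_p^0)$ lies in $\Delta(sp)$, whose interior contains no point of $S$ by the $\Theta$-graph rule, and by the stopping rule the translations sweep through no point of $S$; your splitting of $R_{p's'}=D+[c_s^*,c_p^*]$ into three pieces, together with the observation that every interior point of the hull is interior to some intermediate translate, simply makes explicit the invariant the paper states (the ``common tangent segment'' alternative is vacuous since those segments lie on $\partial R_{p's'}$, and \Cref{lem:termination} is needed only for the existence of the final positions, not for the union identity).
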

\begin{proof}
The construction of the $\Theta$-graph guarantees that $\mathrm{int}(\Delta(sp))\cap S=\emptyset$. In their initial position, both $D_s$ and $D_p$ are contained in $\Delta(sp)$. At that time, we have $\conv(D_p\cup D_s)\subset \Delta(sp)$, 
hence $\mathrm{int}(\conv(D_p\cup D_s))\cap S=\emptyset$. When we continuously translate $D_p$ and $D_s$, then $\conv(D_p\cup D_s)$ increases by a region swept by $D_p$ and $D_s$. Since they do not sweep through any point in $S$, the continuous translation maintains the invariant $\mathrm{int}(\conv(D_p\cup D_s))\cap S=\emptyset$, and it implies $\mathrm{int}(R_{p's'})\cap S=\emptyset$ at the end of the translation.
\end{proof}

The following lemma records the properties of the two cases mentioned in the proof of \Cref{lem:termination}: These properties will be used in the crossing analysis (\Cref{sec:crossing}).

\begin{lemma}\label{lem:cases}
For every $s'p'\in E(H_0)$, let $D_s$ and $D_p$ be the $3k$-gons in their final position. At least one of the following statements holds:
\begin{enumerate}
    \item the two lines on the boundary of $U(D_p,D_s)$ are parallel to a ray on the boundary of $C$;
    \item $p=p'$ and $D_p\subset \Delta(sp)$. 
\end{enumerate}
\end{lemma}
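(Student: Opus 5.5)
Fix an edge $sp\in E(G)$ and its cone $C$ with apex $s$. Assume, as in the proof of \Cref{lem:termination}, that the bisector of $C$ is the positive $x$-axis and that $p$ lies on or above it. Let $a$ and $b$ be the other two vertices of $\Delta(sp)$, with $a$ on the upper boundary ray of $C$.

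The plan is to split into the same two cases as the proof of \Cref{lem:termination}. The key point is that the direction of the tube never changes. The tube $U(D_p,D_s)$ is determined by the line $\ell(D_s,D_p)$ through the centers and by the common width of the two translates. During the continuous translation, $c_s$ and $c_p$ move along $\ell(D_s,D_p)$ itself, along the rays $\overrightarrow{c_pc_s}$ and $\overrightarrow{c_sc_p}$. Hence the line $\ell(D_s,D_p)$ and the tube $U(D_s,D_p)$ are the same in the initial and final positions. So it suffices to determine the direction of $\ell(D_s,D_p)$ in the initial position.

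In the initial position, $D_s$ is the translate of $D$ closest to $s$ inside $\Delta(sp)$. By \Cref{lem:parallel}, $D_s$ has sides parallel to $sa$ and $sb$, so it touches both rays of $C$. In particular $D_s$ is tangent to $sa$ along a side parallel to $sa$.

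\textbf{Case 1: $D_p$ in its initial position also touches the ray $sa$.} This is the case where $p$ is close to $a$. Again by \Cref{lem:parallel}, $D_p$ meets $sa$ along a side parallel to $sa$. Since $D_s$ and $D_p$ are translates of the same centrally symmetric polygon, both touching the line through $sa$ from the same side, their centers are at the same distance from that line. So $\ell(D_s,D_p)$ is parallel to $sa$. The outer tangents of two translates of a centrally symmetric body are the two supporting lines parallel to the center line, and one of them is the line through $sa$ itself. Hence both boundary lines of $U(D_p,D_s)$ are parallel to the boundary ray $sa$ of $C$, which is statement 1. The symmetric situation with $sb$ is handled in the same way.

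\textbf{Case 2: $D_p$ in its initial position does not touch $sa$ (or $sb$).} Then $D_p$ is pushed only against $ab$, and it is the translate closest to $p$. I will argue, as in the first case of the proof of \Cref{lem:termination}, that $p$ lies on $\partial D_p$ already in the initial position. If $D_p$ is tangent to $ab$ and can still slide along $ab$ inside $\Delta(sp)$, then "closest to $p$" forces $D_p$ to contain $p$ on its side lying on $ab$. If it did not, sliding toward $p$ would decrease the distance, unless the slide is blocked by $sa$ or $sb$, which is Case 1. So $p\in \partial D_p$ in the initial position, and the translation stops immediately with $p'=p$ and $D_p$ unmoved. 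Therefore $D_p\subset\Delta(sp)$, which is statement 2.

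The main obstacle is making the case distinction clean. I must verify that "closest to $p$" really yields either contact with a boundary ray of $C$ or $p\in D_p$, i.e.\ that the translate minimizing distance to $p$ within $\Delta(sp)$ cannot be stuck elsewhere. I would handle this with the convexity argument above: the set of admissible centers is a convex polygon (a shrunken copy of $\Delta(sp)$), and we minimize a convex function over it. At a minimizer not containing $p$, some constraint must be active in the direction toward $p$, and that constraint is a side $sa$ or $sb$.

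I also need that "first point hit" is $p$ itself in Case 2, with no earlier contact with another point of $S$. This holds because no motion occurs at all: the translation stops as soon as the boundary meets a point of $S$, and it already contains $p$.
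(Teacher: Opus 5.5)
Your proof is correct and is essentially the paper's argument. It uses the same dichotomy: either $p\in\partial D_p$ already in the initial position, which gives statement~2, or $D_p$ is wedged against $ab$ and a boundary ray of $C$, so that $\ell(D_s,D_p)$, and hence the tube (which the translation along $\ell(D_s,D_p)$ leaves unchanged), is parallel to that ray. You split the cases by contact with $sa$ or $sb$ rather than by whether $p\in\partial D_p$, and your sliding/convexity argument justifies a step the paper only asserts, namely that $p\notin\partial D_p$ forces $D_p$ to touch $sa$ or $sb$.
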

\begin{proof} 
By \Cref{lem:parallel}, each side of $\Delta(sp)$ is parallel to two opposite sides of $D_p$. 
At its initial position, one side of $D_p$ is contained in the side $ab$ of $\Delta(sp)$.
If $p\in \partial D_p$ at this position, then $p=p'$ and $D_p\subset \Delta(sp)$. Otherwise, $D_p$ is initially tangent to two sides of $\Delta(sp)$: $ab$ and either $sa$ or $sb$. As $D_s$ is initially tangent to $sa$ and $sb$, both $D_s$ and $D_p$ are tangent to a boundary ray $\overrightarrow{r}$ of the cone $C$. In this case, the line $c_sc_p$ is parallel to $\overrightarrow{r}$. Both outer common tangents of $D_s$ and $D_p$ are parallel to $\overrightarrow{r}$ before translation, and these lines form the boundary of the tube $U(D_s,D_p)$. The translation does not change the tube $U(D_s,D_p)$.
\end{proof}

\subsection{Overlay and Pruning}
\label{ssec:pruning}

\subparagraph{Overlay of Three $\Theta$-Graphs.}
In \Cref{ssec:theta}, we have constructed a modified $\Theta$-graph $H_0$ from the classical $\Theta$-graph with $k$ cones of aperture $\alpha=2\pi/k$. 
If we rotate the $k$ cones counterclockwise through $\frac13 \alpha$ (resp., $\frac23\alpha$), this construction yields another modified $\Theta$-graph $H_0'$ (resp., $H_0'')$. Note that the boundaries of the cones of all three $\Theta$-graphs are parallel to some of the sides of the regular $3k$-gon $D$. 
(This is why we define $D$ to be a regular $3k$-gon; the stretch analysis of $H_0$ in \Cref{ssec:stretch1} would go through even if $D$ is a regular $k$-gon.)
Let $H_1$ be the union of these three modified $\Theta$-graphs, that is, let $E(H_1)=E(H_0)\cup E(H_0')\cup E(H_0'')$. 

\subparagraph{Pruning.} 
We define a subgraph $H_2\subseteq H_1$. Recall that the edges $s'p'\in E(H_1)$ correspond to edges $sp$ of three $\Theta$-graphs. We first prune some of the edges from these $\Theta$-graphs, and define $E_2$ as the set of edges $s'p'$ corresponding to the surviving edges $sp$. 

We begin by classifying the edges of the three $\Theta$-graphs by direction and length, and then prune the edges in each class independently. 
Assume that the union of the three $\Theta$-graphs corresponds to cones $\mathcal{C}=\{C_1,\ldots , C_{3k}\}$ in an arbitrary order (where each $\Theta$-graph contributes $k$ cones, and the cones may overlap), and let $C_i(s)$ be the translate of $C_i$ with apex $s$ for all $i\in \{1,\ldots, 3k\}$.
Let $\varphi=\frac54$. For every $i\in \{1,\ldots, 3k\}$ and every $j\in \mathbb{Z}$, let $E_{i,j}$ be the set of edges $sp$ of the $\Theta$-graphs corresponding to a cone $C_i(s)$ 
for some $s\in S$  
with length $\varphi^{j-1}\leq |sp|<\varphi^j$. 

For fixed $i,j$, we describe the pruning process for $E_{i,j}$. Assume (by applying a rotation for the sake of the pruning process) that the angle bisector of the cone $C_i$ is the positive $x$-axis. For every edge $sp\in E_{i,j}$, we define an \emph{exclusion region}  (see \Cref{fig:exclusion} for an illustration):
\[
    A(sp) =\left[x(s) -\frac{1}{10}|sp|,x(s)\right] \times \left[y(s)-\frac{\alpha}{20}|sp|, y(s)+\frac{\alpha}{20}|sp|\right].
\]

\begin{figure}[tbh]
        \centering
        \includegraphics[width=.75\textwidth]{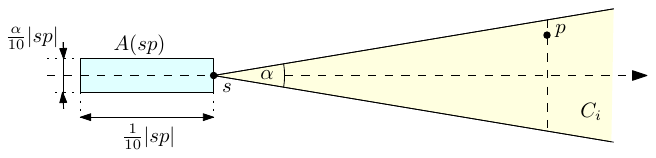}
        \caption{The exclusion region $A(sp)$ for an edge $sp$ (the figure is not to scale)}
        \label{fig:exclusion}
\end{figure}

We prune some of the edges of $E_{i,j}$ as follows. Sort the edges in $E_{i,j}=\{s_1p_1,\ldots , s_m p_m\}$ such that $x(s_\ell)$ is monotonically nonincreasing. Initialize $E^*_{i,j}:=E_{i,j}$. For $h=1$ to $m$, if $s_h p_h\in E^*_{i,j}$, then remove from $E^*_{i,j}$ all other edges $s_g p_g$ such that $s_g\in A(s_h p_h)$. 

Finally, we construct the graph $H_2=(S,E(H_2))$: Initialize $E(H_2)=\emptyset$. For all $i,j$, and all $sp\in E_{i,j}^*$, add the edge $s'p'$ to $E(H_2)$. In \Cref{sec:stretch}, we show that $H_2$ is a $(1+\eps)$-spanner for $S$; and in \Cref{sec:crossing} we show that every edge of $H_2$ has $O(\eps^{-3}\log \frac{1}{\eps})$ crossings. For convenience, let $E_{i,j}(H_2)$ denote the set of edges in $E(H_2)$ associated with edges in $E^*_{i,j}$.

\subparagraph{Packing Property.}
Consider the set of edges $E^*_{i,j}$ at the end of the pruning process. The associated exclusion regions $A(sp)$, $sp\in E^*_{i,j}$, are not necessarily disjoint; see \Cref{fig:overlap}. However, their intersection is limited in some sense. Specifically, for every $sp\in E^*_{i,j}$, let $B(sp)$ be obtained from $A(sp)$ by a dilation of factor $\frac13$ with center $s$; see \Cref{fig:overlap}.

\begin{restatable}[]{lemma}{lempacking}
\label{lem:packing}
 For all $i,j$, the rectangles $B(sp)$ for all edges $sp\in E^*_{i,j}$ are disjoint. 
\end{restatable}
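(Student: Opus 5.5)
We argue by contradiction, using only the greedy order of the pruning and the fact that all edges of $E_{i,j}$ have comparable lengths. Fix $i,j$ and rotate so that the bisector of $C_i$ is the positive $x$-axis, as in the pruning step. For an edge $sp\in E_{i,j}$ write $L=|sp|$. Since $B(sp)$ is the image of $A(sp)$ under the dilation of factor $\frac13$ centered at $s$, and $s$ is the midpoint of the right side of $A(sp)$, we get
\[
B(sp)=\left[x(s)-\tfrac{1}{30}L,\,x(s)\right]\times\left[y(s)-\tfrac{\alpha}{60}L,\,y(s)+\tfrac{\alpha}{60}L\right].
\]
Also, all edges of $E_{i,j}$ satisfy $\varphi^{j-1}\le L<\varphi^j$. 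Hence any two of them have length ratio less than $\varphi=\frac54$.

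Suppose two distinct edges $s_hp_h, s_gp_g\in E^*_{i,j}$ have intersecting rectangles $B(s_hp_h)\cap B(s_gp_g)\neq\emptyset$. Let $s_hp_h$ be the one handled first in the sorted order, so $x(s_g)\le x(s_h)$. Since $s_hp_h$ survives to the end, it was still in $E^*_{i,j}$ when it was handled. At that moment every other edge with apex in $A(s_hp_h)$ was removed. Because $s_gp_g$ survives, we must have $s_g\notin A(s_hp_h)$. We derive a contradiction by showing $s_g\in A(s_hp_h)$.

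For the $x$-coordinate: the $x$-projections of the two $B$-rectangles overlap. The right endpoint of $B(s_gp_g)$ is $x(s_g)\le x(s_h)$, so overlap forces $x(s_g)\ge x(s_h)-\frac{1}{30}|s_hp_h|$. Therefore $x(s_g)\in[x(s_h)-\frac1{10}|s_hp_h|,\,x(s_h)]$.

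For the $y$-coordinate: the $y$-projections overlap, so
\[
|y(s_g)-y(s_h)|\le \tfrac{\alpha}{60}\left(|s_gp_g|+|s_hp_h|\right)<\tfrac{\alpha}{60}(\varphi+1)|s_hp_h|=\tfrac{2.25\,\alpha}{60}|s_hp_h|<\tfrac{\alpha}{20}|s_hp_h|.
\]
Together these give $s_g\in A(s_hp_h)$, which is the desired contradiction.

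Two details need a short check. First, ties $x(s_g)=x(s_h)$ are handled by the same argument, since we only used that $s_hp_h$ is processed first. Second, the case $s_g=s_h$ (two edges in the same class with a common apex) is also covered, because then $s_g=s_h\in A(s_hp_h)$ trivially.

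The main point to get right is the $y$-bound. The factor $\frac13$ in the dilation is needed because the heights of the $B$-rectangles come from two possibly different lengths, whose ratio is bounded only by $\varphi$. The needed inequality is $(1+\varphi)/60<3/60$, which holds for $\varphi=\frac54$. In the $x$-direction, the factor $\frac13$ is more than enough.
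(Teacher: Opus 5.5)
Your proof is correct and is essentially the paper's argument: the sorted order gives $x(s_g)\le x(s_h)$, survival of $s_gp_g$ forces $s_g\notin A(s_hp_h)$, and the length ratio $<\varphi=\frac54$ gives the key inequality $\frac{\alpha}{60}(1+\varphi)<\frac{\alpha}{20}$. The paper uses this to show the two $B$-rectangles are separated by the horizontal line $y=y(s_1)+\frac{\alpha}{60}|s_1p_1|$, whereas you argue the contrapositive, that overlapping $B$-rectangles force $s_g\in A(s_hp_h)$, and you also treat ties and a common apex explicitly.
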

\begin{figure}[tbh]
        \centering
        \includegraphics[width=.55\textwidth]{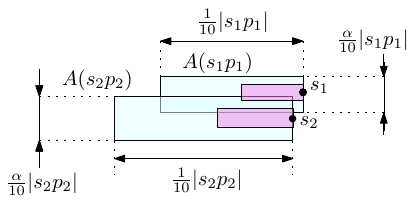}
        \caption{The rectangles $A(s_1p_1)$ and $A(s_2p_2)$ overlap, but $B(s_1p_1)$ and $B(s_2p_2)$ are disjoint.}
        \label{fig:overlap}
\end{figure}
\begin{proof}
    Suppose, for the sake of contradiction that there exists $i,j$ and two distinct edges $s_1p_1,s_2p_2\in E^*_{i,j}$ such that $B(s_1p_1)\cap B(s_2p_2)\neq \emptyset$. Since $B(s_1p_1)\subset A(s_1p_1)$ and $B(s_2p_2)\subset A(s_2p_2)$, then $A(s_1p_1)\cap A(s_2p_2)\neq \emptyset$. Assume w.l.o.g.\ that the angle bisector of the cone $C_i$ is the positive $x$-axis and $x(s_2)\leq x(s_1)$. We may also assume  (by a reflection in the $x$-axis if necessary) that $y(s_1)\leq y(s_2)$; see \Cref{fig:overlap}.

    Since $x(s_2)\leq x(s_1)$, then the pruning process has considered edge $s_1p_1$ before $s_2p_2$. Therefore $x(s_1)-\frac{1}{10}|s_1p_1|\leq x(s_2)\leq x(s_1)$. 
    Since $s_2p_2$ has not been removed in pruning, then $s_2\notin A(s_1p_1)$. In particular, this means that $|y(s_1)-y(s_2)|>\frac{\alpha}{20}|s_1p_1|$. Recall that every edge $e\in E_{i,j}$ satisfies $\varphi^{j-1}\leq |e|<\varphi^j$, where $\varphi=\frac54$. Consequently $|s_1p_1|< \frac54\, |s_2p_2|$. The rectangle $B(s_1p_1)$ lies below the horizontal line $L:y=y(s_1)+\frac{\alpha}{60}|s_1p_1|$, and rectangle $B(s_2p_2)$ lies above the horizontal line 
    \begin{align*}
    y&=y(s_2)-\frac{\alpha}{60}|s_2p_2|
    = y(s_1)+ |y(s_1)-y(s_2)|- \frac{\alpha}{60}|s_2p_2|\\
    &>y(s_1)+\frac{\alpha}{20}|s_1p_1|-\frac{\varphi\, \alpha}{60} |s_1p_1|
    =y(s_1)+\left(\frac{1}{20}-\frac{1}{48}\right)\alpha |s_1p_1|\\
    &=y(s_1)+\frac{7\alpha}{240}|s_1p_1| 
    >y(s_1)+\frac{\alpha}{60}|s_1p_1|.
    \end{align*}
    This shows that $B(s_1p_1)$ and $B(s_2p_2)$ are on opposite sides of the horizontal line $L$, which is a contradiction.
    \end{proof}

\section{Stretch Analysis}
\label{sec:stretch}

In \Cref{ssec:stretch1} we show that $H_0$ (before the pruning phase) is a $(1+\eps)$-spanner if $\alpha=\eps/10$; and in \Cref{ssec:general}, 
we show that $H_2$ is a $(1+\eps)$ spanner if $\alpha=\eps/16$. Throughout this section, we use the Taylor estimates
\begin{equation}\label{eq:Taylor}
    \frac{23x}{24}<x-\frac{x^3}{3!}<\sin x<x, \hspace{5mm} 1-\frac{x^2}{2}<\cos x<1, \hspace{5mm} x<\tan x < x+2\cdot \frac{x^3}{3}<\frac{7x}{6},
\end{equation}
for $0<x<\frac12$; and assume that $0<\alpha<\frac12$. The Euclidean length of every line segment $ab$ is bounded above by its $L_1$-norm, that is, $|ab|\leq |x(a)-x(b)|-|y(a)-y(b)|$.

\subsection{Graph $H_0$ is a $(1+\eps)$-spanner}
\label{ssec:stretch1}

We prove that the graph $H_0$ (before the pruning phase) is a $(1+\eps)$-spanner. 
Let $s, t\in S$ be two distinct points. Let $C$ be the cone of the $\Theta$-graph with apex $s$ and aperture $\alpha$ that contains $t$, and let $\ell_C$ be the angle bisector of $C$ emanating from $s$.
As $C\cap (S\setminus \{s\})\neq \emptyset$,  the $\Theta$-graph has an edge $sp \in E(G)$ in $C$; and we have also constructed an edge $s'p' \in E(H_0)$.
We would like to construct an $st$-path in $H_0$ by concatenating a shortest path $P(s, s')$ from $s$ to $s'$ with the edge $s'p'$, and a shortest path $P(p', t)$ from $p'$ to $t$ in $H_0$. To that end, we prove the following lemmas to obtain the paths $P(s, s')$ and $P(p', t)$ by induction.

\begin{restatable}[]{lemma}{lemapexcenter}
\label{lem:apexcenter}
    The center $c_s$ of $D_s$ in its initial position satisfies $|sc_s| < \frac14\, |sp|$.
\end{restatable}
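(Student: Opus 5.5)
The plan is to locate the initial position of $D_s$ explicitly and bound its distance from the apex $s$ using only the geometry of the cone $C$. Place $s$ at the origin with the bisector $\ell_C$ along the positive $x$-axis, so the two boundary rays of $C$ make angles $\pm\alpha/2$ with $\ell_C$. Let $h$ be the distance from $s$ to the side $ab$ of $\Delta(sp)$ opposite $s$, so $h\le |sp|$. By definition, $D_s$ is the homothet of $D$ with diameter $\frac{\alpha}{5}|sp|$ inside $\Delta(sp)$ that is closest to $s$. By \Cref{lem:parallel}, the sides $sa$ and $sb$ are parallel to opposite pairs of sides of $D_s$. Hence $D_s$ is wedged into the apex corner and touches both rays; by the symmetry of $\Delta(sp)$ and of $D$ about $\ell_C$ (or at least by the minimality of the distance), its center $c_s$ lies on $\ell_C$.

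The inradius of $D_s$ is at most half its diameter, namely at most $\frac{\alpha}{10}|sp|$. If a convex body with inradius $r$ is centered on the bisector and touches both rays, then the distance from $s$ to its center is $d$, where $d\sin(\alpha/2)=r$. This holds because the supporting side of $D_s$ parallel to $sa$ lies at distance exactly the inradius from $c_s$. Therefore
\[
|sc_s| = \frac{r}{\sin(\alpha/2)} \le \frac{\frac{\alpha}{10}|sp|}{\sin(\alpha/2)} < \frac{\frac{\alpha}{10}|sp|}{\frac{23}{24}\cdot\frac{\alpha}{2}} = \frac{24}{115}|sp| < \frac14|sp|,
\]
using the Taylor estimate $\sin x>\frac{23x}{24}$ from \eqref{eq:Taylor} with $x=\alpha/2<\frac12$.

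The main obstacle is justifying the claim that the initial $D_s$ actually touches both rays with its center on the bisector. In other words, we must check that the apex-wedged position is feasible inside $\Delta(sp)$, so that $D_s$ does not also hit $ab$ first. Feasibility follows because $|sc_s|+r<|sp|\cos(\alpha/2)$ is implied by the bound above. A full bisector-reflection argument is not needed: even if $c_s$ were off the bisector, the apex-closest placement would only decrease $|sc_s|$, and the bound $|sc_s|\le r/\sin(\alpha/2)$ still holds because $c_s$ is at distance at least $r$ from both rays.
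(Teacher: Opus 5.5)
Your proof is correct and is essentially the paper's argument. With $c_s$ on $\ell_C$ and $D_s$ touching both boundary rays, you get $|sc_s| = r/\sin(\alpha/2)$ with apothem $r\le\frac{\alpha}{10}|sp|$, and the same Taylor bound gives $|sc_s|<\frac{24}{115}|sp|<\frac14|sp|$; your feasibility check is a useful addition that the paper leaves implicit.

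One small fix in your closing aside: being at distance at least $r$ from both rays gives the \emph{lower} bound $|sc_s|\ge r/\sin(\alpha/2)$, not an upper bound as you phrase it. What pins $c_s$ to the wedged point on the bisector is this lower bound combined with the minimality in the choice of $D_s$, since that wedged position is feasible.
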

\begin{proof}
    Recall that in its initial position, $D_s$ has half-diameter $r=\frac{\alpha}{10}\, |sp|$ and must be tangent to both rays of the boundary of $C$. 
    Consequently, $c_s$ lies on $\ell_C$.
    Let $w$ be the point at which (the inscribed disk of) $D_s$ intersects one of the boundary rays, and consider the right triangle formed by $s$, $c_s$, and $w$.
    We know $\angle c_s s w  \leq \alpha/2$ as $c_s$ lies on the angle bisector $\ell_C$, and moreover $|c_s w|\leq \frac{\alpha}{10}\,|sp|$.
    Therefore, we have
    \[ |sc_s| 
    \leq \frac{\alpha/10}{\sin(\alpha/2)}|sp| 
    \leq \frac{\alpha/10}{23\alpha/48} |sp| 
    =\frac{24}{115}\, |sp|
    < \frac14\, |sp|.\]
    using Taylor estimates (cf.~\Cref{eq:Taylor}).
\end{proof}

It will be convenient in the stretch analysis to define coordinate axes by taking $s$ to be the origin and the ray $\overrightarrow{st}$ to be the positive $x$-axis. Observe that in this coordinate system, $|x(s)| = 0$, and $|y(s)|=|y(t)| = 0$, and $|st|=|x(s)-x(t)| = |x(t)|$. We introduce a few useful auxiliary points and bounds on their $x$-coordinates, and then proceed to basic computations of the $x$ and $y$-coordinates of $s'$ and $p'$ necessary for the remaining argument.

\begin{figure}[tbh]
    \centering
    \includegraphics[width=0.8\linewidth]{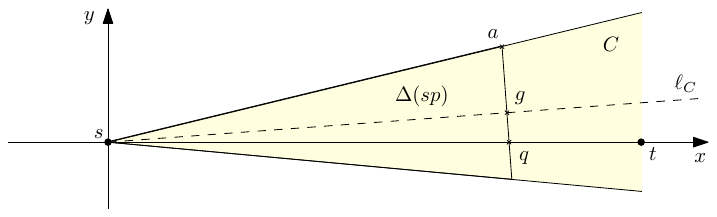}
    \caption{Auxiliary points $q$, $a$ and $g$.}
    \label{fig:auxpoints}
\end{figure}

Recall that $\Delta(sp)$ is the isoceles triangle formed by the two rays on the boundary of $C$ and the line through $p$ orthogonal to the angle bisector $\ell_C$. We call the side of $\Delta(sp)$ on the line through $p$ orthogonal to $\ell_C$ the \emph{third side of} $\Delta(sp)$.

\begin{lemma}\label{lem:auxpoints}
Let $q$ be the intersection of $st$ with the third side of $\Delta(sp)$, and let $a$ be the vertex of $\Delta(sp)$ farthest from the $x$-axis; see~\Cref{fig:auxpoints}. Then, $x(t) \geq x(q) \geq x(a) \geq \frac78\, |sp|$.
\end{lemma}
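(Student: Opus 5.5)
The plan is to prove the chain $x(t)\ge x(q)\ge x(a)\ge \frac78|sp|$ one inequality at a time. Throughout we work in coordinates where $s$ is the origin and $\overrightarrow{st}$ is the positive $x$-axis. Since $t\in C$, the angle $\gamma$ between $\overrightarrow{st}$ and the bisector $\ell_C$ satisfies $\gamma\le \alpha/2$. The third side of $\Delta(sp)$ lies on the line $L$ orthogonal to $\ell_C$ through $p$, which is at distance $h:=\mathrm{proj}_{\ell_C}(p)$ from $s$.

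\emph{First inequality.} We show $x(t)\ge x(q)$ using the $\Theta$-graph selection rule. Because $p$ minimizes the projection onto $\ell_C$ among points of $S\cap C$, and $t\in S\cap C$, we have $\mathrm{proj}_{\ell_C}(t)\ge h$. The segment $st$ runs along the $x$-axis inside $C$, so it crosses $L$ exactly at $q$. We get $|sq|=h/\cos\gamma$ and $|st|=\mathrm{proj}_{\ell_C}(t)/\cos\gamma$. Hence $x(t)=|st|\ge |sq|=x(q)$.

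\emph{Second inequality.} We show $x(q)\ge x(a)$. The third side $ab$ is a segment on $L$ containing $q$, with endpoints $a,b$ on the two boundary rays of $C$. Restricted to $L$, the $x$-coordinate is an affine function. Its value at $q$ therefore lies between $x(a)$ and $x(b)$. We then argue that the vertex farther from the $x$-axis has the smaller $x$-coordinate. In the rotated frame, the direction of $L$ is $\ell_C$'s direction rotated by $90^\circ$, so $L$ makes angle $\pi/2-\gamma$ with the $x$-axis. Moving along $L$ away from the $x$-axis on the side where $\ell_C$ tilts, and toward it on the other, one checks that $x$ decreases in the direction of increasing $|y|$ toward the vertex farther from the axis. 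Concretely, parametrize $L$ as $q+\lambda(-\sin\gamma,\cos\gamma)$ with $\gamma\ge 0$ (after reflecting if needed). Then $x$ decreases as $\lambda$ increases, and the vertex with $\lambda>0$ is the farther one, because $|\lambda_a|\ge|\lambda_b|$ when $\gamma\ge0$. This sign bookkeeping is the main place to be careful. Its goal is to identify $a$ as the endpoint on the side where $x$ decreases, so that $x(q)\ge x(a)$.

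\emph{Third inequality.} We show $x(a)\ge\frac78|sp|$. The vertex $a$ lies on a boundary ray making angle at most $\gamma+\alpha/2\le\alpha$ with the $x$-axis. Since $|sa|=h/\cos(\alpha/2)$, we get $x(a)=|sa|\cos(\angle(sa,st))\ge |sa|\cos\alpha$. Next, $p\in\Delta(sp)$, so $|sp|\le|sa|$. Combining these gives $x(a)\ge|sp|\cos\alpha> |sp|(1-\alpha^2/2)$ by \eqref{eq:Taylor}. For $\alpha<\frac12$ this is at least $\frac78|sp|$, since $\alpha^2/2<1/8$. The only genuinely delicate step is the orientation argument in the second inequality; the rest follows directly from the cone geometry and the Taylor bounds.
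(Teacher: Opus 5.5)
Your proof is correct and follows the paper's three-step argument: $x(t)\ge x(q)$ from the $\Theta$-graph selection rule (the paper cites the emptiness of $\mathrm{int}(\Delta(sp))$; your projection-minimality version is, if anything, more robust when $t$ lies on a boundary ray of $C$), $x(q)\ge x(a)$ because $a$ minimizes the $x$-coordinate on the third side, and $x(a)\ge |sa|\cos\alpha\ge |sp|(1-\alpha^2/2)\ge\frac78|sp|$. You verify the orientation in the middle step explicitly, whereas the paper simply asserts it; your check can be shortened by noting that $|sa|=|sb|$, so the vertex farther from the $x$-axis is the one whose boundary ray makes the larger angle with $\overrightarrow{st}$, and hence has the smaller $x$-coordinate.
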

\begin{proof}
    We notice first that $q$ exists as $t \in C$. By the construction of the $\Theta$-graph, we have $S\cap \mathrm{int}(\Delta(sp))=\emptyset$ and thus $x(t) \geq x(q)$. Moreover, by choice of $a$, $x(a)$ is minimal among all points on the third side, so that $x(q) \geq x(a)$. Finally, 
    since $\angle asq \leq \alpha < \frac12$ and $|sp| \leq |sa|$, Taylor estimates (cf.~\Cref{eq:Taylor}) yield
    \[
    x(a) = |sa|\cos(\angle asq) 
    \geq |sa|\cos\alpha 
    \geq |sp|\left(1- \frac{\alpha^2}{2}\right) 
    \geq \frac78\, |sp|.
    \qedhere
    \]
\end{proof}

\begin{lemma}\label{lem:xycoords}
 We have the following bounds on the $x$ and $y$-coordinates of $s'$ and $p'$.
\begin{enumerate}
\item $\frac{-\alpha}{5}\, |sp| \leq x(s') \leq (\frac14 + \frac{\alpha}{10})|sp|$ and $|y(s')| \leq \frac{\alpha}{5}\, |sp|$.
\item $\frac78\, |sp| \leq x(p') \leq (1 + \frac{\alpha}{5}) |sp|$ and $|y(p')| < \frac{4\alpha}{3}\, |sp|$.
\item If $x(t) \leq x(p')$, then $x(p') - x(t) < \frac{2\alpha}{5}\, |sp|$, otherwise, $x(t) - x(p') \leq |st| - \frac78\, |sp|$.
\end{enumerate}
\end{lemma}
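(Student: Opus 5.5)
I will work in the coordinate system fixed above: $s$ is the origin and $t$ lies on the positive $x$-axis. I will track each of $D_s$ and $D_p$ from its initial position inside $\Delta(sp)$ through its translation. The facts I use are these.

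\begin{itemize}
\item Each $3k$-gon has diameter $\frac{\alpha}{5}|sp|$, so every point of it lies within $\frac{\alpha}{10}|sp|$ of its center.
\item The final $D_s$ touches $s'$ and the final $D_p$ touches $p'$.
\item Each translation stays inside the tube $U(D_s,D_p)$ (by \Cref{lem:termination}). It moves $D_s$ toward $s$ and $D_p$ toward $p$, and it stops no later than the moment the moving polygon reaches $s$, respectively $p$.
\end{itemize}

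Consequently, the final center of $D_s$ lies on the segment from the initial $c_s$ back toward the line through $s$ in the tube direction. Likewise, the final center of $D_p$ lies on the segment from the initial $c_p$ toward $p$.

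\textbf{Part 1 (bounds on $s'$).} By \Cref{lem:apexcenter}, $|sc_s|<\frac14|sp|$ initially. The translation only moves $c_s$ along $\overrightarrow{c_pc_s}$, and it stops by the time $D_s$ reaches $s$. So the final center $c_s'$ satisfies $x(c_s')\le x(c_s)\le \frac14|sp|$, and $x(c_s')\ge -\frac{\alpha}{10}|sp|$, because $D_s$ cannot pass beyond $s$ by more than its radius.

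For the $y$-bound, I use that $s$ lies in the tube (\Cref{lem:termination}) and that $c_s'$ lies on the segment between $c_s$ and the point of $\ell(D_s,D_p)$ nearest $s$. The line $\ell_C$ makes an angle at most $\alpha/2$ with the $x$-axis, since $t\in C$. This gives $|y(c_s')|\le \frac{\alpha}{10}|sp|$ up to lower-order terms. Adding the radius $\frac{\alpha}{10}|sp|$ then yields $x(s')\in[-\frac{\alpha}{5}|sp|,(\frac14+\frac{\alpha}{10})|sp|]$ and $|y(s')|\le\frac{\alpha}{5}|sp|$.

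\textbf{Part 2 (bounds on $p'$).} The initial $D_p$ is tangent to the third side of $\Delta(sp)$ from inside. The translation moves its center toward $s$-side/$p$-side along $\overrightarrow{c_sc_p}$ and stops by the time $D_p$ reaches $p$. So every point of the final $D_p$ is within $\frac{\alpha}{5}|sp|$ of some point of $\Delta(sp)$ near the third side, or within that distance of $p$.

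\begin{itemize}
\item Upper bound on $x(p')$: $x(p)\le |sp|$, so $x(p')\le (1+\frac{\alpha}{5})|sp|$.
\item Lower bound on $x(p')$: I use \Cref{lem:auxpoints}. $D_p$ stays on the far side of a line through $a$ roughly parallel to the third side, since it moves outward in the tube. This gives $x(p')\ge x(a)-O(\alpha^2)|sp|\ge\frac78|sp|$; I need to check that the slack in $x(a)\ge(1-\alpha^2/2)|sp|$ absorbs this.
\item $y$-bound: $p$ lies in $C$, whose half-width at distance about $|sp|$ is $\tan(\alpha)\,|sp|\le\frac{7\alpha}{6}|sp|$, measured from the $x$-axis, which is inside $C$. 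Adding $\frac{\alpha}{10}|sp|$ slack keeps this below $\frac{4\alpha}{3}|sp|$.
\end{itemize}

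\textbf{Part 3 (position of $t$ relative to $p'$).} If $x(t)\le x(p')$: by \Cref{lem:auxpoints}, $x(t)\ge x(q)$. The point $q$ lies on the third side, and that side is within horizontal distance $O(\alpha)|sp|$ of $p$'s projection. Meanwhile $x(p')\le x(p)+\frac{\alpha}{5}|sp|$, and $x(p)-x(q)$ is at most the horizontal extent of the third side, which is $|ab|\sin(\alpha/2)=O(\alpha^2)|sp|$. Together these give $x(p')-x(t)<\frac{2\alpha}{5}|sp|$.

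Otherwise $x(t)>x(p')$, and then $x(t)-x(p')\le |st|-\frac78|sp|$ follows directly from $x(t)=|st|$ and Part 2.

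\textbf{Main obstacle.} The hardest step will be the $y$-bound on $s'$ and the lower bound $x(p')\ge\frac78|sp|$. Both require controlling where the translated polygons end up using only the tube geometry, so they depend on \Cref{lem:cases} and \Cref{lem:termination}. I will first try the crude route: treat the $3k$-gons as their circumscribed disks and absorb every error into the Taylor slack of \eqref{eq:Taylor} with $\alpha<\frac12$.
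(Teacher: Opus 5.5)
\textbf{Gap: the lower bound $x(p')\ge\frac78|sp|$ in Part~2.} Your overall plan matches the paper's: monotone sliding, stopping no later than reaching $s$ or $p$, \Cref{lem:apexcenter}, and \Cref{lem:auxpoints}. The $x$-bounds on $s'$, the upper bound on $x(p')$, and the second case of Part~3 are fine. The lower bound on $x(p')$ is where the proposal breaks.

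Your claim that $D_p$ ``stays on the far side of a line through $a$ roughly parallel to the third side'' is false. Initially $D_p\subseteq\Delta(sp)$ touches the third side from the $s$-side. While it slides (e.g., parallel to the ray $sa$ in Case~1 of \Cref{lem:cases}), its leading boundary still contains points at depth about $\frac{\alpha}{10}|sp|$ inside $\Delta(sp)$. So an argument that only tracks where $D_p$ can be gives at best $x(p')\ge x(a)-\Theta(\alpha)|sp|$. Moreover, there is no slack to absorb any loss: from $x(a)\ge(1-\frac{\alpha^2}{2})|sp|$, the margin $(\frac18-\frac{\alpha^2}{2})|sp|$ tends to $0$ as $\alpha\to\frac12$.

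The missing idea is to use that $p'$ is a point of $S$. Hence $p'\notin\mathrm{int}\,\Delta(sp)$, since the $\Theta$-graph triangle is empty. In addition, $p'\in C$ by \Cref{lem:cases}: either $p'=p$, or $D_p$ starts tangent to a boundary ray of $C$ and slides parallel to it, away from $s$. So $p'$ lies in $C$ on or beyond the third side. Every direction in $C$ has positive $x$-component, so $x(p')\ge x(a)\ge\frac78|sp|$ with no loss.

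The same fact $p'\in C$ gives the $y$-bound directly:
\[
|y(p')|\le x(p')\tan\alpha\le\left(1+\frac{\alpha}{5}\right)\frac{7\alpha}{6}|sp|<\frac{4\alpha}{3}|sp|.
\]
Bounding $|y(p)|$ and then adding $\frac{\alpha}{10}|sp|$ is not justified, because $p'$ can lie farther than that from $p$.

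\textbf{Constant-level steps that do not close for all $0<\alpha<\frac12$.} In Part~1, the discrepancy you call lower-order is actually leading-order. With $|sc_s|<\frac14|sp|$ and the angle bound $\frac{\alpha}{2}$ you only get $|y(c_s)|<\frac{\alpha}{8}|sp|$, hence $|y(s')|<\frac{9\alpha}{40}|sp|$. Use instead that $c_s$ lies at distance exactly the inradius of $D_s$ (at most $\frac{\alpha}{10}|sp|$) from both boundary rays of $C$. Since $\overrightarrow{st}\subset C$, this gives $|y(c_s)|\le|sc_s|\sin\frac{\alpha}{2}\le\frac{\alpha}{10}|sp|$. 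Alternatively, use the paper's monotonicity argument: the $y$-range of $D_s$ contains $0$ both initially and when $D_s$ would reach $s$, hence also at the stopping time, which gives $|y(s')|\le\diam(D_s)$.

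In Part~3, you bound $x(p)-x(q)$ by the full horizontal extent $|ab|\sin\frac{\alpha}{2}\le\frac{7\alpha^2}{12}|sp|$. This yields $x(p')-x(t)\le(\frac15+\frac{7\alpha}{12})\alpha|sp|$, which is below $\frac{2\alpha}{5}|sp|$ only for $\alpha<\frac{12}{35}$. The paper observes that $x(p)\ge x(q)$ forces $p$ to lie between $q$ and the vertex of the third side nearer the $x$-axis. Then $|pq|\le|sg|\tan\frac{\alpha}{2}$, which halves your estimate and suffices.
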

\begin{proof}
    We begin by proving the $x$-coordinate inequalities.
    Observe first that the $x$-coordinates of points on $D_s$ are nonincreasing during the sliding from its initial position. Similarly, the $x$-coordinates of $D_p$ are nondecreasing during the sliding.

    By monotonicity of the $x$-coordinates of $D_s$, $x(s')$ is minimal when $s$ lies on the boundary of $D_s$. 
    Since the $x$-projection of $D_s$ is an interval of length at most $\diam(D_s)=\frac{\alpha}{5} \,|sp|$, we have that $x(s') \geq x(s) - \frac{\alpha}{5}\, |sp| = \frac{-\alpha}{5}\, |sp|$.
    Similarly, we have that $x(p') \leq x(p) + \frac{\alpha}{5}\,|sp| \leq (1+\frac{\alpha}{5})|sp|$.

    For the upper bound on $x(s')$, the maximal $x$-coordinate of any point in $D_s$ in its initial position is at most half a diameter more than $x(c_s) = x(c_s) - x(s)$, so that by Lemma~\ref{lem:apexcenter},
    \[ 
    x(s') 
    \leq x(c_s) + \frac{\alpha}{10}\, |sp| 
    \leq |sc_s| + \frac{\alpha}{10}\, |sp| 
    \leq \left(\frac14+\frac{\alpha}{10}\right)|sp|.
    \]

    For the lower bound on $x(p')$, we observe that $x(p') \geq x(a) \geq \frac78\, |sp|$ by Lemma~\ref{lem:auxpoints}.

    Now, for the distance between $x(p')$ and $x(t)$, notice first that
    we must have $x(q) \leq x(t)$ as $S\cap \mathrm{int}(\Delta(sp))=\emptyset$.
    In the case where $x(t) \leq x(p')$, we can bound $x(p') - x(t) \leq x(p') - x(q)$ as follows.
    Let $g$ be the intersection of $\ell_C$ with the third side, and let $\beta = \angle gsq$ be the angle $st$ makes with $\ell_C$.
    As $x(p') \leq x(p) +\frac{\alpha}{5}\,|sp|$, $x(p') -x(q) \leq x(p) - x(q) + \frac{\alpha}{5}\,|sp|$, so we need an upper bound for $x(p) - x(q)$.
    Supposing that $x(p)\geq x(q)$, $p$ cannot lie between $g$ and $q$.
    Consequently, $x(p)-x(q) = |pq|\sin\beta \leq |pq|\sin\frac{\alpha}{2} \leq \frac{\alpha}{2}\, |pq|$, and
    \[|pq| 
    = |sg|\left(\tan\frac{\alpha}{2}-\tan\beta\right) 
    \leq |sg| \tan\frac{\alpha}{2} 
    \leq \frac{7\alpha}{12}\, |sg| 
    \leq \frac{7\alpha}{12}\, |sp|\]
    by Taylor estimates (cf.~\Cref{eq:Taylor}) and since $|sg| \leq |sp|$. Thus, $x(p)-x(q) \leq \frac{7\alpha^2}{24}\, |sp| \leq \frac{7\alpha}{48}\, |sp|$ as $\alpha < \frac12$.
    We have
    \[ 
    x(p') - x(t) 
    \leq x(p') - x(q) 
    \leq x(p) - x(q) + \frac{\alpha}{5}\, |sp| 
    \leq \left(\frac{7\alpha}{48} +\frac{\alpha}{5}\right) |sp|
    = \frac{83\alpha}{240}\, |sp|
    < \frac{2\alpha}{5}\, |sp|.
    \]
    Suppose instead that $x(t) \geq x(p')$. We have $x(t) - x(p') = |st| - x(p') \leq |st| - \frac78\, |sp|$.

    For the $y$-coordinates, we first find an upper bound for $|y(p')|$.
    Since $p'$ and $t$ lie in $C$,
    \[|
    y(p')|= |x(p')| \tan(\angle p'st) 
    \leq \left(1+\frac{\alpha}{5}\right)|sp| \tan\alpha
    < \frac{11}{10}\cdot \frac{7\alpha}{6} \, |sp|
    = \frac{77\alpha}{60}\, |sp|
    <\frac{4\alpha}{3}\, |sp| 
    \]
    using the assumption $\alpha < \frac12$ and Taylor estimates (cf.~\Cref{eq:Taylor}).

    Next, for $s'$, we show that $|y(s')| \leq \diam(D_s) = \frac{\alpha}{5}\, |sp|$. The $y$-coordinates of $D_s$ are monotone during the translation, and moreover by Lemma~\ref{lem:termination}, $s$ must eventually be hit by $D_s$ after translating far enough. But if some point $q$ satisfies $|y(q)|> \diam(D_s) = \frac{\alpha}{5}\, |sp|$ and lies on $D_s$, then no point of $D_s$ has $y$-coordinate $0$. Consequently, monotonicity ensures $s$ cannot be hit, a contradiction.
\end{proof}

Using the bounds from Lemma~\ref{lem:xycoords} and the assumption $\alpha<\frac12$, we obtain
\begin{corollary}\label{cor:spprime}
$|x(s') - x(p')|  >\frac{23}{40}\, |sp|.$
\begin{proof}
$       |x(s') - x(p')| 
   \geq \frac78\, |sp|  -  \left(\frac14 + \frac{\alpha}{10}\right)|sp|
   = \left(\frac58 -\frac{\alpha}{10}\right)|sp|
   >\frac{23}{40}\, |sp| $.
\end{proof}
\end{corollary}

\begin{lemma}\label{lem:inductionhyp}
If $\alpha < \frac12$, we have $|ss'| < |st|$ and $|p't| < |st|$.
\end{lemma}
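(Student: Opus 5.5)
We work in the coordinate system fixed above: $s$ is the origin and $\overrightarrow{st}$ is the positive $x$-axis, so $|st|=x(t)$. Both inequalities will follow from the coordinate bounds in \Cref{lem:xycoords}, combined with the lower bound $x(t)\ge \frac78|sp|$ from \Cref{lem:auxpoints}. Throughout, we use the $L_1$ estimate $|uv|\le |x(u)-x(v)|+|y(u)-y(v)|$.

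\textbf{The bound $|ss'|<|st|$.} By \Cref{lem:xycoords}(1), $|x(s')|\le \max\{\frac{\alpha}{5},\frac14+\frac{\alpha}{10}\}|sp|=(\frac14+\frac{\alpha}{10})|sp|$ and $|y(s')|\le\frac{\alpha}{5}|sp|$. Adding these gives
\[
|ss'|\le \left(\frac14+\frac{3\alpha}{10}\right)|sp|<\frac{2}{5}|sp|
\]
for $\alpha<\frac12$. This is strictly less than $\frac78|sp|\le x(t)=|st|$.

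\textbf{The bound $|p't|<|st|$.} We split into the two cases of \Cref{lem:xycoords}(3). In both cases we use $|y(p')-y(t)|=|y(p')|<\frac{4\alpha}{3}|sp|$ from \Cref{lem:xycoords}(2).
\begin{itemize}
\item If $x(t)\le x(p')$, then $|x(p')-x(t)|<\frac{2\alpha}{5}|sp|$. Hence $|p't|<(\frac{2\alpha}{5}+\frac{4\alpha}{3})|sp|=\frac{26\alpha}{15}|sp|$. For $\alpha<\frac12$ this is less than $\frac{13}{15}|sp|<\frac78|sp|\le|st|$.
\item If $x(t)>x(p')$, then $|x(t)-x(p')|\le |st|-\frac78|sp|$. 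Hence $|p't|<|st|-(\frac78-\frac{4\alpha}{3})|sp|$. Since $\frac43\alpha<\frac23<\frac78$, we get $|p't|<|st|$.
\end{itemize}

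\textbf{Expected difficulty.} The argument is mostly routine once \Cref{lem:xycoords} is in hand. The only delicate step is the first subcase of the second bound. There the constant $\frac{26}{15}\alpha$ must stay below $\frac78$, which holds only thanks to the assumption $\alpha<\frac12$. The margin is thin, so I will recheck that arithmetic carefully. If it failed, I would fall back on bounding $|y(p')|$ more tightly, using $|y(t)|=0$ and the fact that $p'$ and $t$ lie in a common cone.
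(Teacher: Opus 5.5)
Your proof is correct and follows the paper's argument essentially line for line. Like the paper, you use the $L_1$ bound with \Cref{lem:xycoords}(1) to get $|ss'|<\frac25|sp|<\frac78|sp|\le|st|$, and you handle $|p't|$ by the same two-case split from \Cref{lem:xycoords}(3) together with $|y(p')|<\frac{4\alpha}{3}|sp|$; your arithmetic $\frac{26\alpha}{15}<\frac{13}{15}<\frac78$ checks out, so no fallback is needed.
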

\begin{proof} 
The Euclidean length of the segment $ss'$ is bounded above by its $L_1$ norm. 
\Cref{lem:xycoords} combined with $\alpha < \frac12$ implies that
     \[
      |ss'| \leq |x(s')| + |y(s')| 
      \leq \left(\frac14+\frac{\alpha}{10}\right)|sp| + \frac{\alpha}{5}\, |sp| 
      = \left(\frac14+\frac{3\alpha}{10}\right)|sp|
      <\frac{2}{5}\, |sp| <\frac78\, |sp| <|st| .
      \]
    For the segment $p't$, first assume that $x(t) \geq x(p')$. \Cref{lem:xycoords} combined with  $\alpha<\frac12$ gives
    \[ 
    |p't| \leq   x(t) - x(p') + |y(p')| 
    \leq |st| - \frac78\, |sp| + \frac{4\alpha}{3}\, |sp| 
    < |st| + \left(\frac{2}{3} - \frac78\right)|sp|
    <|st| ,
    \]
   Similarly, if $x(t) \leq x(p')$,  \Cref{lem:xycoords} combined with $\alpha < \frac12$ yields
     \[ 
    |p't| \leq x(p') - x(t) + |y(p')| 
     \leq \left(\frac{2}{5} + \frac{4}{3}\right)\alpha|sp| 
     = \frac{26\alpha}{15}\, |sp|
     <\frac{13}{15} |sp|
     <\frac78\, |sp|
     < |st|
     .\qedhere
     \]
\end{proof}

We can now prove the main result of \Cref{ssec:stretch1}.
\begin{lemma}\label{lem:stretchH0}
    If $\eps < 1$, the graph $H_0$ 
    with cones of aperture $\alpha \leq \eps/10$ is a $(1+\eps)$-spanner.
\end{lemma}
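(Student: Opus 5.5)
We argue by induction on the rank of $|st|$ among all $\binom{n}{2}$ pairwise distances of $S$, proving that $H_0$ contains an $st$-path of length at most $(1+\eps)|st|$. For the base case, take a closest pair $s,t$. Let $sp$ be the $\Theta$-graph edge in the cone $C$ with apex $s$ containing $t$. By \Cref{lem:inductionhyp}, any point $s'\neq s$ or $p'\neq t$ would give a pair strictly closer than $st$, so $s'=s$ and $p'=t$, and $st\in E(H_0)$.

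For the inductive step, consider $s'p'\in E(H_0)$ corresponding to $sp$, with coordinates chosen so that $s$ is the origin and $\overrightarrow{st}$ is the positive $x$-axis. By \Cref{lem:inductionhyp}, $|ss'|<|st|$ and $|p't|<|st|$, so induction gives paths $P(s,s')$ and $P(p',t)$ of lengths at most $(1+\eps)|ss'|$ and $(1+\eps)|p't|$. It then suffices to show
\[
(1+\eps)|ss'| + |s'p'| + (1+\eps)|p't| \le (1+\eps)|st|.
\]

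The key bounds all come from \Cref{lem:xycoords}.
\begin{itemize}
\item Bound $|s'p'| \le |x(p')-x(s')| + |y(p')-y(s')|$, where the $y$-term is at most $(\frac{4\alpha}{3}+\frac{\alpha}{5})|sp| = \frac{23\alpha}{15}|sp|$.
\item Bound $|ss'| \le x(s') + \frac{2\alpha}{5}|sp|$, using $x(s')\ge -\frac{\alpha}{5}|sp|$ together with $|y(s')|\le\frac{\alpha}{5}|sp|$.
\item Treat $|p't|$ in two cases, according to the sign of $x(t)-x(p')$, exactly as in \Cref{lem:inductionhyp}.
\end{itemize}

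In the main case $x(t)\ge x(p')$, the $x$-components telescope. We get $x(s') + (x(p')-x(s')) + (x(t)-x(p')) = |st|$, while the extra factor $\eps$ multiplies only $|ss'|+|p't|$. The remaining error is $O(\alpha)|sp|$ from the $y$-terms and the $x(s')$ offset. The goal is therefore to show
\[
\eps\bigl(|ss'|+|p't|\bigr) + O(\alpha)|sp| \le \eps|st|.
\]
Since $|ss'|\le \frac25|sp|$ and $|p't|\le |st|-\frac78|sp|+\frac{4\alpha}{3}|sp|$, the left side is at most
\[
\eps|st| - \eps\left(\tfrac78-\tfrac25-\tfrac{4\alpha}{3}\right)|sp| + O(\alpha)|sp|.
\]
The saving $\eps\cdot\Omega(1)|sp|$ absorbs the $O(\alpha)|sp|$ error once $\alpha\le\eps/10$.

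In the case $x(t)<x(p')$, we have $|p't|\le\frac{26\alpha}{15}|sp|$. The $x$-telescoping now overshoots $|st|$ by $x(p')-x(t)<\frac{2\alpha}{5}|sp|$, which is again $O(\alpha)|sp|$. The slack is $\eps|st|\ge\frac78\eps|sp|$ (by \Cref{lem:auxpoints}) minus $\eps(|ss'|+|p't|)\le \eps(\frac25+\frac{26\alpha}{15})|sp|$.

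The main obstacle is the constant bookkeeping: one must confirm that the sum of all $O(\alpha)$ coefficients is below $\frac{\eps}{\alpha}\bigl(\frac78-\frac25-\frac{26\alpha}{15}\bigr)$ when $\alpha\le\eps/10$ and $\eps<1$. That quantity is roughly $10\cdot\frac{2}{5}\approx 4$, against roughly $\frac{23}{15}+\frac25+\frac25+\frac{2}{5}\approx 2.7$.

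There is one subtlety. If $s'=s$ or $p'=t$, the corresponding path is empty, and the bounds still hold. If $s'=p'$ this would be a problem, but \Cref{cor:spprime} rules it out.
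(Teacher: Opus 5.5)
Your skeleton matches the paper's: strong induction on $|st|$, \Cref{lem:inductionhyp} to invoke the hypothesis, and $L_1$ estimates from \Cref{lem:xycoords} with telescoping $x$-coordinates. The only real difference is that you take the $\eps$-saving from $|st|-|ss'|-|p't|$ instead of from \Cref{cor:spprime}. The gap is the final constant check, which you flag yourself as the main obstacle: it is wrong, and with the bounds you propose the required inequality fails at $\alpha=\eps/10$.

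First, $x(s')\ge-\frac{\alpha}{5}|sp|$ and $|y(s')|\le\frac{\alpha}{5}|sp|$ do not give $|ss'|\le x(s')+\frac{2\alpha}{5}|sp|$; take $x(s')=-\frac{\alpha}{5}|sp|$ and $|y(s')|=\frac{\alpha}{5}|sp|$. The $L_1$ bound is $|ss'|\le x(s')+\frac{3\alpha}{5}|sp|$, because the negative part of $x(s')$ is paid twice.

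Second, in the case $x(t)<x(p')$ your list of error coefficients omits the term $|y(p')|<\frac{4\alpha}{3}|sp|$ that $|p't|$ contributes to the uninflated sum. Done correctly, $|ss'|+|s'p'|+|p't|\le|st|+\bigl(\frac35+\frac{23}{15}+\frac45+\frac43\bigr)\alpha|sp|=|st|+\frac{64}{15}\alpha|sp|$. That is about $4.27\,\alpha|sp|$, not $2.7\,\alpha|sp|$.

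Third, the available slack $\frac{\eps}{\alpha}\bigl(\frac78-\frac25-\frac{26\alpha}{15}\bigr)$ equals $4.75-\frac{26}{15}\eps$ at $\alpha=\eps/10$. That is about $3$, not $4$, for $\eps$ near $1$.

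So the inequality you need, $\bigl(\frac{64}{15}+\frac{26}{15}\eps\bigr)\alpha\le\frac{19}{40}\eps$, holds at $\alpha=\eps/10$ only for $\eps\lesssim 0.28$. In the main case, the analogous condition $\bigl(\frac{52}{15}+\frac43\eps\bigr)\alpha\le\frac{19}{40}\eps$ fails for $\eps$ above roughly $0.96$. The claim that the saving absorbs the error once $\alpha\le\eps/10$ is therefore unproved.

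The argument can be repaired, but that takes sharper estimates, not just corrected arithmetic. Use $|ss'|\le\bigl(\frac14+\frac{3\alpha}{10}\bigr)|sp|$ (from the proof of \Cref{lem:inductionhyp}) instead of $\frac25|sp|$; this raises the base saving from $\frac{19}{40}\eps|sp|$ to $\frac58\eps|sp|$. Then bound the middle edge to second order: with $\Delta x=x(p')-x(s')>\frac{23}{40}|sp|$ and $|\Delta y|<\frac{23\alpha}{15}|sp|$, you get $|s'p'|\le\Delta x+\frac{(\Delta y)^2}{2\Delta x}<\Delta x+\frac{92}{45}\alpha^2|sp|$. In the case $x(t)<x(p')$ the requirement becomes $\frac{41}{15}\alpha+\frac{92}{45}\alpha^2+\frac{61}{30}\eps\alpha\le\frac58\eps$. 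At $\alpha=\eps/10$ the left side is at most $0.28\eps+0.23\eps^2<\frac58\eps$, and the main case is easier. Alternatively, your crude bounds do suffice for $\alpha\le\eps/16$, which is all that \Cref{ssec:general} uses.

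You also cannot simply borrow the paper's constants. Its last computation rewrites $(1+\eps)\frac{64\alpha}{15}-\frac{23\eps}{40}$ as $\frac15\bigl((1+\eps)\frac{64\alpha}{5}-\frac{23\eps}{8}\bigr)$, which silently turns $\frac{64}{15}$ into $\frac{64}{25}$. The correct condition, $(1+\eps)\frac{64\alpha}{15}<\frac{23\eps}{40}$, also fails at $\alpha=\eps/10$ once $\eps>0.35$.
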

\begin{proof}
    We induct on point pairs $s, t \in S$ in increasing order of their Euclidean distance $|st|$ and show that $H_0$ contains a path $P(s,t)$ of length $|P(s,t)| \leq (1+\eps)|st|$. 
    
    This is clear if $s=t$.
    Now consider arbitrary $s\neq t$. Then $t$ lies in some cone $C$ with apex $s$, and the construction of $H_0$ yields some edge $s'p'$ associated to $C$. By \Cref{lem:inductionhyp} and the induction hypothesis, $H_0$ contains paths $P(s, s')$ and $P(p', t)$ satisfying $|P(s,s')| \leq (1+\eps)|ss'|$ and $|P(p', t)| \leq (1+\eps)|p't|$. We consider the path $P(s, t)$ formed by concatenating $P(s, s')$ with the segment $s'p'$ and $P(p', t)$.
    Assume $s$ is the origin and the ray $\overrightarrow{st}$ is the positive $x$-axis; see \Cref{fig:stretchH0}.
 
\begin{figure}[tbh]
        \centering
        \includegraphics[width=.75\textwidth]{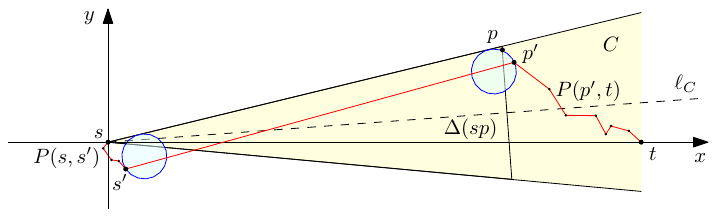}
        \caption{The concatenation of $P(s, s')$, $s'p'$ and $P(p', t)$}
        \label{fig:stretchH0}
\end{figure}
   


   \subparagraph{Upper bounds for distances in $x$-projections.} Next, we compute the total $x$-distance traveled along the path. Although the path begins at $s$ and terminates at $t$, the path could backtrack if $x(s') < x(s)$ or $x(p') > x(t)$. We can bound the backtracking by Lemma~\ref{lem:xycoords}:
    \[ |x(s) - x(s')| + |x(s')-x(p')| + |x(p')-x(t)| \leq |st| + 2\left(\frac{\alpha}{5}\,|sp| + \frac{2\alpha}{5}\,|sp|\right) = |st| + \frac{6\alpha}{5}\,|sp|.\]
    In particular, we have that
    \[|x(s) - x(s') + |x(p') - x(t)| \leq |st| + \frac{6\alpha}{5}\,|sp| - |x(s')-x(p')|.\]

    \subparagraph{Upper bounds for distances in $y$-projections.} Using the bounds in \Cref{lem:xycoords} and the fact that $y(s)=y(t)=0$, the triangle inequality yields
    \[  |y(s) - y(s')|+ |y(s') - y(p')| + |y(p') - y(t)| 
    \leq 2|y(s')| + 2|y(p')| 
    \leq \frac{2\alpha}{5}\,|sp| + \frac{8\alpha}{3}\, |sp| 
    = \frac{46\alpha}{15}\, |sp|.\]

    \subparagraph{Length of $P(s,t)$.}
    We bound the length of the path $P(s,t)$.
    By the induction hypothesis,
    \begin{align*}
    |P(s,t)| &= |P(s, s')| + |s'p'| + |P(p', t)| \\
    &\leq |s'p'| + (1+\eps)(|ss'| + |p't|) \\
        &\leq |x(s')-x(p')| + |y(s') - y(p')| + (1+\eps)(|x(s)-x(s')|+|x(p')-x(t)|) \\
        & \quad + (1+\eps)(|y(s)-y(s')|+|y(p')-y(t)|) \\
        &\leq |x(s')-x(p')| + (1+\eps)(|x(s)-x(s')|+|x(p')-x(t)|) \\
        & \quad + (1+\eps)(|y(s)-y(s')|+|y(s')-y(p')|+|y(p')-y(t)|).
    \end{align*}
    Using the above bounds, we obtain
    \begin{align*}
    |P(s,t)| &\leq |x(s')-x(p')| + (1+\eps)\left(|st|+\frac{6\alpha}{5}\, |sp|- |x(s')-x(p')|\right) + (1+\eps)\frac{46\alpha}{15}\, |sp| \\
    & \leq (1+\eps)|st| + (1+\eps)\left(\frac{64\alpha}{15}|sp|\right) - \eps \left(\frac{23}{40}|sp|\right) \\
    &\leq (1+\eps)|st| + \frac15 \left( (1+\eps)\frac{64\alpha}{5} -\frac{23\eps}{8}\right)|sp|\\
    &< |st|,
    \end{align*}
    provided that $(1+\eps)\frac{64\alpha}{5} <\frac{23\eps}{8}$.
    Assuming $\eps<1$, this holds for all $\alpha\leq\eps/10$.
\end{proof}

The stretch analysis for the graph $H_2$ proceeds via a similar induction argument using the $L_1$-norm. The upper bounds established above for points within a cone $C$ may be adapted directly, but additional estimates on the $x$- and $y$-coordinates of points in the exclusion region $A(sp)$ are required. See \Cref{ssec:general} 
for details.

\subsection{Graph $H_2$ is a $(1+\eps)$-spanner}
\label{ssec:general}

Assume that the three $\Theta$-graphs, $H_0$, $H_0'$ and $H_0''$, jointly correspond to cones $\mathcal{C}=\{C_1,\ldots , C_{3k}\}$ of aperture $\alpha=2\pi/k$. The boundaries of the $3k$ cones with the same apex partition the plane into $3k$ cones of aperture $\frac{\alpha}{3}=2\pi/(3k)$. For $i=1,\ldots , 3k$, let $\widehat{C}_i$ denote the cone of aperture $\frac{\alpha}{3}$ with the same bisector as $C_i$. 

The following lemma plays a crucial role in the stretch analysis of $H_2$.

\begin{lemma}\label{lem:triple}
    Let $s_1p_1, sp\in E_{i,j}$ such that $s_1\neq s$ and $s\in A(s_1p_1)$. 
    Assume further that the bisector of $C_i(s_1)$ is the positive $x$-axis.
  \begin{enumerate}[(a)]
        \item\label{lem:triple:a} For every point $t\in S\cap \widehat{C}(s)\setminus \{s\}$, we have $x(p_1)\leq x(t)$.
        \item\label{lem:triple:b} We have $|st|\geq \frac78\, |s_1p_1|$
    \end{enumerate}
\end{lemma}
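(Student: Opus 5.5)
The plan is to work in the coordinate frame of $C_i(s_1)$, with bisector the positive $x$-axis. Since $s$ and $s_1$ use the same cone class $C_i$, the cones $C_i(s)$ and $\widehat{C}_i(s)$ are translates of $C_i(s_1)$ with apex $s$.

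The key geometric fact is that the empty triangle $\Delta(s_1p_1)$ contains the part of $\widehat{C}_i(s)$ that lies to the left of the vertical line through $p_1$. Set $L=|s_1p_1|$, with $\varphi^{j-1}\le L<\varphi^j$. By construction of the $\Theta$-graph, $\mathrm{int}(\Delta(s_1p_1))\cap S=\emptyset$. Here $\Delta(s_1p_1)$ is bounded by the two rays of slopes $\pm\tan(\alpha/2)$ from $s_1$ and by the vertical line $x=x(p_1)$.

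The point $s$ lies in $A(s_1p_1)$, so $x(s_1)-L/10\le x(s)\le x(s_1)$ and $|y(s)-y(s_1)|\le \frac{\alpha}{20}L$. The cone $\widehat{C}_i(s)$ has half-aperture $\alpha/6$. Take a point $z\in\widehat{C}_i(s)$ with $x(z)=x$ and $x(s)\le x\le x(p_1)$. Then
\[ |y(z)-y(s_1)| \le \frac{\alpha}{20}L + \tan\frac{\alpha}{6}\,(x-x(s)) .\]
We need this to be strictly less than the half-width of $\Delta(s_1p_1)$ at that abscissa, which is $\tan(\alpha/2)\,(x-x(s_1))$.

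This inequality fails near $s$, because $s$ sits slightly behind $s_1$. So the region close to the apex must be handled separately; I expect this to be the main obstacle. I would first try to exclude it using the second edge $sp\in E_{i,j}$. Its triangle $\Delta(sp)$ is empty, and by \Cref{lem:auxpoints} every point of $S\cap C_i(s)$ has projection onto the bisector at least $\frac78|sp|\ge\frac78\varphi^{-1}L=\frac{7}{10}L$ beyond $s$. Hence any $t\in S\cap\widehat{C}_i(s)\setminus\{s\}$ satisfies $x(t)\ge x(s)+\frac{7}{10}L\ge x(s_1)+\frac{6}{10}L$.

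For $x-x(s_1)\ge \frac{6}{10}L$, the Taylor estimates \eqref{eq:Taylor} give a half-width of at least $\frac{23}{48}\alpha\cdot\frac{6}{10}L\approx0.287\alpha L$. On the other side we have $\frac{\alpha}{20}L+\frac{7\alpha}{36}(x-x(s))$. Writing $x-x(s)\le (x-x(s_1))+\frac{L}{10}$, the comparison reduces to
\[\left(\tfrac{\alpha}{2}\cdot\tfrac{23}{24}-\tfrac{7\alpha}{36}\right)(x-x(s_1)) > \left(\tfrac{1}{20}+\tfrac{7}{360}\right)\alpha L ,\]
which holds comfortably at $x-x(s_1)\ge 0.6L$. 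So any such $t$ with $x(t)<x(p_1)$ would lie in $\mathrm{int}(\Delta(s_1p_1))$, a contradiction. This gives (a): $x(t)\ge x(p_1)$. Any slack in the constants can be absorbed by the explicit numbers $\frac1{10},\frac{\alpha}{20},\varphi=\frac54$.

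For (b), note that $s\in A(s_1p_1)$ gives $x(s)\le x(s_1)$. By (a) and \Cref{lem:auxpoints} applied to $s_1p_1$, we have $x(t)\ge x(p_1)\ge x(s_1)+\frac78 L$ (indeed $x(p_1)-x(s_1)\ge |s_1p_1|\cos(\alpha/2)$). Therefore
\[|st|\ge x(t)-x(s)\ge x(p_1)-x(s_1)\ge \tfrac78\,|s_1p_1|.\]
Statement (b) is read here for $t\in S\cap\widehat{C}(s)\setminus\{s\}$, as in (a).
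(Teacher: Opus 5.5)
Your proof is correct and follows essentially the same route as the paper. Both arguments combine three facts: the emptiness of $\Delta(s_1p_1)$; the narrower aperture of $\widehat{C}_i(s)$, weighed against the small offset of $s$ inside $A(s_1p_1)$; and the length class $|sp|>\frac45|s_1p_1|$, which rules out the pocket near the apex. Part (b) is identical to the paper's.

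The difference is only in organization. You exclude the apex pocket directly, using the $\Theta$-graph rule that $p$ minimizes the bisector projection over all of $S\cap C_i(s)$. That rule, rather than \Cref{lem:auxpoints}, is the correct justification for $x(t)-x(s)\ge\frac78|sp|$. The paper instead argues by contradiction via $\diam(\widehat{C}_i(s)\setminus C_i(s_1))$, and that step asserts $p\in\widehat{C}_i(s)$ without justification, so your version is slightly cleaner.
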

\begin{proof}
Let $t\in S\cap \widehat{C}(s)\setminus \{s\}$; refer to \Cref{fig:triple}. We prove \Cref{lem:triple:a} by contradiction. Suppose that $x(t)<x(p_1)$. Recall that the interior of the triangle $\Delta(s_1p_1)$ is disjoint from $S$. This implies that $t\in \widehat{C}_i(s)\setminus C_i(s_1)$, which in turn gives 
$p \in \widehat{C}_i(s)\setminus C_i(s_1)$, so $|sp|\leq \diam(\widehat{C}_i(s)\setminus C_i(s_1))$. In the remainder of the proof, we show that $\diam(\widehat{C}_i(s)\setminus C_i(s_1))< \frac54\, |s_1p_1|$, which implies $|sp|<\frac54\, |s_1p_1|$ and contradicts the assumption that both $s_1p_1$ and $sp$ are in $E_{i,j}$. 

\begin{figure}[tbh]
        \centering
        \includegraphics[width=.8\textwidth]{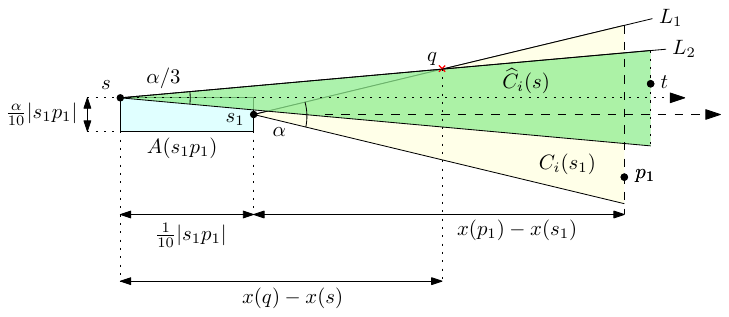}
        \caption{The position of cones $C_i(s_1)$ and $\widehat{C}_i(s)$, and the intersection point $q$ of their upper boundary rays}
        \label{fig:triple}
\end{figure}

If the $x$-coordinate of $s$ decreases, then the set $\widehat{C}_i(s)\setminus C_i(s_1)$ monotonically increases. So we may assume w.l.o.g.\ that $s$ is on the left side of the rectangle $A(s_1p_1)$. Now the diameter of $\widehat{C}_i(s)\setminus C_i(s_1)$ is increasing as $|y(s)-y(s_1)|$ increases. So we may assume w.l.o.g.\ that $s$ is the upper-left corner of $A(s_1p_1)$ (cf.~\Cref{fig:triple}). Let $q$ be the intersection of the upper boundaries of $\widehat{C}_i(s)$ and $C_i(s_1)$, respectively. Note that $\diam(\widehat{C}_i(s)\setminus C_i(s_1))=|sq|$. 

Indeed, the upper boundaries of the cones $C_i(s_1)$ and $\widehat{C}_i(s)$ span the lines 
\[
\left \{
\begin{array}{ll}
        L_1: & y=\tan\frac{\alpha}{2} \cdot x \\
        L_2: & y= \tan\frac{\alpha}{6}\cdot (x+\frac{1}{10}|s_1p_1|)+ \frac{\alpha}{20}|s_1p_1| .
    \end{array}
\right.
\]    
The combination of the equations of the lines $L_1$ and $L_2$ yields 
\[ x(q)= \frac{\frac{1}{10} \tan \frac{\alpha}{6}+ \frac{\alpha}{20}}{\tan\frac{\alpha}{2}-\tan\frac{\alpha}{6}} |s_1p_1| 
< \frac{\left(\frac{1}{30}+\frac{1}{20}\right)\alpha |s_1p_1|}{\frac{\alpha}{2}-\frac{\alpha}{3}} 
=\frac{|s_1p_1|}{2},
\]
where we used the Taylor estimates $x<\tan x< 2x$ for $0<x<1$. The equation of $L_1$, then gives  
$y(q)=\tan\frac{\alpha}{2}\cdot x(q) < \frac{\alpha}{2}\, |s_1p_1|$. 
By the triangle inequality and the assumption $\alpha<\frac12$, 
\begin{align*}
|sq|
&< |x(s)-x(q)|+|y(s)-y(q)| \\
&< \left(\frac{1}{10}+\frac12\right)|s_1p_1| + \left(\frac{\alpha}{2}-\frac{\alpha}{10}\right)|s_1p_1| \\
& =\frac{3+2\alpha}{5}\, |s_1p_1|
<\frac45\, |s_1p_1|
\end{align*}

It remains to prove \Cref{lem:triple:b}. 
If bisector of the cone $C_i(s_1)$ is the positive $x$-axis, then $x(s)\leq x(s_1)<x(p_1)\leq x(t)$. 
Taylor estimates (cf.~\Cref{eq:Taylor}) combined with $\alpha<\frac12$ yield
\[
|st| 
\geq  x(t)-x(s)
\geq x(p_1)-x(s_1)
\geq |s_1p_1|\cos \frac{\alpha}{2}
\geq \frac78\, |s_1p_1| .
\qedhere
\]
\end{proof}

\subparagraph{Coordinate system aligned with $st$.}
Let $s_1p_1, sp\in E_{i,j}$ such that $s_1\neq s$ and $s\in A(s_1p_1)$, as in \Cref{lem:triple}. 
However, we analyze the relative position of $s$, $t$, $s_1p_1$ and $s_1'p_1'$ in a coordinate system where $s$ is the origin and $\overrightarrow{st}$ is the positive $x$-axis; see \Cref{fig:stretchH2}. 

\begin{figure}[tbh]
        \centering
        \includegraphics[width=.85\textwidth]{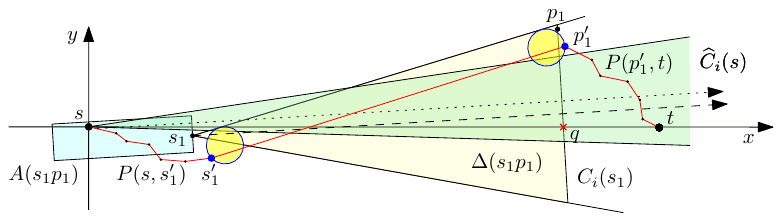}
        \caption{The concatenation of $P(s, s_1')$, $s_1'p_1'$ and $P(p_1', t)$}
        \label{fig:stretchH2}
\end{figure}

\begin{lemma}\label{lem:ssprime}
In the coordinate system where $s$ is the origin and $\overrightarrow{st}$ is the positive $x$-axis,
\begin{enumerate}
    \item $-\frac{\alpha}{20}|s_1p_1|\leq x(s_1)-x(s)\leq \left(\frac{1}{10}+\frac{\alpha}{20}\right)|s_1p_1| $;     and
    \item $|y(s_1)|\leq \frac{\alpha}{15}|s_1p_1|$.
\end{enumerate}
\end{lemma}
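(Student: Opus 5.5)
The plan is a direct change of coordinates. I will first record the position of $s_1$ relative to $s$ in the frame used in the pruning step, and then rotate into the frame aligned with $\overrightarrow{st}$, controlling the rotation angle by the fact that $t$ lies in the narrow cone $\widehat{C}_i(s)$.

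\emph{Step 1 (cone-aligned frame).} Write $L=|s_1p_1|$ and work in the frame where the bisector of $C_i$ is the positive $x$-axis. Let $u$ and $v$ be the $x$- and $y$-components of the vector $s_1-s$ in this frame. Since $s\in A(s_1p_1)$, the definition of the exclusion region gives
\[
0\le u\le \tfrac{1}{10}L \qquad\text{and}\qquad |v|\le \tfrac{\alpha}{20}L.
\]

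\emph{Step 2 (angle of $st$).} The point $t$ lies in $\widehat{C}_i(s)$, the cone of aperture $\frac{\alpha}{3}$ with the same bisector as $C_i$. Hence the direction $\overrightarrow{st}$ makes an angle $\beta$ with the positive $x$-axis of the cone-aligned frame, with $|\beta|\le \frac{\alpha}{6}$. This is where the hypothesis on $t$ enters, and it is what keeps the rotation small.

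\emph{Step 3 (rotation).} Rotating by $-\beta$, the coordinates of $s_1$ in the new frame (with origin $s$) are
\[
x(s_1)-x(s)=u\cos\beta+v\sin\beta \qquad\text{and}\qquad y(s_1)=-u\sin\beta+v\cos\beta.
\]
I then bound each coordinate using $|\sin\beta|\le|\beta|\le \frac{\alpha}{6}$ and $0<\cos\beta\le 1$, together with $\alpha<\frac12$.
\begin{itemize}
\item Lower bound on the $x$-coordinate: since $u\ge 0$, we get $x(s_1)-x(s)\ge -|v||\sin\beta|\ge -\frac{\alpha}{20}\cdot\frac{\alpha}{6}L\ge -\frac{\alpha}{20}L$.
\item Upper bound on the $x$-coordinate: $x(s_1)-x(s)\le u+|v|\le \left(\frac{1}{10}+\frac{\alpha}{20}\right)L$.
\item Bound on the $y$-coordinate: $|y(s_1)|\le u|\sin\beta|+|v|\le \frac{1}{10}\cdot\frac{\alpha}{6}L+\frac{\alpha}{20}L=\frac{\alpha}{60}L+\frac{3\alpha}{60}L=\frac{\alpha}{15}L$.
\end{itemize}
These are exactly the constants claimed in the statement.

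The only step needing care is Step 2: confirming that the relevant rotation angle is at most $\alpha/6$, and not the full $\alpha/2$ that the wider cone $C_i$ would allow. This is immediate once we note that $\widehat{C}_i(s)$ is a translate of the cone $\widehat{C}_i$, whose bisector coincides with that of $C_i$. The rest is routine trigonometry with the estimates in \Cref{eq:Taylor}.
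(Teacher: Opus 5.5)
Your proof is correct and takes the same approach as the paper. Both locate $s_1$ relative to $s$ through the exclusion rectangle $A(s_1p_1)$ in the cone-aligned frame, then account for the tilt of at most $\alpha/6$ of $\overrightarrow{st}$ that comes from $t\in\widehat{C}_i(s)$. Your explicit rotation formulas simply make the paper's informal bounds precise, and in fact give the sharper lower bound $-\frac{\alpha^2}{120}|s_1p_1|$.
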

\begin{proof}
    Recall that $A(s_1p_1)$ is an $\frac{1}{10}|s_1p_1|\times \frac{\alpha}{10}|s_1p_1|$ rectangle, where $s\in A(s_1p_1)$ and $s_1$ is the midpoint of a shorter side of $A(s_1p_1)$. 
    It follows that 
        \[
-\frac{\alpha}{20}|s_1p_1|\leq x(s_1)-x(s)\leq |s s_1|\leq \left(\frac{1}{10}+\frac{\alpha}{20}\right)|s_1p_1| .
\]

Since the long sides of $A(s_1p_1)$ are parallel to the angle bisector of the cone $\widehat{C}_i(s)$, which contains $t$, then they make an angle at most $\alpha/6$ with the positive $x$-axis $\overrightarrow{st}$. Given that $y(s)=0$, this implies that $|y(s_1)|\leq \frac{\alpha}{20}|s_1p_1|+\frac{1}{10}|s_1p_1|\cdot \sin\frac{\alpha}{6} \leq (\frac{\alpha}{20}+\frac{\alpha}{60})|s_1p_1| = \frac{\alpha}{15}|s_1p_1|$, using Taylor estimates (cf.~\Cref{eq:Taylor}).
\end{proof}

\subparagraph{Projections of $s_1'$ and $p_1'$.} 
By \Cref{lem:triple}(a), $t\in C_i(s)$ lies beyond the triangle $\Delta(s_1p_1)$.
Let $\overrightarrow{r}$ be the ray parallel to $\overrightarrow{st}$ emanating from $s_1$. The cones $C_i(s_1)$ and $\widehat{C}_i(s)$ have parallel bisectors, but $\widehat{C}_i(s)$ has smaller aperture than $C_i(s_1)$, so $\overrightarrow{r}\subset C_i(s_1)$. \Cref{lem:xycoords} holds when $\overrightarrow{r}$ is the positive $x$-axis. We adapt \Cref{lem:xycoords} and \Cref{cor:spprime} for our setting, where $\overrightarrow{st}$ is the positive $x$-axis: 

\begin{lemma}\label{lem:xycoords+}  We have the following bounds on the $x$ and $y$-coordinates of $s_1'$ and $p_1'$
\begin{enumerate}
\item $\frac{-\alpha}{5}\, |s_1p_1| \leq x(s_1)-x(s_1') \leq (\frac14 + \frac{\alpha}{10})|s_1p_1|$ and $|y(s_1')-y(s_1)| \leq \frac{\alpha}{5}\, |s_1p_1|$.
\item $\frac78\, |s_1p_1| \leq x(p_1')-x(s_1) \leq (1 + \frac{\alpha}{5}) |s_1p_1|$ and $|y(p_1')-y(s_1)| < \frac{4\alpha}{3}\, |s_1p_1|$.
\item If $x(t) \leq x(p_1')$, then $x(p_1') - x(t) < \frac{2\alpha}{5}\, |s_1p_1|$, otherwise, $x(t) - x(p_1') \leq x(t)-x(s_1) - \frac78\, |s_1p_1|$.
\end{enumerate}
\end{lemma}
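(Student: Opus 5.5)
The plan is to transport \Cref{lem:xycoords} to the cone $C_i(s_1)$, using a coordinate system in which $s_1$ is the origin and $\overrightarrow{r}$ is the positive $x$-axis. This coordinate system is a pure translation of the one with origin $s$ and $\overrightarrow{st}$ as the positive $x$-axis, because $\overrightarrow{r}$ is parallel to $\overrightarrow{st}$. So every difference $x(\cdot)-x(s_1)$ and $y(\cdot)-y(s_1)$ in the statement equals the corresponding absolute coordinate in the translated frame. The proof of \Cref{lem:xycoords} uses only three facts about the point $t$ there: the ray $\overrightarrow{st}$ lies in the cone $C$, it is measured from the apex, and $t$ lies beyond the third side of $\Delta(sp)$. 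I will check that $\overrightarrow{r}$ and $t$ play these roles for the edge $s_1p_1$.

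\textbf{Step 1.} The remark preceding the lemma gives $\overrightarrow{r}\subset C_i(s_1)$. It follows that the angle between $\overrightarrow{r}$ and the bisector $\ell_{C_i}$ is at most $\alpha/6\le\alpha/2$. This is the only angular input to the proofs of \Cref{lem:apexcenter}, \Cref{lem:auxpoints}, and the $x$- and $y$-bounds of \Cref{lem:xycoords}. Items 1 and 2 then follow verbatim from \Cref{lem:xycoords}(1)--(2), applied to the edge $s_1p_1$ with $\overrightarrow{r}$ as reference axis.

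One point needs care. The bound $|y(p')|<\frac{4\alpha}{3}|sp|$ in \Cref{lem:xycoords} was obtained from the fact that $p'$ lies in $C$, so that the angle $\angle p'st$ is at most $\alpha$. For $p_1'$ I argue instead that $p_1'$ lies in the tube $U(D_{s_1},D_{p_1})$ near $\Delta(s_1p_1)$. It therefore stays within the cone $C_i(s_1)$ up to an additive $\frac{\alpha}{5}|s_1p_1|$. Its angle to $\overrightarrow{r}$ is then at most $\alpha/2+\alpha/6<\alpha$, so the same estimate applies.

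\textbf{Step 2 (item 3).} Let $q$ be the intersection of the line through $t$ parallel to $\overrightarrow{r}$ with the third side of $\Delta(s_1p_1)$. \Cref{lem:triple}(a) gives $x(p_1)\le x(t)$, measured along the bisector. Combined with the emptiness of $\mathrm{int}(\Delta(s_1p_1))$, I will argue that $t$ lies beyond the line through the third side, hence $x(q)\le x(t)$ in the rotated frame. Here I use that $t$ is not in the interior of the triangle and that the direction of $\overrightarrow{r}$ is within $\alpha/6$ of the bisector. The case $x(t)\le x(p_1')$ then repeats the computation $x(p_1')-x(q)\le x(p_1)-x(q)+\frac{\alpha}{5}|s_1p_1|<\frac{2\alpha}{5}|s_1p_1|$. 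In the other case, $x(t)-x(p_1')\le x(t)-x(s_1)-\frac78|s_1p_1|$ follows directly from item 2.

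\textbf{Main obstacle.} The delicate step is item 3 when $t$ is \emph{not} on the ray $\overrightarrow{r}$ but only on a parallel line offset by $|y(s_1)|\le\frac{\alpha}{15}|s_1p_1|$. In \Cref{lem:xycoords}, $q$ was the intersection of the segment $st$ with the third side, and the bound on $x(p)-x(q)$ used $|pq|\le\frac{7\alpha}{12}|sp|$.

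First I will redo that bound with $q$ defined on the parallel line. The third side is orthogonal to the bisector and has length about $\alpha|s_1p_1|$, and it is tilted by at most $\alpha/6$ relative to the $y$-axis of the frame. So any two points on it differ in $x$-coordinate by at most $\alpha|s_1p_1|\sin(\alpha/6)=O(\alpha^2)|s_1p_1|$. This is at most $\frac{7\alpha}{48}|s_1p_1|$ for $\alpha<\frac12$, which keeps the total below $\frac{2\alpha}{5}|s_1p_1|$.
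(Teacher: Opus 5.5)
Your proposal is correct and follows the paper's route. The paper likewise observes that $\overrightarrow{r}\subset C_i(s_1)$ is parallel to $\overrightarrow{st}$, so \Cref{lem:xycoords} transfers by a pure translation of the frame, and it invokes \Cref{lem:triple}(a) to place $t$ beyond $\Delta(s_1p_1)$ for item~3. Your explicit re-derivation of item~3 for $t$ lying off the ray $\overrightarrow{r}$, using that the third side is tilted by at most $\alpha/6$, fills in a detail the paper leaves implicit; you also correctly read item~1 as a bound on $x(s_1')-x(s_1)$, since the printed sign is reversed (cf.\ \Cref{cor:xycoords}).
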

\begin{corollary}\label{cor:spprime+}
    $|x(s_1') - x(p_1')| >\frac{23}{40}\, |s_1p_1|$
\end{corollary}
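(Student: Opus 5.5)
The corollary is a direct analogue of \Cref{cor:spprime}, and I would derive it the same way, using \Cref{lem:xycoords+} in place of \Cref{lem:xycoords}. The idea is to get a lower bound on $x(p_1')-x(s_1)$, an upper bound on $x(s_1')-x(s_1)$, and subtract.

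First, I would extract the two needed one-sided bounds from \Cref{lem:xycoords+}. Part~2 gives $x(p_1')-x(s_1)\ge \frac78|s_1p_1|$. Part~1 is meant as the translate of the bound $x(s')\le(\frac14+\frac{\alpha}{10})|sp|$ from \Cref{lem:xycoords}, with $s_1$ playing the role of the origin. It therefore gives $x(s_1')-x(s_1)\le (\frac14+\frac{\alpha}{10})|s_1p_1|$.

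Part~1 as printed uses the sign $x(s_1)-x(s_1')$, so I would check the orientation against the proof of \Cref{lem:xycoords}. There $D_s$ slides backward, away from $p$, so $s'$ is at most a quarter plus a small amount ahead of $s$, and this is the direction used here.

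This direction needs a little care, because the axis is now $\overrightarrow{st}$ rather than the bisector $\ell_{C_i}$. The justification is already in the text preceding \Cref{lem:xycoords+}. The ray $\overrightarrow{r}$ from $s_1$ parallel to $\overrightarrow{st}$ lies in $C_i(s_1)$, and the argument of \Cref{lem:xycoords} only used that the $x$-axis is a ray from the apex inside the cone. That argument relied on \Cref{lem:apexcenter} and \Cref{lem:auxpoints}, together with the monotonicity of the $x$-coordinates during sliding.

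Subtracting the two bounds gives
\[
x(p_1')-x(s_1') \ge \left(\tfrac78-\tfrac14-\tfrac{\alpha}{10}\right)|s_1p_1| = \left(\tfrac58-\tfrac{\alpha}{10}\right)|s_1p_1|.
\]
With $\alpha<\frac12$, this exceeds $(\frac58-\frac1{20})|s_1p_1|=\frac{23}{40}|s_1p_1|$. In particular the difference is positive, so it equals the absolute value in the statement, which gives the strict inequality.

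The only real obstacle is confirming the sign convention in part~1 of \Cref{lem:xycoords+}, so that the upper bound is on how far $s_1'$ can lie in the positive $x$-direction beyond $s_1$. I would settle this by rereading the monotonicity argument for $D_s$, whose $x$-coordinates are nonincreasing as it slides toward $s$. Everything else is the same arithmetic as in \Cref{cor:spprime}.
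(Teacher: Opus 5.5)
Your proposal is correct and is the paper's own argument: the proof of \Cref{cor:spprime} transplanted to the $\overrightarrow{st}$ frame via \Cref{lem:xycoords+}, subtracting $x(s_1')-x(s_1)\le(\frac14+\frac{\alpha}{10})|s_1p_1|$ from $x(p_1')-x(s_1)\ge\frac78|s_1p_1|$. Your reading of the sign in part~1 is the intended one; the printed $x(s_1)-x(s_1')$ is a typo, as the paper itself uses your orientation when deriving \Cref{cor:xycoords}, where $\frac{7}{20}=\frac{1}{10}+\frac14$.
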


The combination of \Cref{lem:ssprime} and \Cref{lem:xycoords+}(1) readily implies the following. 
\begin{corollary}\label{cor:xycoords}  We have
$\frac{-\alpha}{4}\, |s_1p_1| \leq x(s_1') \leq (\frac{7}{20} + \frac{3\alpha}{20})|s_1p_1|$ 
and $|y(s_1')| < \frac{\alpha}{4}\, |s_1p_1|$.
\end{corollary}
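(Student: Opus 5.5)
The plan is to add coordinates: in the frame where $s$ is the origin and $\overrightarrow{st}$ is the positive $x$-axis, we write
\[
x(s_1') = x(s) + \bigl(x(s_1)-x(s)\bigr) - \bigl(x(s_1)-x(s_1')\bigr),
\]
and similarly $y(s_1') = y(s_1) + \bigl(y(s_1')-y(s_1)\bigr)$. We then bound each bracket with \Cref{lem:ssprime} and \Cref{lem:xycoords+}(1). Since $x(s)=y(s)=0$, no other ingredients are needed.

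For the $x$-coordinate, the lower bound takes the smallest value of $x(s_1)-x(s)$ and the largest value of $x(s_1)-x(s_1')$. The upper bound takes the largest value of the first and the smallest (most negative) value of the second. By \Cref{lem:xycoords+}(1), $x(s_1)-x(s_1')$ lies in $[-\frac{\alpha}{5}, \frac14+\frac{\alpha}{10}]\cdot|s_1p_1|$.

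For the upper bound, \Cref{lem:ssprime}(1) gives
\[
x(s_1') \le \Bigl(\tfrac1{10}+\tfrac{\alpha}{20}\Bigr)|s_1p_1| + \tfrac{\alpha}{5}|s_1p_1| = \Bigl(\tfrac1{10}+\tfrac{\alpha}{4}\Bigr)|s_1p_1|.
\]
This is at most $(\frac{7}{20}+\frac{3\alpha}{20})|s_1p_1|$, because $\frac{\alpha}{10}\le\frac14$ when $\alpha<\frac12$.

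For the lower bound, we get
\[
x(s_1') \ge -\tfrac{\alpha}{20}|s_1p_1| - \Bigl(\tfrac14+\tfrac{\alpha}{10}\Bigr)|s_1p_1|.
\]
This is not of the form $-\frac{\alpha}{4}|s_1p_1|$, so the naive sum fails, and I expect this to be the main obstacle. The fix is to use the geometry instead of the crude bound. Lemma~\ref{lem:xycoords}'s lower bound $x(s')\ge -\frac{\alpha}{5}|sp|$ came from $D_s$ sliding backward from $s$, and the analogous statement should hold here: $D_{s_1}$ only moves backward along the tube toward $s_1$ and stops no later than reaching $s_1$. Hence in the frame of $C_i(s_1)$ we have $x(s_1') \ge x(s_1) - \diam(D_{s_1})$.

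Transferring to the $\overrightarrow{st}$ frame, the axes differ by an angle of at most $\alpha/6$. The displacement $s_1'-s_1$ has length at most about $\frac{\alpha}{5}|s_1p_1|$ in the backward regime, so rotation changes its $x$-component by only a factor $\cos$, keeping it at least $-\frac{\alpha}{5}|s_1p_1|$ up to lower-order terms. Adding $x(s_1)-x(s)\ge-\frac{\alpha}{20}|s_1p_1|$ then gives $-\frac{\alpha}{4}|s_1p_1|$.

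The delicate point is the case where $s_1'$ lies ahead of $s_1$ but in a rotated direction. There I would bound $|s_1'-s_1|\le(\frac14+\frac{3\alpha}{10})|s_1p_1|$ and the rotation error by $\sin(\alpha/6)$ times this. I would then check that this still fits within the stated margin, if necessary absorbing it into the upper bound instead.

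For the $y$-coordinate, the same rotation issue arises. I would use \Cref{lem:ssprime}(2), $|y(s_1)|\le\frac{\alpha}{15}|s_1p_1|$, together with the bound $|y(s_1')-y(s_1)|\le\frac{\alpha}{5}|s_1p_1|$ from \Cref{lem:xycoords+}(1). These sum to $(\frac{1}{15}+\frac15)\alpha|s_1p_1| = \frac{4\alpha}{15}|s_1p_1|$, slightly above $\frac{\alpha}{4}|s_1p_1|$.

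To close this gap, I would rederive $|y(s_1')-y(s_1)|$ directly in the $\overrightarrow{st}$ frame. By the argument of Lemma~\ref{lem:xycoords}, $D_{s_1}$ must eventually contain $s_1$, so $|y(s_1')-y(s_1)|$ is at most the $y$-width of $D_{s_1}$. For a regular $3k$-gon of diameter $\frac{\alpha}{5}|s_1p_1|$ this width is essentially $\frac{\alpha}{5}|s_1p_1|$, but the tighter bound comes from tube confinement: the tube containing $s_1$ has width at most $\diam(D_{s_1})$ and is nearly parallel to the axis. I would then combine this with the slack in $\frac{\alpha}{20}+\frac{\alpha}{60}$ to obtain the strict bound $<\frac{\alpha}{4}|s_1p_1|$.
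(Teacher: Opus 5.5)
You are right that plain addition only gives $\frac{4\alpha}{15}$ for the $y$-coordinate. But your repair of that gap cannot work, because the $y$-bound as stated is false. Separately, your upper bound for $x(s_1')$ rests on a false inequality.

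\textbf{The $x$-bounds.} You take \Cref{lem:xycoords+}(1) literally, but its first chain has a sign typo. It is the translate of \Cref{lem:xycoords}(1), where $s$ is the origin, so it should read $-\frac{\alpha}{5}|s_1p_1|\le x(s_1')-x(s_1)\le(\frac14+\frac{\alpha}{10})|s_1p_1|$. The literal version gives $x(s_1')\le x(s_1)+\frac{\alpha}{5}|s_1p_1|$, which is false. $D_{s_1}$ starts about $\frac15|s_1p_1|$ ahead of $s_1$, and it can immediately hit a point just outside a boundary ray. So your intermediate bound $x(s_1')\le(\frac1{10}+\frac{\alpha}{4})|s_1p_1|$ is unjustified and in general wrong, even though the final upper bound holds. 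With the correct sign, plain addition with \Cref{lem:ssprime}(1) gives exactly $-\frac{\alpha}{20}-\frac{\alpha}{5}=-\frac{\alpha}{4}$ and $\frac1{10}+\frac{\alpha}{20}+\frac14+\frac{\alpha}{10}=\frac{7}{20}+\frac{3\alpha}{20}$. That addition is the paper's entire proof. No rotation is involved either. \Cref{lem:xycoords} is proved with the $x$-axis along an arbitrary direction $\overrightarrow{st}$ inside the cone, not along the bisector. The paper applies it with origin $s_1$ and axis $\overrightarrow{r}\parallel\overrightarrow{st}$, which is just a translate of the present frame. Your rotation step, with its ``lower-order terms'' and the unresolved forward case, could not have produced the exact constants.

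\textbf{The $y$-bound.} Addition gives only $\frac{\alpha}{15}+\frac{\alpha}{5}=\frac{4\alpha}{15}>\frac{\alpha}{4}$. That is also all the paper's one-line proof yields, so its claim that the corollary ``readily'' follows is an arithmetic slip. The strict bound $|y(s_1')|<\frac{\alpha}{4}|s_1p_1|$ is false in general. Here is a configuration, in the bisector frame of $C_i(s_1)$ with $|s_1p_1|=1$.
\begin{itemize}
\item Let $s=s_1-(u,\frac{\alpha}{20})$ with $u$ just below $\frac1{10}$, chosen so that $s_1\notin C_i(s)$.
\item Let $t$ lie far away, just inside $\widehat{C}_i(s)$, at angle about $-\frac{\alpha}{6}$ from the bisector. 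In the $\overrightarrow{st}$ frame this gives $y(s_1)=u\sin\frac{\alpha}{6}+\frac{\alpha}{20}\cos\frac{\alpha}{6}\approx\frac{\alpha}{15}$.
\item Put $p_1$ at the lower corner of $\Delta(s_1p_1)$, which is Case~1 of \Cref{lem:cases}. Then $D_{s_1}$ slides along the lower boundary ray.
\item $D_{s_1}$ would reach $s_1$ with its center at distance $r\approx\frac{\alpha}{10}$ from $s_1$, along the inner normal of that ray. This normal makes angle $\frac{\alpha}{3}$ with the $y$-axis.
\item Place $s_1'$ on the top side of $D_{s_1}$ just before that moment. That side is a leading side with outward normal exactly the $y$-direction. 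This point lies outside $C_i(s_1)$ and $C_i(s)$, and it has $y(s_1')-y(s_1)\approx r\cos\frac{\alpha}{3}+r\approx\frac{\alpha}{5}$.
\item Then $p_1\in C_i(s)$ is the $\Theta$-neighbor $p$ of $s$, with $|sp|\approx1.1$. After scaling, $sp$ and $s_1p_1$ lie in the same $E_{i,j}$, and $sp$ is pruned by $s_1p_1$.
\end{itemize}
Hence $y(s_1')=\frac{4\alpha}{15}(1-O(\alpha^2))>\frac{\alpha}{4}$. So no tube-confinement argument can close your gap. The provable statement is $|y(s_1')|\le\frac{4\alpha}{15}|s_1p_1|$. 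In the stretch proof for $H_2$ this changes $\frac{33\alpha}{10}$ to $\frac{10\alpha}{3}$, and the argument still closes with a marginally smaller cone angle such as $\alpha\le\eps/17$.
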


\begin{lemma}\label{lem:inductionhyp+}
If $\alpha < \frac12$, then the segment $ss_1'$ satisfies $|ss_1'| < |st|$ and the segment $p_1't$ satisfies $|p_1't| < |st|$.
\end{lemma}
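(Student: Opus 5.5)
We follow the proof of \Cref{lem:inductionhyp}: bound each segment by its $L_1$-norm in the coordinate system where $s$ is the origin and $\overrightarrow{st}$ is the positive $x$-axis. We then compare the result with the lower bound $|st|\geq \frac78\,|s_1p_1|$ from \Cref{lem:triple}(b).

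\textbf{The segment $ss_1'$.} \Cref{cor:xycoords} gives $|x(s_1')|\leq (\frac{7}{20}+\frac{3\alpha}{20})|s_1p_1|$ and $|y(s_1')|<\frac{\alpha}{4}|s_1p_1|$. Hence
\[
|ss_1'|\leq |x(s_1')|+|y(s_1')| < \left(\tfrac{7}{20}+\tfrac{2\alpha}{5}\right)|s_1p_1| .
\]
For $\alpha<\frac12$ this is less than $\frac{11}{20}|s_1p_1|<\frac78|s_1p_1|\leq |st|$, using \Cref{lem:triple}(b).

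\textbf{The segment $p_1't$, vertical part.} First bound $|y(p_1')|$. Since $y(s)=0$, the triangle inequality with \Cref{lem:ssprime}(2) and \Cref{lem:xycoords+}(2) gives
\[
|y(p_1')|\leq |y(s_1)|+|y(p_1')-y(s_1)| < \left(\tfrac{\alpha}{15}+\tfrac{4\alpha}{3}\right)|s_1p_1| = \tfrac{7\alpha}{5}|s_1p_1|.
\]

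\textbf{The segment $p_1't$, horizontal part.} We split into the two cases of \Cref{lem:xycoords+}(3).
\begin{itemize}
\item If $x(t)\leq x(p_1')$, then $|p_1't|<(\frac{2\alpha}{5}+\frac{7\alpha}{5})|s_1p_1|=\frac{9\alpha}{5}|s_1p_1|$. For $\alpha<\frac12$ this is less than $\frac{9}{10}|s_1p_1|$. This is not below $\frac78|s_1p_1|$, so this case is slightly tight and needs more care. I would either use a smaller admissible range of $\alpha$ (the section eventually takes $\alpha=\eps/16$), or argue directly that $|st|\geq x(t)\geq x(p_1)$. Here $x(p_1)-x(s)\geq x(p_1)-x(s_1)-\frac{\alpha}{20}|s_1p_1|$, which is at least about $(\cos\frac{\alpha}{2}-\frac{\alpha}{20})|s_1p_1|$, and that exceeds $\frac{9\alpha}{5}|s_1p_1|$ comfortably for $\alpha<\frac12$.
\item If $x(t)>x(p_1')$, then $x(t)-x(p_1')\leq x(t)-x(s_1)-\frac78|s_1p_1|$. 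By \Cref{lem:ssprime}(1), $x(t)-x(s_1)\leq |st|+\frac{\alpha}{20}|s_1p_1|$. Therefore
\[
|p_1't|\leq |st|+\left(\tfrac{\alpha}{20}+\tfrac{7\alpha}{5}-\tfrac78\right)|s_1p_1|<|st|,
\]
since $\frac{29\alpha}{20}<\frac{29}{40}<\frac78$ for $\alpha<\frac12$.
\end{itemize}

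The main obstacle is the bookkeeping of the shift between $s$ and $s_1$, and the first case above, where the crude constant $\frac78$ from \Cref{lem:triple}(b) may not suffice. There I would use the sharper lower bound $|st|\geq x(p_1)-x(s)$ coming from \Cref{lem:triple}(a).
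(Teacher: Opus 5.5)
Your proof is correct. For $ss_1'$ it is identical to the paper's computation. Your $\frac{11}{20}$ is simply a sharper evaluation than the paper's $\frac{13}{20}$. Incidentally, \Cref{lem:ssprime}(2) and \Cref{lem:xycoords+}(1) only give $|y(s_1')|\le\frac{4\alpha}{15}|s_1p_1|$, not the $\frac{\alpha}{4}$ claimed in \Cref{cor:xycoords}, but the slack absorbs this.

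For $p_1't$ your route differs from the paper's. The paper disposes of this segment in one line: it asserts that \Cref{lem:inductionhyp} gives $|p_1't|<\frac78|s_1p_1|$ and then applies \Cref{lem:triple}(b). That shortcut is not actually justified, for two reasons:
\begin{itemize}
\item \Cref{lem:inductionhyp} yields $|p't|<\frac78|sp|$ only in the case $x(t)\le x(p')$. When $x(t)>x(p_1')$, the point $t$ may be arbitrarily far from $p_1'$, and only a comparison with $|st|$ is possible.
\item \Cref{lem:inductionhyp} was proved with the path starting at the apex of the cone, not at a point $s\in A(s_1p_1)$ offset from $s_1$.
\end{itemize}
Your explicit two-case $L_1$ bookkeeping, with the offset controlled by \Cref{lem:ssprime}, is what a complete proof needs. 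You also correctly noticed that in the case $x(t)\le x(p_1')$ the extra $|y(s_1)|$ term pushes the bound to $\frac{9\alpha}{5}|s_1p_1|$, which is not below $\frac78|s_1p_1|$ for $\alpha$ near $\frac12$.

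Of your two remedies for that case, drop the first: restricting $\alpha$ further changes the hypothesis of the lemma. State the second in the frame where \Cref{lem:triple}(a) is formulated, namely the one in which the bisector of $C_i(s_1)$ is the positive $x$-axis. There $x(s)\le x(s_1)$ because $s\in A(s_1p_1)$, so for $\alpha<\frac12$ we get $|st|\ge x(t)-x(s)\ge x(p_1)-x(s_1)\ge \cos\frac{\alpha}{2}\,|s_1p_1|>\frac{31}{32}|s_1p_1|>\frac{9\alpha}{5}|s_1p_1|$. This is exactly the inequality derived in the proof of \Cref{lem:triple}(b) before it is rounded down to $\frac78$.

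Written this way, there is no $-\frac{\alpha}{20}$ correction and no need for your ``at least about''. That correction arose from mixing the $st$-frame estimate of \Cref{lem:ssprime}(1) with the bisector-frame inequality $x(t)\ge x(p_1)$.
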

\begin{proof}
The Euclidean length of the segment $ss_1'$ is bounded above by its $L_1$ norm. 
\Cref{cor:xycoords}, combined with $\alpha < \frac12$ and \Cref{lem:triple}(b), implies that
\begin{align*}
      |ss_1'| &\leq |x(s_1')| + |y(s_1')| 
      \leq \left(\frac{7}{20}+\frac{3\alpha}{20}\right)|s_1p_1| + \frac{\alpha}{4}\, |s_1p_1| \\
      &= \left(\frac{7}{20}+\frac{2\alpha}{5}\right)|s_1p_1|
      <\frac{13}{20}\, |s_1p_1| <\frac78\, |s_1p_1| <|st| .
\end{align*}
    For the segment $p_1't$, \Cref{lem:inductionhyp} showed that $|p_1't|<\frac78 |s_1p_1|$. 
    Combined with \Cref{lem:triple}(b), this further implies $|p_1't|< |st|$, as claimed. 
\end{proof}

    

In the remainder of \Cref{sec:stretch}, we prove the main result of \Cref{ssec:general}

\begin{lemma}
    If $\alpha\leq \eps/16$, then $H_2$ is a $(1+\eps)$-spanner for $S$. 
\end{lemma}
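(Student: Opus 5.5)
We argue by induction on the pairs $s,t\in S$ in increasing order of $|st|$, as in \Cref{lem:stretchH0}, and show that $H_2$ contains an $st$-path of length at most $(1+\eps)|st|$. The case $s=t$ is trivial.

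For $s\neq t$, the $3k$ cones of aperture $\frac{\alpha}{3}$ partition the plane, so $t$ lies in some $\widehat{C}_i(s)$. Hence $t\in C_i(s)$, and the corresponding $\Theta$-graph has an edge $sp\in E_{i,j}$ for some $j$. We split into two cases.

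\textbf{Case 1: $sp$ survives pruning.} Then $sp\in E^*_{i,j}$ and $s'p'\in E(H_2)$. The argument of \Cref{lem:stretchH0} applies verbatim. \Cref{lem:inductionhyp} gives $|ss'|,|p't|<|st|$, so the induction hypothesis provides the subpaths $P(s,s')$ and $P(p',t)$. The $L_1$ computation then closes, since $\alpha\le\eps/16\le\eps/10$.

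\textbf{Case 2: $sp$ was removed.} Then $sp$ was deleted while processing some surviving edge $s_1p_1\in E^*_{i,j}$ with $s\in A(s_1p_1)$ and $s_1\neq s$. Surviving edges are never removed later, so such an $s_1p_1$ exists. We use the path $P(s,s_1')$, then the edge $s_1'p_1'\in E(H_2)$, then $P(p_1',t)$.

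\Cref{lem:inductionhyp+} gives $|ss_1'|<|st|$ and $|p_1't|<|st|$, so the induction hypothesis applies to both subpaths. We work in the frame where $s$ is the origin and $\overrightarrow{st}$ is the positive $x$-axis, and repeat the $L_1$ bookkeeping of \Cref{lem:stretchH0}.

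\textbf{$x$-direction.} Backtracking is bounded using \Cref{cor:xycoords}: $x(s_1')\ge -\frac{\alpha}{4}|s_1p_1|$. By \Cref{lem:xycoords+}(3), $x(p_1')-x(t)<\frac{2\alpha}{5}|s_1p_1|$ whenever $x(p_1')>x(t)$. This yields
\[
|x(s)-x(s_1')|+|x(p_1')-x(t)|\le |st|+2\left(\tfrac{\alpha}{4}+\tfrac{2\alpha}{5}\right)|s_1p_1|-|x(s_1')-x(p_1')|.
\]

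\textbf{$y$-direction.} \Cref{cor:xycoords} gives $|y(s_1')|<\frac{\alpha}{4}|s_1p_1|$. For $p_1'$ we combine \Cref{lem:ssprime}(2) with \Cref{lem:xycoords+}(2): $|y(p_1')|<(\frac{4}{3}+\frac{1}{15})\alpha|s_1p_1|$. Hence the total $y$-variation is at most $2\bigl(\frac14+\frac75\bigr)\alpha|s_1p_1|$.

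\textbf{Combining.} Substituting these bounds and \Cref{cor:spprime+}, which gives $|x(s_1')-x(p_1')|>\frac{23}{40}|s_1p_1|$, we obtain
\[
|P(s,t)|\le(1+\eps)|st|+\left((1+\eps)c\,\alpha-\tfrac{23}{40}\eps\right)|s_1p_1|
\]
for an explicit constant $c$ (roughly $\frac{13}{10}+\frac{33}{10}\approx 4.6$). For $\eps<1$ and $\alpha\le\eps/16$, the bracket is negative, since $2\cdot 4.6/16<23/40$. This completes the induction.

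\textbf{Expected obstacle.} The main care is in the $y$-estimate for $p_1'$ in the rotated frame. \Cref{lem:xycoords+} measures offsets relative to a ray from $s_1$ parallel to $\overrightarrow{st}$, and $p_1'$ may lie at angle up to about $\alpha$ from that ray. We must also confirm that \Cref{lem:triple}(a) places $t$ beyond $\Delta(s_1p_1)$, so that the case distinction of \Cref{lem:xycoords+}(3) is valid. Tracking these constants is what forces the slightly smaller choice $\alpha\le\eps/16$ in place of $\eps/10$.
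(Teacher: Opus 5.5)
Your proposal is correct and follows the paper's proof essentially step for step: the same choice of $i$ with $t\in\widehat{C}_i(s)$, the same detour through the surviving edge $s_1'p_1'$, the same use of \Cref{lem:inductionhyp+}, \Cref{cor:xycoords} and \Cref{lem:xycoords+}, and the same constants $\frac{13}{10}$, $\frac{33}{10}$, $c=\frac{23}{5}$, which lead to the condition $(1+\eps)\alpha\le\frac{\eps}{8}$. Two small remarks: first, $2\cdot\frac{23}{5}\cdot\frac{1}{16}=\frac{23}{40}$ exactly, so the bracket is negative because $1+\eps<2$, not because of a strict gap between the constants; second, Case~1 is cleaner as the special case $s_1=s$ of Case~2 (the paper notes this reduction too), since the constants displayed in \Cref{lem:stretchH0} actually require $(1+\eps)\frac{64}{15}\alpha\le\frac{23}{40}\eps$, which holds for $\alpha\le\eps/16$ but fails for $\alpha=\eps/10$ when $\eps$ is close to $1$.
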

\begin{proof}
We claim that for every pair of points $s,t\in S$, the graph $H_2$ contains an $st$-path of length at most $(1+\eps)|st|$. We prove the claim by induction on the length $|st|$. In the base case, we have $s=t$, and the claim trivially holds. Assume that we are given a point pair $s,t\in S$, where $s\neq t$, and the claim holds for all pairs $\{s^*,t^*\}\subseteq  S$ with $|s^*t^*|<|st|$. 

Let $i\in \{1,\ldots ,3k\}$ such that $t\in \widehat{C}_i(s)$. Then $t\in C_i(s)$, and the $\Theta$-graph has an edge $sp$ corresponding to the cone $C_i(s)$, where $sp\in E_{i,j}$ for some $j\in \mathbb{Z}$. Either $sp$ corresponds to an edge $s'p'\in E^*_{i,j}(H_2)$, or the edge $sp$ was pruned because $s\in A(s_1p_1)$ for some edge $s_1p_1\in E_{i,j}$, and we have $s_1'p_1'\in E^*_{i,j}(H_2)$.
In the first case, the induction step can be completed as in the proof of \Cref{lem:stretchH0}, but it also reduces to the second case by setting $s=s_1$. It suffices to analyze the second case. 

For the sake of the stretch analysis, we use a coordinate system where $s$ is the origin, and the ray $\overrightarrow{st}$ is the positive $x$-axis; see \Cref{fig:stretchH2}. We can now outline the remainder of the proof: We show that $|s s_1'|<|st|$ and $|p_1't|<|st|$,  and by the induction hypothesis $H_2$ contains paths $P(s,s_1')$ and $P(p_1',t)$ such that 
$|P(s,s_1')|\leq (1+\eps)\, |ss_1'|$ and $|P(p_1',t)|\leq (1+\eps)\, |p_1't|$.
We obtain an $st$-path as a concatenation $P(s,t)=P(s,s_1')\oplus s_1'p_1'\oplus P(p_1',t)$,
and we can bound the length of $P(s,t)$ as 
\begin{equation}\label{eq:initial}
    |P(s,t)| =  |P(s,s_1')| + |s_1' p_1'| + |P(p_1',t)| 
            \leq (1+\eps) \Big( |ss_1'| + |p_1't|\Big) + |s_1' p_1'| .
\end{equation}
The length of every line segment $ab$ is bounded above by its $L_1$-norm, that is, $|ab|\leq |x(a)-x(b)|-|y(a)-y(b)|$. Since we assume that $s$ is the origin and $t$ is on the positive $x$-axis, then 
$x(s)=y(s)=0$, $x(t)=|x(t)-x(s)|=|st|$, and $y(t)=0$. 

By \Cref{lem:inductionhyp+} and the induction hypothesis, $H_2$ contains paths $P(s, s_1')$ and $P(p_1', t)$ satisfying $|P(s,s_1')| \leq (1+\eps)|ss'|$ and $|P(p_1', t)| \leq (1+\eps)|p't|$. 

    \subparagraph{Upper bounds for distances in $x$-projections.} We compute the total $x$-distance traveled along the path. Similar to the stretch analysis of $H_0$, the path can backtrack if $x(s_1') < x(s)$ or $x(p_1') > x(t)$. By \Cref{lem:xycoords+} and \Cref{lem:inductionhyp+}, the backtracking can be bounded as
    \[ |x(s) - x(s_1')| + |x(s_1')-x(p_1')| + |x(p_1')-x(t)| 
    \leq |st| + 2\left(\frac{\alpha}{4}\,|s_1p_1| + \frac{2\alpha}{5}\,|s_1p_1|\right) 
    = |st| + \frac{13\alpha}{10}\,|sp|.\]
    In particular,
    \[|x(s) - x(s_1')| + |x(p_1') - x(t)| \leq |st| + \frac{13\alpha}{10}\,|s_1p_1| - |x(s_1')-x(p_1')|.\]

    \subparagraph{Upper bounds for distances in $y$-projections.} Using the bounds in \Cref{lem:xycoords} and the fact that $y(s)=y(t)=0$, the triangle inequality yields
    \begin{align*}
    |y(s) - y(s_1')|+ |y(s_1') - y(p_1')| + |y(p_1') - y(t)| 
    &\leq 2|y(s_1')| + 2|y(p_1')|\\
    &\leq 2\Big( |y(s_1')| + |y(p_1')-y(s_1)| + |y(s_1)|\Big)\\
    &\leq 2\left( \frac{\alpha}{4}\,|s_1p_1| + \frac{4\alpha}{3}\, |s_1p_1| +  \frac{\alpha}{15}\, |s_1p_1|\right) \\
    &= \frac{33\alpha}{10}\, |s_1p_1|.
    \end{align*}

    \subparagraph{Length of $P(s,t)$.} Now, we combine the above results in order to bound the length of the path $P(s,t)$. By the induction hypothesis,
    \begin{align*}
    |P(s,t)| &= |P(s, s_1')| + |s_1'p_1'| + |P(p_1', t)| \\
    &\leq |s_1'p_1'| + (1+\eps)(|ss_1'| + |p_1't|)  \\
    &\leq |x(s_1')-x(p_1')| + |y(s_1') - y(p_1')| + (1+\eps)(|x(s)-x(s_1')|+|x(p_1')-x(t)|) \\
        & \quad + (1+\eps)(|y(s)-y(s_1')|+|y(p_1')-y(t)|) \\
        &\leq |x(s_1')-x(p_1')| + (1+\eps)(|x(s)-x(s_1')|+|x(p_1')-x(t)|) \\
        & \quad + (1+\eps)(|y(s)-y(s_1')|+|y(s_1')-y(p_1')|+|y(p_1')-y(t)|).
    \end{align*}
    Using the above bounds, we obtain
    \begin{align*}
    |P(s,t)| &\leq |x(s_1')-x(p_1')| + (1+\eps)\left(|st|+\frac{13\alpha}{10}\, |s_1p_1|- |x(s_1')-x(p_1')|\right) + (1+\eps)\frac{33\alpha}{10}\, |s_1p_1| \\
    & \leq (1+\eps)|st| + (1+\eps)\left(\frac{23\alpha}{5}|s_1p_1|\right) - \eps \left(\frac{23}{40}|s_1p_1| \right)\\
    &\leq (1+\eps)|st| + \left( (1+\eps)\frac{23\alpha}{5} -\frac{23\eps}{40}\right)|s_1p_1|\\
    &\leq (1+\eps)|st| + \frac{23}{5}\left( (1+\eps)\alpha -\frac{\eps}{8}\right)|s_1p_1|\\
    &< |st|,
    \end{align*}
    provided that $(1+\eps)\alpha <\frac{\eps}{8}$.
    Assuming $\eps<1$, this holds for all $\alpha\leq \eps/16$.
\end{proof}

\section{Crossing Analysis}
\label{sec:crossing}

We first show that the lengths of any two crossing edges in the graph $H_1$ 
differ by a factor of $O(1/\eps^2)$. Both $H_0$ and $H_2$ are subgraphs of $H_1$, so this property holds for $H_0$ and $H_2$, too. 
\begin{lemma} \label{lem:ratio}
If edges $e_1,e_2\in E(H_1)$ cross and $|e_1|\leq |e_2|$, then $|e_2|/|e_1|\leq O(1/\eps^2)$. 
\end{lemma}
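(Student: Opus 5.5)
We use the empty region property (\Cref{lem:emptyregion}): every edge $e_2=s'p'\in E(H_1)$ has a convex region $R_{e_2}=\conv(D_s\cup D_p)$ whose interior is disjoint from $S$. Here $D_s,D_p$ are regular $3k$-gons of diameter $\frac{\alpha}{5}|sp|$ with $\alpha=\Theta(\eps)$. The idea is that the short edge $e_1=uv$ crosses $e_2$, yet its endpoints $u,v$ must lie outside $\mathrm{int}(R_{e_2})$. So $e_1$ has to traverse the region $R_{e_2}$ near the crossing point. Where $R_{e_2}$ is ``thick'' around $e_2$, this forces $|e_1|$ to be large.

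\textbf{Step 1: comparable lengths.} We show that $|s'p'|=\Theta(|sp|)$. By \Cref{cor:spprime}, $|s'p'|\ge\frac{23}{40}|sp|$. By \Cref{lem:xycoords}, $|s'p'|\le(\frac54+\alpha)|sp|+O(\alpha)|sp|$. Consequently the polygons have diameter $\Theta(\eps|e_2|)$ and inradius $r=\Theta(\eps|e_2|)$.

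\textbf{Step 2: thickness of $R_{e_2}$ along $e_2$.} Let $x=e_1\cap e_2$. The points $s',p'$ lie on $\partial D_s,\partial D_p$ respectively, and the segment $s'p'$ lies inside the tube $U(D_s,D_p)$. Write $\ell=\ell(D_s,D_p)$. Since $s'\in D_s$ and $p'\in D_p$, the segment $s'p'$ makes an angle $O(r/|c_sc_p|)=O(\eps)$ with $\ell$. We want a lower bound on the distance from $x$ to $\partial R_{e_2}$, measured transversally to $e_2$.

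For $x$ in the middle portion of $e_2$, the chord of $R_{e_2}$ through $x$ orthogonal to $\ell$ has length close to $2r$. The point $x$ sits somewhere on it, but near the endpoints $s',p'$ (which lie on $\partial R$) it may be close to the boundary. We therefore parametrize by the distance $d=\min(|xs'|,|xp'|)$.

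Take $x$ at distance $d$ from $s'$. The boundary of $R$ near $s'$ is a polygonal arc of $D_s$ whose interior angles are $\pi-2\pi/(3k)$, i.e.\ nearly flat. Combining this with the fact that $e_2$ meets $\partial D_s$ at $s'$ transversally enough, we aim to show that the disk of radius $\Omega(\eps\, d)$ around $x$ is contained in $R_{e_2}$ whenever $d\le |e_2|/2$. For $d\ge r$ this radius is already $\Omega(\eps^2|e_2|)$ via the tube width. The danger case is $d\ll r$, where $e_2$ may be nearly tangent to $\partial D_s$ at $s'$.

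\textbf{Step 3: the endpoint case, the main obstacle.} If $x$ is very close to $s'$, then $e_1$ could be tiny and still cross $e_2$ just past $s'$. We have to rule this out. We use that $e_1$ is also an edge of $H_1$, so $e_1$ has its own empty region $R_{e_1}$ containing $e_1$ in its closure, of width $\Theta(\eps|e_1|)$. The point $s'\in S$ must lie outside $\mathrm{int}(R_{e_1})$, while $e_2$ emanates from $s'$ and crosses $e_1$. Symmetrically, $u,v$ lie outside $\mathrm{int}(R_{e_2})$.

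The plan is to show that near $s'$ the region $R_{e_2}$ contains a cone with apex $s'$ around $e_2$ of angle $\Omega(\eps)$ on each side, up to distance $\Omega(\eps|e_2|)$. This would follow because $e_2$ enters $D_s$ at an angle $\Omega(1)$, or else follows $\ell$ within angle $O(\eps)$; in either case the angle between $e_2$ and $\partial R$ at $s'$ should be $\Omega(\eps)$, and establishing this is the delicate part. Given such a cone, $u,v$ lie outside it, so $e_1$ must cross the whole cone. The crossing point is at distance $d$ from $s'$, hence $|e_1|\ge \Omega(\eps d)$.

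On the other hand, the apex $s'$ lies outside $R_{e_1}$, and $R_{e_1}$ has width $\Theta(\eps|e_1|)$ around the middle of $e_1$. A symmetric argument, applied with the roles of the edges exchanged, then gives $d\ge\Omega(\eps|e_1|)$ unless $x$ is near an endpoint of $e_1$. Balancing these constraints should give $|e_1|\ge\Omega(\eps^2|e_2|)$ in all cases, which is the claim $|e_2|/|e_1|=O(1/\eps^2)$. If the cone argument fails, we fall back on a direct case analysis using \Cref{lem:cases}: in case 1 the tube boundaries are parallel to a cone ray, and in case 2 $D_p\subset\Delta(sp)$. In each case we control the angle at $s'$ and $p'$ directly.
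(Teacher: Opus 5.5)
\textbf{There is a genuine gap, and it sits at the step you describe as ``balancing these constraints.''}

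Your two constraints are $|e_1|\ge\Omega(\eps d)$ (from the hoped-for cone at $s'$) and $d\ge\Omega(\eps|e_1|)$ (from $s'\notin\mathrm{int}(R_{e_1})$). Both are scale-invariant relations between $d$ and $|e_1|$ alone, and $|e_2|$ never enters either of them. A tiny $e_1$ crossing $e_2$ at distance $d\approx|e_1|$ from $s'$ satisfies both, however small $|e_1|$ is. So no combination of them can give $|e_1|\ge\Omega(\eps^2|e_2|)$. This failure is structural: a cone is self-similar, so angle information near $s'$ can never exclude an arbitrarily short crossing there. You need an absolute length scale near the endpoints of $e_2$, and your Step~3 cone claim would not supply one even if you proved it.

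The thickness claims of Steps~2--3 can also fail as stated. In Case~1 of \Cref{lem:cases} it can happen that $s'=s$ and $p'=p$ both lie on the line of a boundary ray of $C$, for example when $p$ lies on that ray and no other point is hit. Then $e_2$ lies on a boundary line of $U(D_s,D_p)$, hence on $\partial R_{e_2}$. In that case $R_{e_2}$ contains no disk around any point of $e_2$, and no cone on one side of it.

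\textbf{The missing idea is the discrete structure of $\partial D_s$.} Its sides have length $\lambda=\Theta(\eps^2|e_2|)$: the diameter is $\Theta(\eps|e_2|)$ and there are $3k=\Theta(1/\eps)$ sides. This side length is the true source of the factor $\eps^2$. The paper's argument runs as follows.
\begin{enumerate}
\item The short edge must meet both arcs of $\partial R_{e_2}$ determined by $s'$ and $p'$.
\item If $|e_1|<\lambda/3$, it cannot meet both polygons, both tube boundaries, or two nonadjacent sides of one polygon. So it must cut off a single vertex $v$ of, say, $D_s$, and one checks $v\notin R_{e_1}$. Note that a corner of angle $\pi-2\pi/(3k)$ can be cut off by an arbitrarily short segment, so your ``nearly flat angles'' heuristic cannot rule this configuration out.
\item The decisive step uses that every empty region is built from homothets of the same regular $3k$-gon $D$, and that all cone boundaries of the three $\Theta$-graphs are parallel to sides of $D$ (\Cref{lem:parallel}).
\item In Case~1 of \Cref{lem:cases}, every side of $R_{e_1}$ is parallel to a side of $D_s$. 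The supporting halfplanes of $R_{e_1}$ parallel to the two sides at $v$ then meet at a vertex of $R_{e_1}$, which forces $v\in R_{e_1}$, a contradiction.
\item In Case~2, the same argument applied to $s_1p_1$ and the empty triangle $\Delta(s_1p_1)$ forces $v\in\Delta(s_1p_1)$, again a contradiction.
\end{enumerate}
Your proposal uses this common orientation only in a vague fallback, and only to ``control angles.'' That cannot produce the absolute scale the argument needs.
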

\begin{proof}
Let $ab$ and $cd$ be two crossing edges in $H_1$ such that $|ab|\leq |cd|$. They correspond to empty regions $R_{ab}$ and $R_{cd}$ such that $a,b\in \partial R_{ab}$ and $c,d\in \partial R_{cd}$. Furthermore $R_{ab}$ is the convex hull of two congruent regular $3k$-gons $D_a$ and $D_b$, where $a\in \partial D_a$ and $b\in \partial D_b$. Similarly, $R_{cd}=\conv(D_c\cup D_d)$, where $c\in \partial D_c$ and $d\in \partial D_d$. 

Let $s_2p_2$ be the edge of a $\Theta$-graph corresponding to $cd$ (that is, $cd=s_2'p_2'$). Then  $\diam(D_c)=\frac{\alpha}{5}\,|s_2p_2|$, where $\alpha = \eps/16$, and we have $|cd|\geq \Omega(|s_2p_2|)$ by \Cref{lem:apexcenter}. In particular, $\diam(D_c)\geq \Omega (\eps\, |cd|)$ and the distance between $D_c$ and $D_d$ is also $\Omega(|cd|)$.
Let $\lambda$ denote the edge length of the regular $3k$-gon $D_c$. Since $D_c$ is a regular $3k$-gon, we have $\lambda = \frac12\diam(D_c)\sin(\pi/(3k)) =\Theta(\eps^2 |cd|)$.  We claim that $|ab|\geq \lambda/3 = \Omega(\eps^2 |cd|)$. 

\begin{figure}[tbh]
        \centering
        \includegraphics[width=\textwidth]{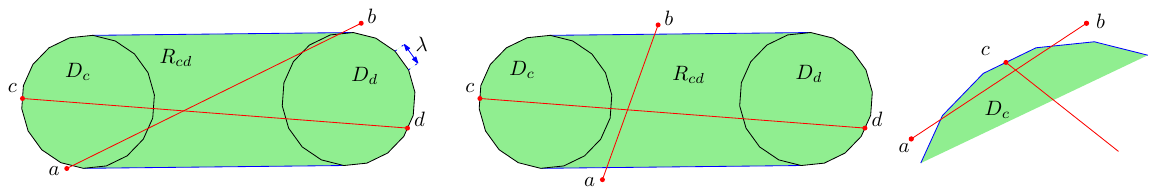}
        \caption{Edge $ab$ crosses $R_{cd}$. Left: $ab$ intersects both $D_c$ and $D_d$. Middle: $ab$ intersects both outer common tangents of $D_c$ and $D_d$. Right: $ab$ crosses two nonadjacent sides of $D_c$.}
        \label{fig:crossings1}
\end{figure}

Since $cd\subset R_{cd}$ but $c,d\notin \mathrm{int}(R_{cd})$, then the crossing point  $ab\cap cd\neq \emptyset$ lies in $\mathrm{int}(R_{cd})$. As $a,b\notin \mathrm{int}(R_{cd})$, then the edge $ab$ crosses the empty region $R_{cd}$. Points $c$ and $d$ partition the boundary $\partial R_{cd}$ into two arcs, each containing a common outer  tangent segment between $\partial D_c$ and $\partial D_d$. Since $ab$ crosses $cd$, then $ab$ must intersect both arcs. If $ab$ intersects both $D_c$ and $D_d$, then $|ab|$ is at least the distance between $D_c$ and $D_d$, which is more than $\lambda$ (\Cref{fig:crossings1}(left)). If $ab$ intersects both outer tangents of $D_c$ and $D_d$, then $|ab|$ is at least the width of the tube $U(D_c,D_d)$, which is $\diam(D_c)>\lambda$  (\Cref{fig:crossings1}(middle)). Therefore, we may assume that $ab$ intersects exactly one of $D_c$ and $D_d$. Assume w.l.o.g.\ that $ab$ intersects $D_c$. 
If $ab$ intersects two nonadjacent sides of $D_c$, then $|ab|$ is at least the distance between those sides, which is at least $\lambda$  (\Cref{fig:crossings1}(right)). Hence, we may assume that $ab$ intersects two consecutive sides of $D_c$. Let $v$ be the common endpoint of the two sides of $D_c$ crossed by $ab$; see \Cref{fig:crossings2}. 

Note that $v\notin R_{ab}$, otherwise we would have $\conv \{a,b,v\}\subset R_{ab}$ by convexity, and  the edge $cd$ would not cross $ab$. This further implies that $D_a$ (resp., $D_b$) intersects at most one side of $D_c$: Indeed, as $D_a$ (resp., $D_b$) is homothetic to $D_c$, if $D_a$ intersects two consecutive sides of $D_c$, it would contain the common endpoint of the two sides. 

\begin{figure}[tbh]
        \centering
        \includegraphics[width=.95\textwidth]{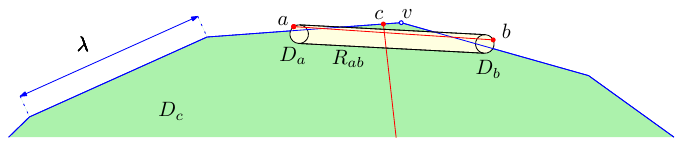}
        \caption{Edge $ab$ crosses two adjacent sides of $D_c$.}
        \label{fig:crossings2}
\end{figure}

Let $s_1p_1$ be the edge of a $\Theta$-graph corresponding to $ab$ (that is, $ab=s_1'p_1'$), 
and w.l.o.g.\ edge $s_1p_1$ lies in cone $C_i(s_1)$. We distinguish between two cases described in \Cref{lem:cases}:

\begin{itemize}
\item \textbf{Case~1:} The boundary lines of the tube $U(D_a,D_b)$ are parallel to a ray on the boundary of $C_i(s_1)$. By construction, each boundary ray of $C_i$ is parallel to two opposite sides of $D_c$. 
    Consequently, every side of $R_{ab}$ is parallel to a side of $D_c$. Let $T_1$ and $T_2$ be the halfplanes that contain $R_{ab}$, bounded by lines $\partial T_1$ and $\partial T_2$ containing sides of $R_{ab}$ parallel to the two adjacent sides of $D_c$ incident to $v$. 
    Then we have $D_c\subset T_1\cap T_2$, which in turn implies $v\in T_1\cap T_2$. 
    Since $\partial T_1$ and $\partial T_2$ are parallel to two consecutive sides of $R_{ab}$, then $\partial T_1\cap \partial T_2$ is a vertex of $R_{ab}$, and $v\in R_{ab}$: A contradiction. 
%
\item \textbf{Case~2:} We have $b=p_1$ and $D_b\subset \Delta(s_1p_1)$. Recall that $\Delta(s_1p_1)\subset C_i(s_1)$ is an empty triangle, and the three sides of $\Delta(s_1p_1)$ are parallel to three sides of $D_c$~\cite[Lemma~1]{BoseCMRV16}. Let $T$ be the halfplane bounded by a line in $\partial U(D_a,D_b)$ that contains $U(D_a,D_b)$ (hence $R_{ab}$), but does not contain $v$.  \Cref{lem:termination} implies $s_1,p_1\in U(D_a,D_b)$, and so $sp\subset T$. \Cref{lem:construction:tubegeneric} yields $|s_1p_1|<2\, |ab|<\frac23\, \lambda$. 
Hence $s_1p_1$ intersects the same two sides of $D_c$ as $s_1'p_1'$.

    We can now proceed similarly to Case~1, arguing about the segment $s_1p_1$ and the empty triangle $\Delta(s_1p_1)$ in place of $ab=s_1'p_1'$ and $R_{ab}$. Let $T_1$ and $T_2$ be the halfplanes that contain $\Delta(s_1p_1)$, bounded by lines $\partial T_1$ and $\partial T_2$ containing the sides of $\Delta(s_1p_1)$ that cross the two sides of $D_c$ incident to $v$. 
    Then we have $D_c\subset T_1\cap T_2$, which in turn implies $v\in T_1\cap T_2$. 
    Since the lines $\partial T_1, \partial T_2$ are parallel to some sides of $D_c$, then 
     $s_1p_1\subset T_1$ and $s_2p_1\subset T_2$ imply $v\in T_1$ and $v\in T_2$, respectively. 
     This implies $v\in T_1\cap T_2$, hence $v\in \Delta(s_1p_1)$: A contradiction.
\end{itemize}

Both cases lead to contradiction. We conclude that $|ab|\geq \lambda/3 = \Theta(\eps^2|cd|)$.
\end{proof}

\subparagraph{Bounding the number of crossings.} 
Fix an edge $e\in E(H_2)$. We analyze the number of crossings between $e$ and edges in $E_{i,j}(H_2)$ for fixed indices $i,j$. For convenience, we assume that the bisector of $C_i$ is the positive $x$-axis (our arguments extend to all $i$ after suitable rotation).
Recall that every edge $s'p'\in E_{i,j}(H_2)$ corresponds to some edge $sp\in E^*_{i,j}$ of a $\Theta$-graph, which in turn corresponds to cone $C_i$. We define the axis-aligned rectangle $M_j=[-\frac{1}{30}\, \varphi^{j}, (1+\eps)\, \varphi^j]\times [-\eps\, \varphi^j,\eps\,\varphi^j]$. It is clear that $\area(M_j)=\Theta(\eps\, \varphi^{2j})$. For every edge $s'p'\in E_{i,j}(H_2)$, let $M_j(s'p')=M_j+(x(s),y(s))$ be a translate of $M_j$; see \Cref{fig:boundingbox}.

\begin{figure}[tbh]
        \centering
        \includegraphics[width=.7\textwidth]{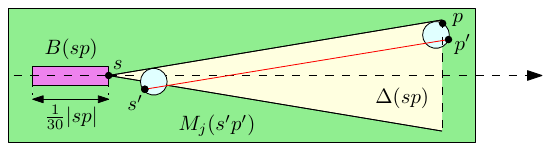}
        \caption{The rectangle $M_j(s'p')$ contains both $B(sp)$ and $s'p'$.}
        \label{fig:boundingbox}
\end{figure}

\begin{lemma}\label{lem:boundingbox}
    For every edge $s'p'\in E_{i,j}(H_1)$, we have $B(sp)\subset M_j(s'p')$ and $s'p'\subset M_j(s'p')$.
\end{lemma}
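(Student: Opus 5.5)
We work in the coordinate frame used for the pruning of $E_{i,j}$: the bisector of $C_i$ is the positive $x$-axis. After translating by $-(x(s),y(s))$ we may assume $s$ is the origin, so $M_j(s'p')=M_j$. Throughout we use $\varphi^{j-1}\le |sp|<\varphi^j$, $\varphi=\frac54$, and $\alpha=\eps/16<\frac12$. Both inclusions reduce to coordinate bounds: the $x$-extent must lie in $[-\frac{1}{30}\varphi^j,(1+\eps)\varphi^j]$ and the $y$-extent in $[-\eps\varphi^j,\eps\varphi^j]$.

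\textbf{Step 1: $B(sp)\subset M_j$.} By definition, $B(sp)$ is the dilation of $A(sp)$ by factor $\frac13$ about $s$. Hence
\[
B(sp)=\Bigl[-\tfrac{1}{30}|sp|,0\Bigr]\times\Bigl[-\tfrac{\alpha}{60}|sp|,\tfrac{\alpha}{60}|sp|\Bigr].
\]
Since $|sp|<\varphi^j$ and $\frac{\alpha}{60}<\eps$, this box lies in $M_j$ directly.

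\textbf{Step 2: $s'p'\subset M_j$.} Since $M_j$ is convex, it suffices to show $s',p'\in M_j$. We read the coordinate bounds off \Cref{lem:xycoords}, but we cannot quote it verbatim: that lemma was stated in the frame where $\overrightarrow{st}$ is the $x$-axis. We therefore redo the relevant estimates in the bisector frame.

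For $s'$, the $3k$-gon $D_s$ starts inside $\Delta(sp)$ with center on the bisector, and by \Cref{lem:apexcenter} $|sc_s|<\frac14|sp|$. During the sliding it moves toward $s$ within the tube $U(D_s,D_p)$, which contains $s$ by \Cref{lem:termination}. The same monotonicity argument as in \Cref{lem:xycoords} gives
\[
x(s')\ge -\tfrac{\alpha}{5}|sp| \quad\text{and}\quad |y(s')|\le \tfrac{\alpha}{5}|sp|.
\]

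For $p'$, the polygon $D_p$ slides away from $s$ and stops no later than reaching $p$. This gives
\[
x(p')\le x(p)+\tfrac{\alpha}{5}|sp|\le \Bigl(1+\tfrac{\alpha}{5}\Bigr)|sp|.
\]
Since $x(p)\le|sp|$ and $\frac{\alpha}{5}<\eps$, this is at most $(1+\eps)\varphi^j$. For the $y$-coordinate, $p'$ lies in $U(D_s,D_p)$, or on its boundary, near $D_p$'s final position, and the tube lies within $C_i$ up to width $\diam(D_p)$. Hence
\[
|y(p')|\le \Bigl(1+\tfrac{\alpha}{5}\Bigr)|sp|\tan\tfrac{\alpha}{2}+\tfrac{\alpha}{5}|sp|,
\]
which is $O(\alpha)|sp|<\eps\varphi^j$ for $\alpha=\eps/16$.

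\textbf{Main obstacle.} The hardest check is the left endpoint of the $x$-range. We need $-\frac{\alpha}{5}|sp|\ge -\frac{1}{30}\varphi^j$, i.e.\ $\frac{\alpha}{5}<\frac{1}{30}$, which holds since $\alpha<\frac12$ gives $\frac{\alpha}{5}<\frac1{10}$ only if we use the sharper $\alpha=\eps/16<\frac1{16}$. This yields $\frac{\alpha}{5}<\frac{1}{80}<\frac{1}{30}$, using $\eps<1$.

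The second delicate point is rigor in the $y$-bound for $p'$. The tube may be tilted, as in Case 1 of \Cref{lem:cases}, where it is parallel to a boundary ray of $C_i$. I would bound it as follows. Both $D_s$ and $D_p$ start inside $\Delta(sp)\subset C_i(s)$, and they translate along the line $c_sc_p$ by a distance of at most $O(|sp|)$. Every point of the final polygons is therefore within $\diam(D)$ of the segment through $c_s,c_p$ extended to $x$-range $[-\frac{\alpha}{5}|sp|,(1+\frac{\alpha}{5})|sp|]$. That segment has slope at most $\tan\frac{\alpha}{2}$ and lies within $C_i$ near its endpoints, which gives the stated $O(\alpha|sp|)$ bound on $|y|$.
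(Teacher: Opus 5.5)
Your proposal is correct and follows essentially the same route as the paper. For the first inclusion you use the explicit form $B(sp)=[-\frac{1}{30}|sp|,0]\times[-\frac{\alpha}{60}|sp|,\frac{\alpha}{60}|sp|]$; for the second, convexity of $M_j$ plus coordinate bounds on $s'$ and $p'$, with $\alpha\le\eps/16$ handling the left end of the $x$-range. The only difference is that the paper gets those bounds by applying \Cref{lem:xycoords}(1--2) directly, with a hypothetical point $t$ on the bisector of $C_i(s)$, instead of re-deriving them (your slope argument for $|y(p')|$ is a valid substitute); you should also state the two remaining easy checks, $x(s')\le(\frac14+\frac{\alpha}{10})|sp|$ and $x(p')\ge 0$, which follow from the same monotonicity.
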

\begin{proof}
Let $s'p'\in E_{i,j}(H_1)$ correspond to the $\Theta$-graph edge $sp$ with $\varphi^{j-1}\leq |sp|<\varphi^{j}$. Assume w.l.o.g.\ that the bisector of $C_i(s)$ is the positive $x$-axis and $s$ is the origin. In this coordinate system, $B(sp)=[-\frac{1}{30}|sp|,0]\times [-\frac{\alpha}{60}|sp|,\frac{\alpha}{60}|sp|]$. Since $|sp|<\varphi^j$, then $B(sp)\subset M_j(s'p')$.

To prove that $s'p' \subset M_j(s'p')$, it is enough to show that $s',p'\in M_j(s'p')$ by convexity. We apply \Cref{lem:xycoords}(1--2) for the same coordinate system (with a hypothetical point $t$ on the bisector of $C_i(s)$). It gives $\frac{-\alpha}{5}\, |sp| \leq x(s') \leq (\frac14 + \frac{\alpha}{10})|sp|$ and $|y(s')| \leq \frac{\alpha}{5}\, |sp|$. Combined with $\alpha\leq \eps/16$ and $|sp|<\varphi^j$, this implies that $s\in M_j(p's')$. Similarly, \Cref{lem:xycoords}(2) gives $\frac78\, |sp| \leq x(p') \leq (1 + \frac{\alpha}{5}) |sp|$ and $|y(p')| < \frac{4\alpha}{3}\, |sp|$, hence $p'\in M_j(p'j')$. 
\end{proof}
\begin{figure}[tbh]
        \centering
        \includegraphics[width=.95\textwidth]{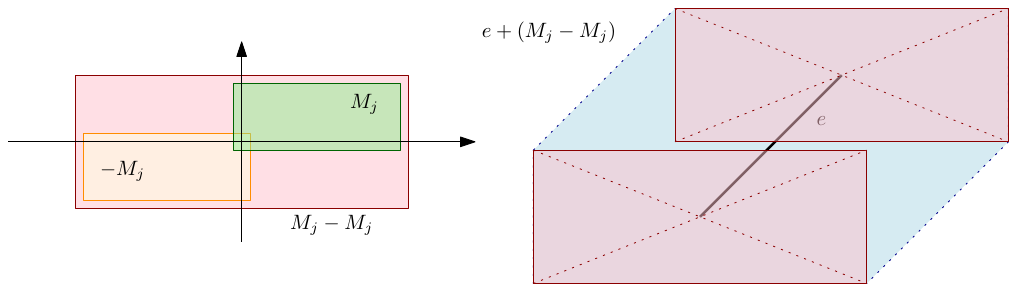}
        \caption{Left: Axis-aligned rectangles $M_j$ and $M_j-M_j$ in a Cartesian coordinate system. Right: A line segment $e$ and the Minkowski sum $e+(M_j-M_j)$.}
        \label{fig:minkowski}
\end{figure}
We show that for all edges $s'p'\in E_{i,j}(H_2)$ that cross $e$, the box $B(s'p')$ lies in a Minkowski sum $e+(M_j-M_j)$; see~\Cref{fig:minkowski}.
\begin{lemma}\label{lem:minkowski}
    If an edge $s'p'\in E_{i,j}(H_2)$ crosses $e$, then $B(sp)\subset e+(M_j-M_j)$.
\end{lemma}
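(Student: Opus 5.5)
Since $s'p'$ crosses $e$, the two segments share a point; call it $x\in e\cap s'p'$. By \Cref{lem:boundingbox}, both $B(sp)\subset M_j(s'p')$ and $s'p'\subset M_j(s'p')$, where $M_j(s'p')=M_j+(x(s),y(s))$. The plan is a short translation argument using only these two containments and the definition of the Minkowski difference $M_j-M_j=\{u-w: u,w\in M_j\}$.

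First, since $x\in s'p'\subset M_j+(x(s),y(s))$, we can write $x=(x(s),y(s))+w$ for some $w\in M_j$, so $(x(s),y(s))=x-w$. Next, take an arbitrary point $z\in B(sp)$. Since $B(sp)\subset M_j+(x(s),y(s))$, we can write $z=(x(s),y(s))+u$ for some $u\in M_j$. Combining the two gives $z=x+(u-w)$, with $x\in e$ and $u-w\in M_j-M_j$. Hence $z\in e+(M_j-M_j)$, and since $z$ was arbitrary, $B(sp)\subset e+(M_j-M_j)$.

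The only point needing care is that both containments in \Cref{lem:boundingbox} refer to the same translate $M_j(s'p')$, anchored at $s$ rather than at $s'$. This holds by the definition of $M_j(s'p')$, so the common anchor cancels in the subtraction. I expect no genuine obstacle, because the geometric work was already done in \Cref{lem:boundingbox}; the orientation convention (bisector of $C_i$ along the positive $x$-axis) is fixed throughout, so $M_j$ is the same rectangle for every edge in $E_{i,j}(H_2)$.
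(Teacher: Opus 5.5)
Your proof is correct and is essentially the paper's argument. The paper likewise takes the crossing point $q\in e\cap s'p'$, uses \Cref{lem:boundingbox} (both containments in the same translate anchored at $s$) to get $q-s\in M_j$ and $b-s\in M_j$ for every $b\in B(sp)$, and writes $b=q+\bigl((b-s)-(q-s)\bigr)\in e+(M_j-M_j)$. One minor point: calling the crossing point $x$ clashes with the coordinate notation $x(\cdot)$, so a name like $q$ is cleaner.
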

\begin{proof}
  Let $q=e\cap s'p'$ be the intersection point. By \Cref{lem:boundingbox}, the rectangle $M_j(s'p') = M_j+(x(s),y(s))$ contains $B(sp)$ as well as points $q$ and $s$. 
  This means that the vector $\overrightarrow{sq}$ lies in $M_j$, and for every point $b\in B(sp)$, vector $\overrightarrow{sb}$ lies in $M_j$. For all $b\in B(sp)$, we have $b=q+\overrightarrow{qb} = q+\overrightarrow{sb}-\overrightarrow{sq}\subset e+(M_j-M_j)$, hence $B(sp)\subseteq e+(M_j-M_j)$, as claimed.
\end{proof}
We can now use volume arguments---comparing the area of $e+(M_j-M_j)$ with the area of $B(sp)$ for $s'p'\in E_{i,j}(H_2)$. We distinguish between two cases: \Cref{lem:long} handles edges in $E_{i,j}(H_2)$ of length at least $\width(R_e)$, and \Cref{lem:short} is about shorter edges. 

\begin{lemma} \label{lem:long}
Let $ab\in E(H_2)$, $i\in \{1,\ldots ,3k\}$, and $j\in \mathbb{Z}$.
If $2\cdot \varphi^j\geq \width(R_e)$, then $e$ crosses $O(1/\eps^2)$ edges in $E_{i,j}(H_2)$. 
\end{lemma}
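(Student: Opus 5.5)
We will use a volume (packing) argument that combines \Cref{lem:minkowski} and \Cref{lem:packing}. Write $e=ab$ and let $L=|e|$. By \Cref{lem:minkowski}, for every edge $s'p'\in E_{i,j}(H_2)$ that crosses $e$, the rectangle $B(sp)$ lies in the Minkowski sum $K=e+(M_j-M_j)$. By \Cref{lem:packing}, the rectangles $B(sp)$ for distinct $sp\in E^*_{i,j}$ are pairwise disjoint. Distinct edges of $E_{i,j}(H_2)$ come from distinct edges of $E^*_{i,j}$, so the number of such crossings is at most $\area(K)/\min_{sp}\area(B(sp))$.

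Each $B(sp)$ is a $\frac{1}{30}|sp|\times \frac{\alpha}{30}|sp|$ rectangle with $|sp|\ge \varphi^{j-1}$ and $\alpha=\eps/16$. Hence $\area(B(sp))=\Omega(\eps\,\varphi^{2j})$.

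Next we bound $\area(K)$. The set $M_j-M_j$ is an axis-aligned rectangle of dimensions $O(\varphi^j)\times O(\eps\varphi^j)$. For a segment $e$ of length $L$ and a rectangle $Q$ with side lengths $w\times h$, we have $\area(e+Q)\le \area(Q)+L\cdot\width_{e^\perp}(Q)\le wh+L(w+h)$. This gives $\area(K)=O(\eps\varphi^{2j}+L\varphi^j)$, and so the count is $O(1+L/(\eps\varphi^j))$.

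It remains to control $L/\varphi^j$ using the hypothesis $2\varphi^j\ge \width(R_e)$. For $e=s_2'p_2'$, the region $R_e$ is the convex hull of two $3k$-gons of diameter $\frac{\alpha}{5}|s_2p_2|$. Its width is therefore $\Theta(\eps|s_2p_2|)$, and we have $|e|=\Theta(|s_2p_2|)$ by \Cref{lem:xycoords}. Hence $\width(R_e)=\Theta(\eps L)$, and the hypothesis yields $\varphi^j=\Omega(\eps L)$, that is, $L/\varphi^j=O(1/\eps)$. Plugging this in gives $O(1+1/\eps^2)=O(1/\eps^2)$ crossings.

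The main obstacle is the estimate $\width(R_e)=\Omega(\eps L)$, so that the hypothesis translates into $\varphi^j\gtrsim\eps|e|$. I would prove it from the facts that $R_e\supseteq D_{s_2}$ has width at least $\frac{\alpha}{5}|s_2p_2|\cos(\pi/3k)$ and that $|e|\le(1+\alpha)|s_2p_2|+O(\alpha|s_2p_2|)$ by \Cref{lem:xycoords}(1--2). Should this step only give a weaker relation, the fallback is to use the sharper width of $M_j-M_j$ perpendicular to $e$ when $e$ is nearly parallel to the $x$-axis. The Minkowski area bound itself is routine.
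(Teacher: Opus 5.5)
Your proposal is correct and is essentially the paper's argument. Both proofs use the same steps: containment of each $B(sp)$ in $e+(M_j-M_j)$ via \Cref{lem:minkowski}, pairwise disjointness via \Cref{lem:packing}, an area comparison giving $O(|e|/(\eps\varphi^j))$ crossings, and the conversion of $2\varphi^j\geq\width(R_e)=\Theta(\eps|e|)$ into $|e|=O(\varphi^j/\eps)$. Your bound $\area(e+Q)\le wh+L(w+h)$ is slightly more careful than the paper's $\frac{|e|}{\width(M_j-M_j)}\area(M_j-M_j)$, which drops the additive $O(1)$ term needed when $|e|$ is shorter than the width of $M_j-M_j$. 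Your justification of $\width(R_e)=\Omega(\eps|e|)$ via $D_{s_2}\subseteq R_e$ and \Cref{lem:xycoords} also fills in a step the paper only asserts, so the fallback you mention is not needed.
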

\begin{proof}
The Minkowski sum $M_j-M_j$ is an axis-aligned rectangle of size $\Theta(\varphi^j)\times \Theta(\eps\, \varphi^j)$, hence $\area(M_j)=O(\eps\, \varphi^{2j})$. The area of $e+(M_j-M_j)$ is bounded above by 
    \[
    \area(e+(M_j-M_j))\leq \frac{|e|}{\width(M_j-M_j)}\cdot \area(M_j-M_j)
    \leq O\left(\frac{|e|}{\eps\, \varphi^j}\, \eps \varphi^{2j}\right)
    =O(\varphi^j\, |e|).
    \]

For every edge $s'p'\in E_{i,j}(H_2)$ that crosses $e$, we have $B(sp)\subset e+(M_j-M_j)$ by \Cref{lem:minkowski}, and $\area(B(sp))=\Theta(\eps\, \varphi^{2j})$. The rectangles $B(sp)$ are pairwise disjoint by \Cref{lem:packing}. Consequently, the number of edges in $E_{i,j}(H_2)$ that cross $e$ is at most 
\[
\frac{\area(e+(M_j-M_j))}{\Theta(\eps \varphi^{2j})} 
=O\left( \frac{\varphi^j\, |e|}{\eps\, \varphi^{2j}}\right)
=O\left(\frac{|e|}{\eps\, \varphi^j}\right).
\]

Since $\width(R_e)=\Theta(\eps\, |e|)$, our assumption $\width(R_e)\leq 2\cdot \varphi^{j-1}$ gives $|e|\leq O(\varphi^j/\eps)$. So the number of edges that cross $e$ is $O(|e|/(\eps\, \varphi^j)) \leq  O\big((\varphi^j/\eps)/(\eps/ \varphi^j)\big)= O(1/\eps^2)$. 
\end{proof}

\begin{lemma} \label{lem:short}
Let $ab\in E(H_2)$, $i\in \{1,\ldots ,3k\}$, and $j\in \mathbb{Z}$.
If $2\cdot \varphi^j< \width(R_e)$, then $e$ crosses  $O(1/\eps^2)$ edges in $E_{i,j}(H_2)$. 
\end{lemma}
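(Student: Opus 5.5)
Here $e=ab$ is the fixed edge. The edges of $E_{i,j}(H_2)$ are now short compared with the width of the empty region $R_e$: their length is $O(\varphi^j)$, while $\width(R_e)>2\varphi^j$. The packing argument of \Cref{lem:long} would charge area proportional to $|e|$, which is too much. Instead, the plan is to use the empty region property (\Cref{lem:emptyregion}) to confine short crossing edges to a small neighborhood of the two ``ends'' of $R_e$. Then I charge only that neighborhood, whose area is independent of $|e|$.

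\textbf{Step 1: short crossing edges must hit a $3k$-gon near its tip.} Take $s'p'\in E_{i,j}(H_2)$ crossing $e$. As in the proof of \Cref{lem:ratio}, the crossing point lies in $\mathrm{int}(R_e)$, and $s',p'\notin\mathrm{int}(R_e)$, so $s'p'$ enters and leaves $R_e$. Since $|s'p'|\le(1+\eps)\varphi^j<\width(R_e)$, the edge $s'p'$ cannot meet both outer common tangents of $D_a,D_b$. It also cannot meet both $D_a$ and $D_b$, whose distance is $\Omega(|e|)\ge \width(R_e)$. Hence $s'p'$ enters and leaves through the boundary arc of $D_a$ (say) on one side of $a$. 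Because it must separate $a$ from the rest of $R_e$ in order to cross $e$ near $a$, the crossing point lies within distance $O(\varphi^j)$ of $a$ (or of $b$). So every such edge meets the disk $Q_a$ of radius $O(\varphi^j)$ around $a$, or the analogous disk $Q_b$ around $b$.

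\textbf{Step 2: packing near $a$.} By \Cref{lem:boundingbox}, each crossing edge satisfies $B(sp)\subset M_j(s'p')$, and the latter contains a point of $Q_a$. The argument of \Cref{lem:minkowski} then places $B(sp)$ in $Q_a+(M_j-M_j)$. But that region has area $\Theta(\varphi^{2j})$ while $\area(B(sp))=\Theta(\eps\varphi^{2j})$, which yields only $O(1/\eps)$ edges per end. Since the target is $O(1/\eps^2)$, this crude count already suffices once Step 1 is justified. Disjointness of the $B(sp)$ comes from \Cref{lem:packing}, and summing over the two ends $a,b$ gives the bound.

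\textbf{Main obstacle.} The hard part is Step 1: showing that a segment of length $O(\varphi^j)$, much smaller than $\eps|e|$, can cross $e$ only within distance $O(\varphi^j)$ of $a$ or $b$. Near the endpoint, the boundary of $D_a$ makes an interior angle close to $\pi$ at the vertices. So a short chord of $R_e$ crossing $e$ must stay close to $\partial R_e$, and since $e$ is far from $\partial R_e$ except near $a$ and $b$, it must stay close to those endpoints. I would first try to quantify this via a distance estimate. The distance from a point $x\in e$ to $\partial R_e$ grows at least linearly, with slope $\Omega(\eps)$ coming from the polygon angles, as $x$ moves away from $a$. Because $e$ ends on $\partial D_a$, the distance grows at least like $\Omega(\eps)\cdot|xa|$ until it saturates at $\Omega(\width(R_e))$. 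A chord of length $L$ can reach only points at distance $\le L$ from $\partial R_e$. This gives the bound $|xa|=O(L/\eps)=O(\varphi^j/\eps)$. With that weaker localization, the charged region becomes a $O(\varphi^j/\eps)\times O(\varphi^j)$ neighborhood, and packing boxes of area $\Theta(\eps\varphi^{2j})$ gives $O(1/\eps^2)$ edges, exactly the claimed bound. So I would settle for this linear-growth estimate, which is the genuinely geometric step to verify carefully.
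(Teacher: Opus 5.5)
Your Step~2 is fine given Step~1, but Step~1 is not proved, and the route you propose for it fails. The sentence you offer (the edge ``must separate $a$ from the rest of $R_e$ in order to cross $e$ near $a$'') presupposes the conclusion. The fallback estimate $\mathrm{dist}(x,\partial R_e)\geq\Omega(\eps)\,|xa|$ (up to saturation) is false. Its slope is governed by the angle between $e$ and $\partial R_e$ at $a$, not by the interior angles of the $3k$-gon, and $e$ can run arbitrarily close to an outer common tangent of $D_a,D_b$ along its whole length. For instance, in Case~1 of \Cref{lem:cases}, one boundary line of the tube is the line through a boundary ray of the cone, which contains $s$. $D_s$ slides along this line and may stop with $s'=s$ on it, while $p'=p$ may lie near the corner of $\Delta(sp)$ at any small distance $\delta$ from that line. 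Then the midpoint of $e$ is within distance $\delta$ of $\partial R_e$, so the criterion ``a chord of length $L$ only reaches points within distance $L$ of $\partial R_e$'' excludes nothing. Your description of the chord as entering and leaving through the arc of $D_a$ also omits chords that run from a tangent segment to that arc.

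What localizes the crossing is where the chord's endpoints lie. The points $a$ and $b$ split $\partial R_e$ into two arcs, each containing an outer tangent segment, and $s'p'$ must meet both arcs. Being shorter than $\width(R_e)$, it cannot meet both tangent segments, so it meets $D_a$ (say) at a point $r$, whence $|aq|\le|ar|+|rq|\le 2\diam(D_a)=O(\eps|e|)$. This is the paper's argument. It confines $B(sp)$ to $e_a+(M_j-M_j)$ with $|e_a|=O(\eps|e|)$, giving $O(|e|/\varphi^j)$ crossings. The ingredient your plan never uses is \Cref{lem:ratio}: $|s'p'|=\Omega(\eps^2|e|)$, so $|e|=O(\varphi^j/\eps^2)$ and the count is $O(1/\eps^2)$.

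Alternatively, your sharper claim $|aq|=O(\varphi^j)$ does hold, but it needs a real cap argument. The piece of $R_e$ cut off on $a$'s side by the chord is bounded by the chord and a boundary path. That path turns by less than $\pi/2$, because $|s'p'|<\frac{1+\eps}{2}\width(R_e)$ is at most about the circumradius of $D_a$. So the piece lies in a triangle on the chord with base angles summing to less than $\pi/2$, and its diameter is at most $|s'p'|$. With that, your Step~2 would give even $O(1/\eps)$ crossings per class, without \Cref{lem:ratio}.
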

\begin{proof}
Assume that $e=ab$, and recall that $R_e$ is the convex hull of two congruent regular $3k$-gons, say $D_a$ and $D_b$, such that $a\in \partial D_a$ and $b\in \partial D_b$. By construction, we have $\width(R_e)=\diam(D_a) = \Theta(\eps\, |e|)$. If $2\cdot \varphi^j< \width(R_e)$, then $|s'p'|<\width(R_e)$, so $s'p'$ cannot cross both common outer tangent lines of $D_a$ and $D_b$. Consequently, $s'p'$ intersects $D_a$ and $D_b$.
Assume w.l.o.g.\ that $s'p'$ intersects $D_a$. Let $q=ab\cap s'p'$ and $r\in D_a\cap s'p'$. Since $a\in \partial D_a$, then $|ar|\leq \diam(D_a)$; and $|qr|\leq |s'p'|<\diam(D_a)$. The triangle inequality yields $|aq|\leq |ar|+|rq|\leq 2\, \diam(D_a)$. This means that $s'p'$ crosses $e$ in a subsegment of length at most $\diam(D_a)$ incident to its endpoint: Denoting these subsegments by $e_a,e_b\subset e$, we have $B(sp)\subset e_a+(M_j-M_j)$ or $B(sp)\subset e_b+(M_j-M_j)$. 

Since $e_a$ and $e_b$ are congruent, then $\area(e_b+(M_j-M_j))=\area(e_a+(M_j-M_j) = 
 \frac{|e_a|}{\width(M_j-M_j)}\cdot \area(M_j-M_j) = O(\eps \, \varphi^j\, |e|)$.
A volume argument analogous to the proof of \Cref{lem:long} now shows that the number of edges in $E_{i,j}(H_2)$ that cross $e$ is at most 
\[
\frac{2\, \area(e_a+(M_j-M_j))}{\Theta(\eps \varphi^{2j}) }
=O\left( \frac{\eps\, \varphi^j\, |e|}{\eps\, \varphi^{2j}}\right)
=O\left(\frac{|e|}{\varphi^j}\right).
\]

By \Cref{lem:ratio}, we have $|s'p'|\geq \Omega(\eps^2 \, |e|)$, hence 
$|e|\leq O(\varphi^j/\eps^2)$. So the number of crossing edges is bounded by $O(|e|/ \varphi^j) \leq  O\big((\varphi^j/\eps^2)/\varphi^j\big)= O(1/\eps^2)$, as required. 
\end{proof}

\begin{lemma}\label{lem:crossings}
    Every edge in $E(H_2)$ has $O(\eps^{-3}\log \eps^{-1})$ crossings. 
\end{lemma}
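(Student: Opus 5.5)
We prove \Cref{lem:crossings} by summing the per-class bounds of \Cref{lem:long} and \Cref{lem:short} over the classes $E_{i,j}(H_2)$ that can contain an edge crossing a fixed edge $e\in E(H_2)$. By \Cref{lem:long} and \Cref{lem:short}, for every pair $(i,j)$ the edge $e$ crosses $O(1/\eps^2)$ edges of $E_{i,j}(H_2)$, whatever the relation between $\varphi^j$ and $\width(R_e)$. There are $3k=\Theta(1/\eps)$ values of $i$, since $\alpha=2\pi/k=\Theta(\eps)$. It therefore suffices to show that, for fixed $i$, only $O(\log\frac1\eps)$ values of $j$ give a class containing an edge that crosses $e$.

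To restrict $j$, take $s'p'\in E_{i,j}(H_2)$ crossing $e$, with $\Theta$-graph edge $sp$ satisfying $\varphi^{j-1}\le |sp|<\varphi^j$. First compare $|s'p'|$ with $|sp|$. Corollary~\ref{cor:spprime} gives $|s'p'|\ge \frac{23}{40}|sp|$. \Cref{lem:xycoords}(1--2) and the triangle inequality give $|s'p'|\le 2|sp|$. Hence $|s'p'|=\Theta(\varphi^j)$. Since $s'p'$ and $e$ are both edges of $H_1$ and they cross, \Cref{lem:ratio} applies in both orders and gives
\[
\Omega(\eps^2)\,|e|\;\le\; |s'p'|\;\le\; O(\eps^{-2})\,|e| .
\]
Thus $\varphi^j\in[c_1\eps^2|e|,\,c_2\eps^{-2}|e|]$ for absolute constants $c_1,c_2>0$. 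Because $\varphi=\frac54$ is a fixed constant, this interval contains $\log_\varphi\!\big(c_2/(c_1\eps^4)\big)+O(1)=O(\log\frac1\eps)$ integers $j$.

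Multiplying the three factors gives the total number of crossings on $e$:
\[
3k\cdot O(\log\tfrac1\eps)\cdot O(\eps^{-2}) \;=\; O(\eps^{-3}\log\eps^{-1}).
\]
The main point to check is that \Cref{lem:ratio} holds for edges of $H_2$, and it does because $H_2\subseteq H_1$.
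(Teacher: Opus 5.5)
Your proof is correct and follows essentially the same route as the paper: an $O(\eps^{-2})$ bound per class $E_{i,j}(H_2)$ from \Cref{lem:long,lem:short}, then \Cref{lem:ratio} (valid on $H_2\subseteq H_1$) to confine the relevant scales $j$ to $O(\log\frac{1}{\eps})$ values, and finally a sum over the $3k=\Theta(1/\eps)$ cone indices. Your explicit estimate $\frac{23}{40}|sp|\le |s'p'|\le 2|sp|$ via \Cref{cor:spprime} and \Cref{lem:xycoords} is a useful addition, since the paper passes from the length bound on the crossing edge $f=s'p'$ to a bound on $\varphi^j$ without comment.
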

\begin{proof}
    Let $e\in E(H_2)$. Denote by $J$ the set of indices $j\in \mathbb{Z}$ such that some edge $f\in E_{i,j}(H_2)$ crosses $e$. \Cref{lem:ratio} yields $\Omega(\eps^2\, |e|)\leq |f|\leq O(|e|/\eps^2)$.  In particular, if $f\in E_{i,j}(H_2)$, then $\Omega(\eps^2\, |e|)\leq \varphi^j \leq O(|e|/\eps^2)$. Consequently, $\max J-\min J \leq  O(\log \eps^{-4}) =O(\log \frac{1}{\eps})$, hence $|J|=O(\log \frac{1}{\eps})$.
    By \Cref{lem:long,lem:short}, edge $e$ crosses $O(\eps^{-2})$ edges in $E_{i,j}(H_2)$ for every $i,j\in \mathbb{Z}$. Summation over all $i=1,\ldots ,3k = O(1/\eps)$ and all $j\in J$, implies that the total number of edges that cross $e$ is $O(\eps^{-3}\log \eps^{-1})$, as claimed. 
\end{proof}

\begin{corollary}\label{cor:crossings}
   The $(1+\eps)$-spanner $H_2$ has $O(\eps^{-4}\log \eps^{-1}\cdot n)$ crossings, where $n=|S|$. 
\end{corollary}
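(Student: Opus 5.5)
The plan is to obtain \Cref{cor:crossings} by double counting from the per-edge bound in \Cref{lem:crossings}, together with an upper bound on the number of edges of $H_2$. First I bound $|E(H_2)|$. Every edge of $H_2$ is the image $s'p'$ of an edge $sp\in E^*_{i,j}$ for some $i,j$. This edge $sp$ belongs to one of the three $\Theta$-graphs underlying $H_0,H_0',H_0''$. Each such $\Theta$-graph has at most one edge per cone per apex, hence at most $kn$ edges. So
\[
|E(H_2)|\le |E(H_1)|\le 3kn .
\]
We have $k=2\pi/\alpha$ with $\alpha=\Theta(\eps)$; more precisely, $\alpha\le \eps/16$ is chosen as $\Theta(\eps)$, subject to $k=4m+2$. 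Hence $|E(H_2)|=O(n/\eps)$. Several $\Theta$-edges may map to the same segment $s'p'$, but that only decreases the count, so this causes no problem.

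Next, each crossing in $H_2$ is a pair of crossing edges $\{e,f\}$. Counting crossings edge by edge counts each crossing exactly twice. Therefore the total number of crossings is
\[
\frac12\sum_{e\in E(H_2)} \mathrm{cr}(e)\le \frac12\,|E(H_2)|\cdot \max_e \mathrm{cr}(e).
\]
By \Cref{lem:crossings}, $\mathrm{cr}(e)=O(\eps^{-3}\log\eps^{-1})$ for every $e\in E(H_2)$. Multiplying by $|E(H_2)|=O(n/\eps)$ gives $O(\eps^{-4}\log\eps^{-1}\cdot n)$.

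There is essentially no obstacle here; the only point requiring care is the edge count $O(n/\eps)$. That count follows directly from the $\Theta$-graph construction, where each point has at most one outgoing edge per cone, and the overlay contributes only a factor of $3$. Pruning only removes edges, so it cannot increase the count.
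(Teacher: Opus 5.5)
Your proposal is correct and follows the same approach as the paper: bound $|E(H_2)|\le 3kn=O(n/\eps)$ via the three underlying $\Theta$-graphs, then sum the per-edge bound $O(\eps^{-3}\log\eps^{-1})$ from \Cref{lem:crossings} over all edges. Your factor $\frac12$ for double counting and your remark about several $\Theta$-edges mapping to the same segment are harmless refinements the paper leaves implicit.
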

\begin{proof}
    A $\Theta$-graph with apex angle $\alpha =\Theta(\eps)$ has $O(n/\eps)$ edges. We have constructed $H_2$ such that each edge in $E(H_2)$ corresponds to an edge of three such $\Theta$-graphs. Consequently, $|E(H_2)|=O(n/\eps)$. Each edge $e\in E(H_2)$ has $O(\eps^{-3}\log \frac{1}{\eps})$ crossings by \Cref{lem:crossings}. Summation over all edges $e\in E(H_2)$ shows that the total number of crossings is $O(\eps^{-4}\log \frac{1}{\eps}\cdot n)$.
\end{proof}

\section{Outlook}
\label{sec:con}

We have proved existential upper and lower bounds on (i) the total number of crossings, (ii) the maximum number of crossings per edge, and (iii) the maximum ratio between the lengths of crossings edges in a $(1+\eps)$-spanner for $n$ points in the plane. 
Closing the gaps between these upper and lower bounds is an obvious open problem. 
The corresponding optimization problems remain open, as well. 
It is NP-hard to decide whether a point set $S\subset \mathbb{R}^2$ admits a noncrossing $t$-spanner with at most $K$ edges for given $S,K,t$~\cite{GiannopoulosKKKM10} (see also~\cite{CarmiC13}).
Is it also NP-hard to decide whether a noncrossing $t$-spanner exists for a given  $S\subset \mathbb{R}^2$ and $t>1$ \cite[Problem~5]{BoseS13}? Approximation algorithms for geometric $t$-spanners with the minimum number of crossings (or min-max number of crossings per edge)
are left for future research.  

\bibliography{spanners}

\end{document}